\definecolor{mauve}{rgb}{0.58,0,0.82}
\newcommand{\supp}{\mathrm{supp}\,}
\newcommand{\of}{\overline{f}}
\def\d{\;\mathrm{d}}
\let\Re\relax
\DeclareMathOperator{\Re}{Re}
\def\R{\mathbb R}
\def\d{\mathrm d}
\let\phi\varphi
\let\epsilon\varepsilon
\newtheorem{thm}{Theorem}[section]
\newtheorem{lem}[thm]{Lemma}
\newtheorem{cor}[thm]{Corollary}
\theoremstyle{definition}
\newtheorem{df}[thm]{Definition}
\newtheorem*{rem}{Remark}
\numberwithin{equation}{section}
\date{}
\address{Simon Barth, Institut f\"ur Analysis, Dynamik und Modellierung, Universit\"at Stuttgart, Pfaffenwaldring 57, D-70569 Stuttgart, Germany}
\email{simon.barth@mathematik.uni-stuttgart.de}
\address{Andreas Bitter, Institut f\"ur Analysis, Dynamik und Modellierung, Universit\"at Stuttgart, Pfaffenwaldring 57, D-70569 Stuttgart, Germany}
\email{andreas.bitter@mathematik.uni-stuttgart.de}
\address{Semjon Wugalter, Institute for Analysis, Karlsruhe Institute of Technology (KIT), Englerstrasse 2, 76131 Karlsruhe, Germany}
\email{semjon.wugalter@kit.edu}
\author{Simon Barth, Andreas Bitter and Semjon Vugalter}
\date{}
\title[]{\ \textbf{Decay properties of zero-energy resonances 
\\of multi-particle Schr\"odinger operators 
\\and why the Efimov effect does not 
\\exist for systems of $N\geq 4$ particles} \ \\ }
\begin{document}
\maketitle 
\begin{abstract} 
We consider $N$-body Schr\"odinger operators with a virtual level at the threshold of the essential spectrum. We show that in the case of $N\geq 3$ particles in dimension $n\geq3$ virtual levels correspond to eigenvalues of the system and we obtain decay rates of the corresponding eigenfunctions in dependence on the dimension and the number of particles. We prove that in dimension $n\geq 3$ the Hamiltonian of $N\geq 4$ particles interacting via short-range potentials admits only a finite number of negative eigenvalues. We extend our results to dimension $n=1$ and $n=2$ in case of $N\geq 4$ fermions.
\end{abstract}
\pagestyle{headings}
\renewcommand{\sectionmark}[1]{\markboth{#1}{}}
\section{Introduction}
A remarkable physical phenomenon in three-body quantum systems is the so-called Efimov effect, which was first discovered by the physicist V. Efimov in 1970 \cite{Efimov}. It reads as follows: The three-body Schr\"odinger operator of three-dimensional particles interacting via short-range potentials has an infinite number of negative eigenvalues if every two-body subsystem has non-negative spectrum and at least two of them have a resonance at zero. As it was predicted by V. Efimov these three-body bound states should have very unusual properties. In particular, they accumulate logarithmically at zero with accumulation rate depending on the masses of the particles but not on the shapes of the potentials.

It became an outstanding challenge to understand this phenomenon, both from the physical and the mathematical point of view. The first mathematical proof of the Efimov effect was given by D. R. Yafaev  in \cite{Jafaev}, where he studied a symmetrized form of the Faddeev equations for the eigenvalues of the three-particle Schr\"odinger operator together with the low-energy behaviour of the resolvents of the two-body Hamiltonians. This proof constituted a major step forward in the understanding of this problem. Later he also proved that such an effect cannot occur if at least two of the two-body Hamiltonians do not have any resonances \cite{Jafaev2}. By the middle of 1990's a large number of physical and mathematical results were obtained on this topic, e.g. \cite{Sobolev, Tamura, Tamura2, Sigal, Liberating, Sem4, Semjon2, Semjon1, mag, let96}. 

A new wave of interest for the Efimov effect came at the beginning of the $21^{\mathrm{st}}$ century with the experimental discovery of this effect in an ultracold gas of caesium atoms \cite{Grimm} (for a detailed review of experimental works see \cite{Physarticle}). In 2013 the physicists Y. Nishida, S. Moroz and D. T. Son discovered the so-called super Efimov effect \cite{superefimov}, which states that in the case of three spinless fermions in dimension two the system has infinitely many negative bound states, provided every two-body subsystem admits a $p$-wave resonance at zero. Later this was proved by D. K. Gridnev \cite{Gridnev1}, applying techniques similar to \cite{Jafaev} and \cite{Sobolev}.

It is a fundamental question to ask whether the Efimov effect can be extended to multi-particle systems with more than three particles. In \cite{Liberating} Y. Nishida and S. Tan predicted that universal effects similar to the Efimov effect can be found in several types of $N$-particle systems with $N\geq 4$ in different dimensions. In 2017, Y. Nishida also predicted that a similar effect is possible in case of four two-dimensional bosons \cite{semisuper}. Here, the three-body resonances should lead to the infiniteness of the discrete spectrum of the four-body Hamiltonian. On the other hand, already in 1973 the physicists R. D. Amado and F. C. Greenwood \cite{Amado} claimed that in the case of $N\geq 4$ bosons in dimension three the Efimov effect cannot emerge if only $(N-1)$-particle subsystems have resonances. The justification of this statement in \cite{Amado} used several assumptions, which are difficult to verify. 

It was known since the 1980's that decay of solutions of the Schr\"odinger equation corresponding to virtual levels plays a crucial role for the existence of the Efimov effect. The fact that zero-energy eigenfunctions of the subsystems do not produce the Efimov effect was first proved by G. Zhislin and one of the authors of this paper in \cite{Semjon2}, where three-particle systems with two-body virtual levels were studied on spaces of states with fixed symmetries. Due to symmetry restrictions two-particle virtual levels in \cite{Semjon2} are eigenfunctions and not resonances. 

For one-particle Schr\"odinger operators with short-range potentials in dimension three solutions of the equation corresponding to virtual levels decay as $|x|^{-1}$ \cite{Jaff}, i.e. with the same decay rate as the fundamental solution of the Laplace operator in this dimension. For a subsystem with $N\geq 3$ three-dimensional particles the dimension of the corresponding space of relative motion of the particles is $3\cdot(N-1)$. The fundamental solution of the Laplace operator in this space decays as $|x|^{-(3N-5)}$, which is sufficient for a virtual level to be an eigenvalue and not a resonance for any $N\geq 3$. Due to this heuristic argument combined with \cite{Semjon2} it was always expected that $N-$particle virtual levels with $N\geq3$ in dimensions $n\geq 3$ can not produce the Efimov effect. However, to implement this argument is a very hard problem, because the sums of the potentials do not decay in all directions. Even if each of the potentials is compactly supported as a function of the distance between the particles, the sum of the potentials can not be neglected at infinity. 

The first proof that $N-$particle virtual levels for $N\geq3$ are eigenvalues of the Schr\"odinger operator was given in 2012 by D. K. Gridnev in \cite{Gridnev3} and \cite{Gridnev4}. Firstly, it was proved for $N=3$ \cite{Gridnev3}, assuming that the pair interactions $V_{ij}$ are non-positive. Later, this result was generalized to the case of $N\geq 4$ particles and it was allowed the potentials $V_{ij}$ to change signs \cite{Gridnev4}. However, some strong restrictions on the potentials, such as $V_{ij}\in L^1(\mathbb{R}^3) \cap L^3(\mathbb{R}^3)$ are required in \cite{Gridnev4} also. The method of the proof in \cite{Gridnev3, Gridnev4} is based on the analysis of the integral equation for the solution of the Schr\"odinger equation, corresponding to the virtual levels. In \cite{Gridnev2} the results of \cite{Gridnev3, Gridnev4} were applied to prove the absence of the Efimov effect in $N-$particle systems with $N\geq 4$.

In the work at hand we present a different and very transparent approach to the study of decay properties of zero-energy resonances and eigenfunctions of multi-particle Schr\"odinger operators at the edge of the essential spectrum. This approach is a further development of the Agmon's method of proving the exponential decay of eigenfunctions \cite{Ag}. It allows us to obtain estimates on the decay rates of resonances and eigenfunctions at zero energy, which in many cases are close to the optimal ones. In particular, we establish connections between the rate of decay of a virtual level at zero and Hardy's constant in the corresponding space. Since our method is purely variational it allows us to work with very weak restrictions on the potentials. In addition, as it is usual for variational methods for multi-particle Schr\"odinger operators, our approach allows us to work on subspaces with fixed permutational symmetry. Combining our results on the decay of virtual levels with the ideas of \cite{Semjon2} we give a purely variational proof of the absence of the Efimov effect for $N\geq 4$ particles in all dimensions $n\geq3$. We extend this result to systems of $N\geq 4$ identical fermions on the subspace of antisymmetric functions in dimension $n=1$ and $n=2$.

The paper is organized as follows. In Section $2$, we introduce our notations and give sufficient conditions for the existence of solutions in the space $\dot{H}^1(\mathbb{R}^d), \ d\geq 3$ of the equation
\begin{equation*}
(-\Delta+V(x))\psi=0,\quad  x\in \mathbb{R}^d,
\end{equation*}
without assuming that the potential $V(x)$ decays as $|x|\rightarrow \infty$. We then prove estimates on the rate of decay of such solutions. The conditions on the potential $V(x)$ are chosen in such a way that this result can be applied to multi-particle systems. In this case $V(x)$ will be the sum of pair interactions and $d=n(N-1)$ will be the dimension of the configuration space of a system of $N$ $n-$dimensional particles. In Section $3$, we extend this result to Schr\"odinger operators considered on subspaces of states with fixed symmetries. Section $4$ is devoted to the applications of the results obtained in Section $1$ and Section $2$. In particular, in this section we prove that for $N\geq 3$ in dimension $n\geq 3$ the virtual level is an eigenfunction. We give estimates on the rate of decay of these eigenfunctions in dependence on the number of particles and the corresponding dimension. In Section $5$ we prove the absence of the Efimov effect for $N\geq4$ particles in dimension $n\geq3$. In Section $6$ we extend the results of Section $4$ and Section $5$ to the case of $N\geq 4$ one- and two-dimensional fermions. In the Appendix we prove several technical results. Some of these results were known before and are given for the convenience of the reader only.
\section{Decay properties of zero-energy solutions of the Schr\"odinger equation}
In the following we consider the Schr\"odinger operator
\begin{equation}
h=-\Delta+V 
\end{equation}
in $L^2(\mathbb{R}^d)$, where $d\geq 3$. We assume that the potential $V$ is relatively form-bounded with relative bound zero, i.e. for every $\varepsilon >0$ there exists a constant $C(\varepsilon)>0$, such that
\begin{equation}\label{eq: A1}
\langle |V|\psi,\psi\rangle \leq \varepsilon \Vert \nabla \psi \Vert^2 + C(\varepsilon)\Vert\psi\Vert^2
\end{equation}
holds for any function $\psi \in C_0^\infty(\mathbb{R}^d)$. According to the KLMN-Theorem (see \cite{reed}, p.167) assumption \eqref{eq: A1} implies that $h$ is a self-adjoint operator in $L^2(\mathbb{R}^d)$, corresponding to the quadratic form
\begin{equation}
L[\varphi]=\Vert \nabla \varphi \Vert^2+\langle V\varphi,\varphi\rangle
\end{equation}
with form domain $H^1(\mathbb{R}^d)$. For any $\varepsilon \in (0,1)$ we denote
\begin{equation}
h_{\varepsilon}=h+\varepsilon \Delta.
\end{equation}
Let $\dot{H}^1(\mathbb{R}^d)$ be the closure of $C_0^\infty(\mathbb{R}^d)$ with respect to the gradient-norm
\begin{equation}\label{semi}
\left(\int_{\mathbb{R}^d}|\nabla \varphi|^2\, \mathrm{d}x\right)^{\frac{1}{2}}.
\end{equation}
For any self-adjoint operator $A$ we denote by $\mathcal{S}(A)$, $\mathcal{S}_{\mathrm{ess}}(A)$ and $\mathcal{S}_{\mathrm{disc}}(A)$ the spectrum, the essential spectrum and the discrete spectrum of $A$, respectively. The main result of this section is the following 
\begin{thm}\label{thm: Theorem 1}
Suppose that $V$ satisfies \eqref{eq: A1}. Further, assume that
\begin{equation}\label{1: virtual level}
h\geq 0 \qquad \quad \text{and} \qquad \quad \inf \mathcal{S}\left( h_\varepsilon \right)<0
\end{equation}
holds for any $\varepsilon \in (0,1)$. If there exist constants $\alpha_0>0$, $b>0$ and $\gamma_0\in(0,1)$, such that for any function $\psi\in H^1(\mathbb{R}^d)$ with  $\supp \psi \subset \{x\in \mathbb{R}^d:\ |x|\geq b\}$ we have
\begin{equation}\label{1: assumption for agmon}
\langle h\psi,\psi \rangle -\gamma_0 \Vert \nabla \psi\Vert^2- \langle \alpha_0^2|x|^{-2}\psi,\psi \rangle \geq 0,
\end{equation}
then the following assertions hold:
\begin{enumerate}
\item[\textbf{(i)}] If $\alpha_0>1$, then zero is a simple eigenvalue of $h$ and the corresponding eigenfunction $\varphi_0$ satisfies
\begin{equation}
\nabla \left(|x|^{\alpha_0}\varphi_0\right) \in L^2(\mathbb{R}^d) \qquad \text{and} \qquad (1+|x|)^{\alpha_0-1}\varphi_0 \in L^2(\mathbb{R}^d).
\end{equation}
Moreover, there exists a constant $\delta_0>0$, such that for any function $\psi \in H^1(\mathbb{R}^d)$ with $\langle \nabla\psi,\nabla \varphi_0\rangle =0$ it holds
\begin{equation}\label{2.8}
\langle h \psi,\psi \rangle \geq \delta_0 \Vert \nabla \psi \Vert^2.
\end{equation}
\item[\textbf{(ii)}] If $\alpha_0 \in (0,1)$ and in addition
\begin{equation}\label{1: hardyassumption}
\langle |V|\psi,\psi \rangle \leq C\Vert \nabla \psi\Vert^2
\end{equation}
holds for any function $\psi \in \dot{H}^1(\mathbb{R}^d)$ and some constant $C>0$, then there exists a non-zero function $\varphi_1 \in \dot{H}^1(\mathbb{R}^d)$ satisfying
\begin{equation}\label{1111}
\Vert \nabla \varphi_1 \Vert^2+ \langle V\varphi_1,\varphi_1 \rangle = 0.
\end{equation}
Moreover, it holds
\begin{equation}
\nabla \left(|x|^{\alpha_0}\varphi_1\right) \in L^2(\mathbb{R}^d) \qquad \text{and} \qquad (1+|x|)^{\alpha_0-1}\varphi_1 \in L^2(\mathbb{R}^d).
\end{equation}
If we assume that for some $C>0$
\begin{equation}\label{1: b. simon}
\Vert V \psi \Vert^2 \leq C\left( \Vert \nabla \psi \Vert^2 + \Vert \psi \Vert^2 \right)
\end{equation}
holds for every function $\psi\in C_0^\infty(\mathbb{R}^d)$, then the solution $\varphi_1 \in \dot{H}^1(\mathbb{R}^d)$ of \eqref{1111} is unique. Moreover, there exists a constant $\delta_1>0$, such that for any function $\psi \in \dot{H}^1(\mathbb{R}^d)$ with $\langle \nabla\psi,\nabla \varphi_1\rangle =0$ it holds 
\begin{equation}\label{2.12}
\langle h \psi,\psi \rangle \geq \delta_1 \Vert \nabla \psi \Vert^2.
\end{equation}
\item[\textbf{(iii)}] If instead of \eqref{1: assumption for agmon} a stronger inequality
\begin{equation} 
\langle h\psi,\psi\rangle - \gamma_0\Vert \nabla \psi \Vert^2 -\langle \alpha_0^2|x|^{-\beta} \psi,\psi \rangle \geq 0
\end{equation}
holds for some constants $\alpha_0,\gamma_0>0$ and $\beta\in(0,2)$, then the function $\varphi_0$ in part \textbf{(i)} of the theorem satisfies
\begin{equation}
\exp\left( \alpha_0\kappa^{-1}|x|^{\kappa} \right)\varphi_0 \in L^2(\mathbb{R}^d), \qquad \text{where}\quad \kappa=1-\frac{\beta}{2}.
\end{equation}
\end{enumerate}
\end{thm}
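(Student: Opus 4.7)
The strategy is an Agmon-type variational argument adapted to the zero-energy (Hardy) scaling. The zero-energy solution will be realized as a weak limit of ground states $\varphi_\varepsilon$ of the slightly regularised operators $h_\varepsilon$, which exist by the virtual-level assumption \eqref{1: virtual level}, and hypothesis \eqref{1: assumption for agmon} supplies the quantitative weighted control needed to make this limit non-trivial and to establish the claimed decay.

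For each small $\varepsilon\in(0,\gamma_0)$, the Hardy-type bound \eqref{1: assumption for agmon} gives $h_\varepsilon\geq(\gamma_0-\varepsilon)(-\Delta)+\alpha_0^2|x|^{-2}\geq 0$ on functions supported in $\{|x|\geq b\}$, so a standard partition-of-unity argument yields $\mathcal{S}_{\mathrm{ess}}(h_\varepsilon)\subset[0,\infty)$, and $E_\varepsilon:=\inf\mathcal{S}(h_\varepsilon)<0$ is therefore a discrete eigenvalue with a positive simple ground state $\varphi_\varepsilon$. The heart of the proof is the classical Agmon/IMS identity: for a real weight $f$ and a cut-off $\chi_b$ supported in $\{|x|\geq b\}$, with $g:=f\chi_b$,
\[
\langle g\varphi_\varepsilon,h_\varepsilon(g\varphi_\varepsilon)\rangle=E_\varepsilon\Vert g\varphi_\varepsilon\Vert^2+(1-\varepsilon)\int|\nabla g|^2|\varphi_\varepsilon|^2.
\]
Applying \eqref{1: assumption for agmon} to $g\varphi_\varepsilon$ on the left and discarding $E_\varepsilon\Vert g\varphi_\varepsilon\Vert^2\leq 0$ on the right, the leading contribution $(1-\varepsilon)|\nabla f|^2$ is absorbed by $\alpha_0^2 f^2/|x|^2$ whenever $|\nabla f|^2\leq\alpha^2 f^2/|x|^2$ with $\alpha<\alpha_0$. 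Choosing $f(x)=(1+|x|^2)^{\alpha/2}$ and letting $\alpha\uparrow\alpha_0$ (together with the cut-off limit) then produces uniform-in-$\varepsilon$ bounds on $\Vert\nabla(|x|^{\alpha_0}\varphi_\varepsilon)\Vert$ and $\Vert(1+|x|)^{\alpha_0-1}\varphi_\varepsilon\Vert$, controlled by $\Vert\varphi_\varepsilon\Vert_{L^2(\{|x|\leq 2b\})}$.

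Passage to the limit $\varepsilon\to 0$ is carried out by weak compactness. In case \textbf{(i)} ($\alpha_0>1$), the weight $(1+|x|)^{\alpha_0-1}$ is bounded below by $1$ at infinity, so the bound forces $\varphi_\varepsilon\in L^2$ at infinity; normalizing $\Vert\varphi_\varepsilon\Vert_{L^2(\{|x|\leq b\})}=1$ and extracting a weak $H^1$ limit, local Rellich compactness preserves the normalization and yields a non-trivial $\varphi_0\in H^1$ with $h\varphi_0=0$. In case \textbf{(ii)} ($\alpha_0\in(0,1)$), one instead normalizes $\Vert\nabla\varphi_\varepsilon\Vert=1$ and extracts a weak $\dot{H}^1$ limit $\varphi_1$; here the stronger form-boundedness \eqref{1: hardyassumption} ensures that the quadratic form is well-defined on $\dot{H}^1(\mathbb{R}^d)$ and that passage to the limit produces \eqref{1111}. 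Decay bounds transfer via weak lower semicontinuity. Simplicity in \textbf{(i)} comes from the Perron--Frobenius positivity of the $\varphi_\varepsilon$ surviving in the limit, and the spectral gap \eqref{2.8} is obtained by a standard contradiction argument: a sequence orthogonal to $\varphi_0$ with vanishing Rayleigh quotient would, by the same Agmon machinery, produce a second independent null element. The additional hypothesis \eqref{1: b. simon} in \textbf{(ii)} puts $\varphi_1\in H^2_{\mathrm{loc}}$, so the analogous argument delivers uniqueness and \eqref{2.12}.

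For part \textbf{(iii)}, the same identity is applied with $f(x)=\exp(\alpha\kappa^{-1}|x|^\kappa)$, $\kappa=1-\beta/2$, for which $|\nabla f|^2=\alpha^2|x|^{-\beta}f^2$ exactly matches the form absorbed by the strengthened hypothesis; letting $\alpha\uparrow\alpha_0$ produces the stretched-exponential integrability. The principal technical obstacle throughout is the non-triviality of the weak limit: the weighted Agmon bound controls the tail of $\varphi_\varepsilon$ at infinity, but one must also ensure that the chosen normalization survives the weak limit, and the correct choice depends delicately on whether $\varphi_\varepsilon$ is expected to be in $L^2$ ($\alpha_0>1$, genuine eigenfunction) or only in $\dot{H}^1\setminus L^2$ ($\alpha_0<1$, resonance)---which is precisely what forces the two different ambient function spaces in cases \textbf{(i)} and \textbf{(ii)}.
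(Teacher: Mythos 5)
Your proposal follows essentially the same route as the paper: ground states of the regularized operators $h_\varepsilon$, the IMS/Agmon identity with a (truncated) weight $|x|^{\alpha_0}$ absorbed by hypothesis \eqref{1: assumption for agmon}, weak limits with local Rellich compactness and a localization argument to keep the limit non-trivial, ground-state non-degeneracy plus unique continuation for simplicity/uniqueness, a contradiction argument via the regularized operators for the gaps \eqref{2.8} and \eqref{2.12}, and the stretched-exponential weight for part \textbf{(iii)}. The only cosmetic differences are your normalization by the local $L^2$-norm in case \textbf{(i)} (the paper normalizes $\Vert\nabla\psi_n\Vert=1$ throughout and then proves the local norm stays bounded below), and your limit $\alpha\uparrow\alpha_0$, which the $\gamma_0$-slack in \eqref{1: assumption for agmon} renders unnecessary; neither affects the substance of the argument.
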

\begin{rem}
\begin{enumerate}
\item[\textbf{(i)}] Note that assumption \eqref{1: assumption for agmon} implies that for any $0<\varepsilon<\gamma_0$ the essential spectrum of the operator $h_\varepsilon$ is non-negative. Hence, \eqref{1: virtual level} implies that for any sufficiently small $\varepsilon>0$ the operator $h_\varepsilon$ has a discrete eigenvalue.
\item[\textbf{(ii)}] We assume \eqref{1: b. simon} to be able to apply the results by M. Schechter and B. Simon \cite{schechter} on the unique continuation theorem, which allows us to prove the uniqueness of $\varphi_1$. Without this assumption the subspace of functions in $\dot{H}^1(\mathbb{R}^d)$ satisfying \eqref{1111} is at most finite-dimensional (see Lemma \ref{A3} in the Appendix).
\item[\textbf{(iii)}] As it is mentioned in remark \textbf{(i)} the operator $h_\varepsilon$ has negative eigenvalues for small $\varepsilon>0$. We should not expect that a sequence of the corresponding eigenfunctions $\varphi_\epsilon$ always converges in $L^2(\mathbb{R}^d)$ as $\varepsilon \rightarrow 0$, because we know that for one-particle Schr\"odinger operators with short-range potentials in $\mathbb{R}^3$ this is not the case. However, if we normalize the sequence $\varphi_\varepsilon$ with the norm \eqref{semi}, condition \eqref{1: assumption for agmon} will make it energetically disadvantageous for $\varphi_\varepsilon$ to leave all compact regions. This allows us to prove that the quadratic form of $h$ has a minimizer in $\dot{H}^1(\mathbb{R}^d)$.
\item[\textbf{(iv)}] Function $\varphi_1$ in part \textbf{(ii)} of the theorem is not necessarily an eigenfunction of $h$, since it may be not an element of $L^2(\mathbb{R}^d)$. In this case zero is a resonance of $h$.
\end{enumerate}
\end{rem}
In the proof of Theorem \ref{thm: Theorem 1} we will apply the following localization error estimate, which is a straightforward modification of \cite[Lemma 5.1]{Semjon2}. For the sake of completeness we will give the corresponding proof in the Appendix.
\begin{lem}\label{eq: cmp1}
For any $\varepsilon>0$ and any fixed $b>0$ one can find $\tilde{b}>b$ and functions $\chi_1,\chi_2: \mathbb{R}^d\rightarrow \mathbb{R}$ with piecewise continuous derivatives, such that
\begin{equation}
\chi_1^2+\chi_2^2 =1, \qquad \qquad \qquad \chi_1(x)=\begin{cases}
1, &|x|\leq b
\\ 0,& |x|>\tilde{b}
\end{cases}
\end{equation}
and
\begin{equation}\label{eq: equation from cmp}
|\nabla \chi_1|^2+|\nabla \chi_2|^2 \leq \varepsilon |x|^{-2}.
\end{equation}
\end{lem}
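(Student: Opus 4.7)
The plan is to construct $\chi_1$ and $\chi_2$ as radial functions of the form $\chi_1(x) = \cos(\theta(|x|))$ and $\chi_2(x) = \sin(\theta(|x|))$, where $\theta\colon[0,\infty)\to[0,\pi/2]$ is a suitable increasing profile. The Pythagorean identity gives $\chi_1^2+\chi_2^2=1$ automatically, and the prescribed boundary values of $\chi_1$ translate into the one-dimensional conditions $\theta(r)=0$ for $r\leq b$ and $\theta(r)=\pi/2$ for $r\geq \tilde{b}$.

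First I would compute the gradients. Using $\nabla|x| = x/|x|$, one finds
\begin{equation*}
|\nabla \chi_1|^2+|\nabla \chi_2|^2 = \bigl(\sin^2\theta(|x|)+\cos^2\theta(|x|)\bigr)\,\theta'(|x|)^2 = \theta'(|x|)^2,
\end{equation*}
so the target bound \eqref{eq: equation from cmp} reduces to the scalar inequality $|\theta'(r)|\leq \sqrt{\varepsilon}/r$ on the transition region.

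Next I would choose the profile. Since the right-hand side of \eqref{eq: equation from cmp} has the scale-invariant weight $|x|^{-2}$, the natural (and essentially extremal) choice is the logarithmic one: set $\theta(r)=\sqrt{\varepsilon}\,\log(r/b)$ on $[b,\tilde{b}]$, so that $\theta'(r)=\sqrt{\varepsilon}/r$ saturates the bound. Matching $\theta(\tilde{b})=\pi/2$ forces $\tilde{b}=b\exp\bigl(\pi/(2\sqrt{\varepsilon})\bigr)$. Extending $\theta\equiv 0$ on $[0,b]$ and $\theta\equiv\pi/2$ on $[\tilde{b},\infty)$ gives a continuous profile whose derivative has only jump discontinuities at $r=b$ and $r=\tilde{b}$, so $\chi_1,\chi_2$ have piecewise continuous derivatives. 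Verifying the three properties listed in the lemma is then immediate from the construction.

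There is essentially no serious obstacle; the content of the argument is the observation that the $|x|^{-2}$ right-hand side exactly matches the scaling of the logarithm, which is what allows a transition to be carried out on the finite annulus $b\leq |x|\leq \tilde{b}$ at all. The only price is that $\tilde{b}$ grows exponentially in $1/\sqrt{\varepsilon}$ as $\varepsilon\to 0$, but this is harmless because the lemma allows $\tilde{b}$ to depend on $\varepsilon$ and $b$.
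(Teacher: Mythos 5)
Your proof is correct and follows essentially the same route as the paper: a radial cutoff whose transition profile is logarithmic in $|x|$, chosen precisely because the scale invariance of the weight $|x|^{-2}$ permits the interpolation over a finite annulus. The only difference is cosmetic but pleasant: the paper writes $\chi_2=\sqrt{1-\chi_1^2}$ and must therefore impose an extra vanishing condition on $u'(1-u^2)^{-1/2}$ near $|x|=b$ (splitting the transition into two sub-annuli) to control the square-root singularity, whereas your parametrization $\chi_1=\cos\theta$, $\chi_2=\sin\theta$ makes $|\nabla\chi_1|^2+|\nabla\chi_2|^2=\theta'(|x|)^2$ exactly and sidesteps that issue in one step.
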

\begin{rem}
Note that by Lemma \ref{eq: cmp1} and Hardy's inequality
\begin{equation}\label{eq: remark for cmp1}
\int |\nabla \chi_i|^2 |\psi|^2 \, \mathrm{d}x \leq 4 \varepsilon  \Vert \nabla \psi \Vert^2
\end{equation}
holds for every $\psi \in \dot{H}^1(\mathbb{R}^d)$, where $ d\geq 3$ and $i=1,2$. This estimate shows that if the constant $\tilde{b}$ is chosen much larger than $b$, then the localization error can be compensated with an $\varepsilon-$part of $\Vert \nabla \psi \Vert^2$.
\end{rem}
\subsection*{Proof of statement (i) of Theorem \ref{thm: Theorem 1} }
By Lemma \ref{A3} in the Appendix there exists a sequence of eigenfunctions $\psi_n \in H^1(\mathbb{R}^d)$, corresponding to eigenvalues $E_n<0$ of the operator $h_{n^{-1}}$, i.e. it holds
\begin{equation}\label{1: eigenfunctions}
-\left( 1-n^{-1}\right)\Delta \psi_n+V\psi_n=E_n\psi_n.
\end{equation}
We normalize the sequence $(\psi_n)_{n\in \mathbb{N}}$ by $\Vert \nabla \psi_n \Vert =1$ and take a weakly convergent subsequence, also denoted by $(\psi_n)_{n\in \mathbb{N}}$, which has a weak limit $\varphi_0 \in \dot{H}^1(\mathbb{R}^d)$. Note that by the Rellich– Kondrachov theorem $(\psi_n)_{n\in \mathbb{N}}$ converges to $\varphi_0$ in $L_{\mathrm{loc}}^2(\mathbb{R}^d)$. We will prove statement \textbf{(i)} of Theorem \ref{thm: Theorem 1} successively by the following Lemmas \ref{lemma inside proof 1} - \ref{1:lem 2}.
\begin{lem}\label{lemma inside proof 1}
The weak limit $\varphi_0\in \dot{H}^1(\mathbb{R}^d)$ of the sequence $(\psi_n)_{n\in \mathbb{N}}$ is not identically zero.
\end{lem}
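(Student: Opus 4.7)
The plan is to argue by contradiction: assume $\varphi_0 \equiv 0$ and derive a conflict with the relative form bound zero \eqref{eq: A1}. The two-sided control that makes everything possible is the IMS-type localization of Lemma \ref{eq: cmp1}: on a small neighbourhood of infinity the hypothesis \eqref{1: assumption for agmon} is available, while on a bounded region one can just use $h\geq 0$ and Rellich. The localization error is harmless because the remark following Lemma \ref{eq: cmp1} shows it is bounded by an $\varepsilon$-fraction of $\Vert\nabla\psi_n\Vert^2 = 1$.

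Fix a small $\varepsilon>0$ (to be chosen at the end) and the $b$ from \eqref{1: assumption for agmon}, and let $\chi_1,\chi_2$ be provided by Lemma \ref{eq: cmp1}. The IMS identity applied to $\psi_n$ reads
\begin{equation*}
E_n \;=\; \langle h_{n^{-1}}\psi_n,\psi_n\rangle \;=\; \sum_{i=1}^{2}\langle h_{n^{-1}}\chi_i\psi_n,\chi_i\psi_n\rangle - (1-n^{-1})\sum_{i=1}^{2}\int|\nabla\chi_i|^2|\psi_n|^2\,\mathrm dx.
\end{equation*}
Since $\supp(\chi_2\psi_n)\subset\{|x|\geq b\}$, \eqref{1: assumption for agmon} gives $\langle h\chi_2\psi_n,\chi_2\psi_n\rangle\geq\gamma_0\Vert\nabla\chi_2\psi_n\Vert^2$, hence $\langle h_{n^{-1}}\chi_2\psi_n,\chi_2\psi_n\rangle\geq(\gamma_0-n^{-1})\Vert\nabla\chi_2\psi_n\Vert^2$. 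From $h\geq 0$ we obtain $\langle h_{n^{-1}}\chi_1\psi_n,\chi_1\psi_n\rangle\geq -n^{-1}\Vert\nabla\chi_1\psi_n\Vert^2$. Combining with $E_n<0$ and the $O(\varepsilon)$ bound on the localization error, one concludes that for $n$ large $\Vert\nabla\chi_2\psi_n\Vert^2 \leq C\varepsilon$, and therefore $\Vert\nabla\chi_1\psi_n\Vert^2\geq 1-C'\varepsilon$.

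The next step extracts a strong \emph{negativity} statement about the potential energy of the inner piece. Because $\Vert\nabla\psi_n\Vert=1$, the eigenvalue equation reads $\langle V\psi_n,\psi_n\rangle = E_n-(1-n^{-1}) < -(1-n^{-1})$. Splitting $\langle V\psi_n,\psi_n\rangle = \langle V\chi_1\psi_n,\chi_1\psi_n\rangle + \langle V\chi_2\psi_n,\chi_2\psi_n\rangle$ and using $\langle V\chi_2\psi_n,\chi_2\psi_n\rangle = \langle h\chi_2\psi_n,\chi_2\psi_n\rangle - \Vert\nabla\chi_2\psi_n\Vert^2 \geq -(1-\gamma_0)\Vert\nabla\chi_2\psi_n\Vert^2 = O(\varepsilon)$, we obtain $\langle V\chi_1\psi_n,\chi_1\psi_n\rangle \leq -\tfrac{1}{2}$ for $n$ large and $\varepsilon$ small.

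Now suppose, for contradiction, that $\varphi_0\equiv 0$. The sequence $\chi_1\psi_n$ is bounded in $H^1(\mathbb{R}^d)$ and supported in the fixed ball $\{|x|\leq\tilde b\}$, so by Rellich's theorem $\chi_1\psi_n\to 0$ in $L^2$ along a subsequence. Apply the relative bound zero \eqref{eq: A1} to $\chi_1\psi_n$ with a small parameter $\varepsilon'$:
\begin{equation*}
\bigl|\langle V\chi_1\psi_n,\chi_1\psi_n\rangle\bigr| \;\leq\; \varepsilon'\Vert\nabla\chi_1\psi_n\Vert^2 + C(\varepsilon')\Vert\chi_1\psi_n\Vert^2 \;\leq\; \varepsilon'(1+C'\varepsilon) + o(1).
\end{equation*}
Choosing first $\varepsilon'$ and then $\varepsilon$ small enough, the right-hand side is eventually less than $\tfrac{1}{4}$, contradicting the lower bound $|\langle V\chi_1\psi_n,\chi_1\psi_n\rangle|\geq\tfrac{1}{2}$ from the previous step. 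Hence $\varphi_0\not\equiv 0$.

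The main obstacle is that $\chi_1\psi_n\to 0$ in $L^2$ is \emph{not} by itself incompatible with $\Vert\nabla\chi_1\psi_n\Vert$ staying close to $1$ (oscillatory concentration is allowed), so an argument based purely on the kinetic energy cannot work. The heart of the proof is therefore the two-step reduction that converts the assumption $E_n<0$ into a genuinely \emph{negative} potential-energy contribution concentrated on a bounded region; only then does the hypothesis that $V$ has relative form bound \emph{zero} force a contradiction with the $L^2$-smallness supplied by Rellich.
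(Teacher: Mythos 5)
Your proof is correct and follows essentially the same route as the paper's: IMS localization via Lemma \ref{eq: cmp1}, concentration of the kinetic energy on the inner piece $\chi_1\psi_n$, and then the relative form bound zero \eqref{eq: A1} applied to $\chi_1\psi_n$ together with the strict negativity of its (potential) energy to force $\Vert\chi_1\psi_n\Vert$ to stay bounded away from zero — the paper phrases this directly as a quantitative lower bound on $\Vert\chi_1\psi_n\Vert^2$ rather than by contradiction, but the mechanism is identical. The only cosmetic slip is writing $E_n=\langle h_{n^{-1}}\psi_n,\psi_n\rangle$ under the normalization $\Vert\nabla\psi_n\Vert=1$; the right-hand side equals $E_n\Vert\psi_n\Vert^2$, which is still negative, so nothing in the argument changes.
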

\begin{proof}
We consider the functional
\begin{equation}
L[\psi,\varepsilon]:= (1-\varepsilon)\Vert \nabla \psi \Vert^2+\langle V\psi,\psi \rangle,
\end{equation}
where $\psi \in H^1(\mathbb{R}^d)$ and $\varepsilon>0$. Let $b>0$, such that \eqref{1: assumption for agmon} holds. We fix $\varepsilon_1>0$ and construct functions $\chi_1$, $\chi_2$ in accordance with Lemma  \ref{eq: cmp1}, which implies
\begin{equation}\label{eq: sum chi1 and chi2}
L[\psi,\varepsilon]\geq L[\psi \chi_1,\varepsilon+\varepsilon_1] + L[\psi \chi_2,\varepsilon+\varepsilon_1]
\end{equation}
for every $\psi \in H^1(\mathbb{R}^d)$ independently of $\varepsilon$. Since the operator $h$ is non-negative we have
\begin{align}\label{eq: chi1}
\begin{split}
L[\psi\chi_1,\varepsilon+\varepsilon_1]&=(1-\varepsilon-\varepsilon_1)\Vert \nabla (\psi\chi_1) \Vert^2 + \langle V\psi\chi_1,\psi\chi_1 \rangle
\\ &\geq -(\varepsilon+\varepsilon_1)\Vert \nabla (\psi\chi_1) \Vert^2.
\end{split}
\end{align}
In addition, since $\supp (\psi\chi_2) \subset \{x\in \mathbb{R}^d : |x|\geq b\}$ we conclude by \eqref{1: assumption for agmon} that
\begin{align}\label{eq: chi2}
\begin{split}
L[\psi\chi_2,\varepsilon+\varepsilon_1]&=(1-\varepsilon-\varepsilon_1)\Vert \nabla (\psi\chi_2) \Vert^2 + \langle V\psi\chi_2,\psi\chi_2 \rangle
\\ &= (1-\gamma_0)\Vert \nabla (\psi\chi_2) \Vert^2 + \langle V\psi\chi_2,\psi\chi_2 \rangle+(\gamma_0-\varepsilon-\varepsilon_1)\Vert \nabla (\psi\chi_2) \Vert^2
\\ &\geq (\gamma_0-\varepsilon-\varepsilon_1)\Vert \nabla (\psi\chi_2) \Vert^2.
\end{split}
\end{align}
Hence, \eqref{eq: chi1} and \eqref{eq: chi2} imply
\begin{equation}\label{eq: estimate for solution not zero}
L[\psi,\varepsilon] \geq -(\varepsilon+\varepsilon_1)\Vert \nabla (\psi\chi_1) \Vert^2+(\gamma_0-\varepsilon-\varepsilon_1)\Vert \nabla (\psi\chi_2) \Vert^2.
\end{equation}
For $\psi=\psi_n$ and $\varepsilon = n^{-1}$, estimate \eqref{eq: estimate for solution not zero} yields
\begin{equation}
-(\varepsilon_1+n^{-1})\Vert \nabla (\psi_n \chi_1) \Vert^2 +(\gamma_0-\varepsilon_1-n^{-1})\Vert \nabla(\psi_n \chi_2) \Vert^2<0,
\end{equation}
which implies
\begin{equation}\label{1: eq <0}
(\gamma_0-\varepsilon_1-n^{-1})\left(\Vert \nabla (\psi_n \chi_1) \Vert^2 +\Vert \nabla (\psi_n \chi_2) \Vert^2\right)<\gamma_0\Vert \nabla (\psi_n \chi_1) \Vert^2.
\end{equation}
By the IMS localization formula we have
\begin{equation}
\Vert \nabla (\psi_n \chi_1) \Vert^2+\Vert \nabla (\psi_n \chi_2) \Vert^2 \geq \Vert \nabla \psi_n \Vert^2=1
\end{equation}
for every $n\in \mathbb{N}$. Hence, by \eqref{1: eq <0} we obtain
\begin{equation}\label{eq: estimate 1-eps2}
\Vert \nabla(\psi_n \chi_1) \Vert^2 \geq \frac{\gamma_0-\varepsilon_1-n^{-1}}{\gamma_0}\geq 1 -\varepsilon_2,
\end{equation}
where $\varepsilon_2>0$ can be chosen arbitrarily small by choosing $\varepsilon_1>0$ sufficiently small and $n\in \mathbb{N}$ sufficiently large.
Due to \eqref{eq: chi2} with $\varepsilon=n^{-1}$ we have $L[\psi_n\chi_2,n^{-1}+\varepsilon_1]>0$. This, together with \eqref{eq: sum chi1 and chi2} and $L[\psi_n,n^{-1}]<0$ implies
\begin{align}\label{eq: for not zero}
\begin{split}
0 &>L[\psi_n\chi_1,n^{-1}+\varepsilon_1] = \left(1-n^{-1}-\varepsilon_1 \right) \Vert \nabla(\psi_n \chi_1) \Vert^2 + \langle V\psi_n \chi_1,\psi_n \chi_1 \rangle
\\ &\geq \left(1-n^{-1}-2\varepsilon_1 \right) \Vert \nabla(\psi_n \chi_1) \Vert^2 -C(\varepsilon_1) \Vert \psi_n\chi_1 \Vert^2,
\end{split}
\end{align}
where in the last inequality we used \eqref{eq: A1}. Combining \eqref{eq: for not zero} and \eqref{eq: estimate 1-eps2} we arrive at
\begin{equation}
\Vert \psi_n \chi_1 \Vert^2 \geq \frac{(1-n^{-1}-2\varepsilon_1)(1-\varepsilon_2)}{C(\varepsilon_1)}.
\end{equation}
Since $\chi_1$ is compactly supported, $|\chi_1|\leq 1$ and $(\psi_n)_{n\in \mathbb{N}}$ converges to $\varphi_0$ in $L_{\mathrm{loc}}^2(\mathbb{R}^d)$, the last inequality proves the Lemma. 
\end{proof}
\begin{rem}
Since 
\begin{equation}\label{1-1}
\Vert \nabla(\chi_1 \psi_n) \Vert^2 + \Vert \nabla (\chi_2 \psi_n) \Vert^2 = \Vert \nabla\psi_n \Vert^2 + \int \left( |\nabla \chi_1|^2+|\nabla\chi_2|^2 \right) |\psi_n|^2\, \mathrm{d}x,
\end{equation}
inequality \eqref{eq: remark for cmp1} shows that the last term on the r.h.s. of \eqref{1-1} can be estimated as $\varepsilon\Vert \nabla \psi_n \Vert^2 = \varepsilon$. This implies
\begin{equation}\label{1-2}
\Vert \nabla(\chi_2\psi_n)\Vert^2 \leq (1+\varepsilon) - \Vert \nabla(\chi_1 \psi_n) \Vert^2.
\end{equation}
Combining \eqref{1-2} with \eqref{eq: estimate 1-eps2} yields $\Vert \nabla (\chi_2 \psi_n)\Vert^2 \leq \tilde{\varepsilon}$, where $\tilde{\varepsilon}>0$ can be chosen arbitrarily small for large $\tilde{b}$ and $n$. We will use this estimate in the proof of Theorem \ref{thm: Theorem 1}.
\end{rem}
\begin{lem}\label{1: lem uniformly bound of psi n}
Assume that \eqref{1: virtual level} and \eqref{1: assumption for agmon} hold for some $\alpha_0>1$. Then there exists a constant $C>0$, such that for any eigenfunction $\psi_n\in H^1(\mathbb{R}^d)$ corresponding to a negative eigenvalue of the operator $h_{n^{-1}}$, normalized by $\Vert \nabla \psi_n \Vert=1$, we have 
\begin{equation}
\Vert \nabla (|x|^{\alpha_0}\psi_n) \Vert \leq C \qquad \text{and} \qquad \Vert (1+|x|)^{\alpha_0-1}\psi_n \Vert\leq C.
\end{equation} 
\end{lem}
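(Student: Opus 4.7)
The plan is a weighted energy estimate in the spirit of Agmon. I introduce, for each $\tau\geq 1$, the bounded Lipschitz weight $F_\tau(x):=\min\{|x|^{\alpha_0},\tau^{\alpha_0}\}$ and set $\phi_{n,\tau}:=F_\tau\psi_n\in H^1(\mathbb{R}^d)$; then $|\nabla F_\tau|^2\leq \alpha_0^2|x|^{-2}F_\tau^2$ almost everywhere. The goal is to bound $\Vert\nabla\phi_{n,\tau}\Vert$ uniformly in $n$ and $\tau$; letting $\tau\to\infty$ together with lower semicontinuity and monotone convergence then yields the first assertion, while the second follows by applying Hardy's inequality to $|x|^{\alpha_0}\psi_n$ combined with a uniform local $L^2$-bound on $\psi_n$ (obtained from Sobolev embedding and $\Vert\nabla\psi_n\Vert=1$).

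To derive the upper bound, multiply \eqref{1: eigenfunctions} by $F_\tau^2\psi_n$, use the Agmon identity $\int F_\tau^2\psi_n(-\Delta\psi_n)\,\mathrm{d}x=\Vert\nabla\phi_{n,\tau}\Vert^2-\int|\nabla F_\tau|^2|\psi_n|^2\,\mathrm{d}x$, and exploit $E_n<0$ to arrive at
\begin{equation*}
\langle h\phi_{n,\tau},\phi_{n,\tau}\rangle\leq n^{-1}\Vert\nabla\phi_{n,\tau}\Vert^2+(1-n^{-1})\alpha_0^2\int|x|^{-2}|\phi_{n,\tau}|^2\,\mathrm{d}x.
\end{equation*}
For the matching lower bound, localize $\phi_{n,\tau}$ using the IMS partition $\chi_1,\chi_2$ of Lemma \ref{eq: cmp1} with parameter $\varepsilon_L>0$, drop the nonnegative piece $\langle h(\chi_1\phi_{n,\tau}),\chi_1\phi_{n,\tau}\rangle\geq 0$, apply the decay assumption \eqref{1: assumption for agmon} to the outer piece $\chi_2\phi_{n,\tau}$ (supported in $|x|\geq b$), and bound the localization error via \eqref{eq: equation from cmp}, obtaining
\begin{equation*}
\langle h\phi_{n,\tau},\phi_{n,\tau}\rangle\geq\gamma_0\Vert\nabla(\chi_2\phi_{n,\tau})\Vert^2+\alpha_0^2\int|x|^{-2}|\chi_2\phi_{n,\tau}|^2\,\mathrm{d}x-\varepsilon_L\int|x|^{-2}|\phi_{n,\tau}|^2\,\mathrm{d}x.
\end{equation*}

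Comparing the two estimates, splitting $|\phi_{n,\tau}|^2=|\chi_1\phi_{n,\tau}|^2+|\chi_2\phi_{n,\tau}|^2$, and absorbing $n^{-1}\Vert\nabla\phi_{n,\tau}\Vert^2$ through the IMS identity, the $\alpha_0^2$-terms on the $\chi_2$-piece cancel up to a surplus of $n^{-1}\alpha_0^2-\varepsilon_L$, leaving an inequality of roughly the form
\begin{equation*}
(\gamma_0-n^{-1})\Vert\nabla(\chi_2\phi_{n,\tau})\Vert^2+(n^{-1}\alpha_0^2-\varepsilon_L)\int|x|^{-2}|\chi_2\phi_{n,\tau}|^2\,\mathrm{d}x\leq C_1.
\end{equation*}
The constant $C_1$ is uniformly bounded in $n,\tau$ because the surviving right-hand contributions are supported in the fixed ball $B_{\tilde b}$, where $|x|^{\alpha_0-1}$ is bounded (this is precisely where the hypothesis $\alpha_0>1$ is used) and $\Vert\psi_n\Vert_{L^2(B_{\tilde b})}$ is uniformly bounded by Sobolev embedding.

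The main obstacle is that, for fixed $\varepsilon_L$, the coefficient $n^{-1}\alpha_0^2-\varepsilon_L$ turns negative for large $n$, so the inequality alone does not bound the remaining Hardy-type integral. I would overcome this by invoking Hardy's inequality $\int|x|^{-2}|\chi_2\phi_{n,\tau}|^2\leq\frac{4}{(d-2)^2}\Vert\nabla(\chi_2\phi_{n,\tau})\Vert^2$ to trade the troublesome term against the $\gamma_0$-reserve. Choosing $\varepsilon_L$ small but fixed (depending only on $\gamma_0$ and $d$), the net coefficient of $\Vert\nabla(\chi_2\phi_{n,\tau})\Vert^2$ stays bounded below by a positive constant for all sufficiently large $n$, giving $\Vert\nabla(\chi_2\phi_{n,\tau})\Vert\leq C$ and hence $\Vert\nabla\phi_{n,\tau}\Vert\leq C$ uniformly. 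Passing to the limit $\tau\to\infty$ completes the argument.
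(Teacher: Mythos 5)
Your proof is correct and follows essentially the same Agmon-type strategy as the paper's: test the eigenvalue equation \eqref{1: eigenfunctions} against a regularized version of the weight $|x|^{\alpha_0}$, use $E_n<0$ together with the identity $\langle \nabla\psi_n,\nabla(F^2\psi_n)\rangle=\Vert\nabla(F\psi_n)\Vert^2-\Vert\psi_n\nabla F\Vert^2$ to get an upper bound on the quadratic form of $h$ at $F\psi_n$, and play it against \eqref{1: assumption for agmon} to extract a uniform bound on $\gamma_0\Vert\nabla(F\psi_n)\Vert^2$, finishing with Hardy's inequality. The only differences are technical: you regularize by truncation $\min\{|x|^{\alpha_0},\tau^{\alpha_0}\}$ and keep the weight alive near the origin (compensating there with the IMS partition of Lemma \ref{eq: cmp1}, the positivity of $h$, and a Hardy absorption of the localization error into the $\gamma_0$-reserve), whereas the paper uses $|x|^{\alpha_0}(1+\varepsilon|x|^{\alpha_0})^{-1}\chi_R$ with a cutoff vanishing near the origin, which lets it apply \eqref{1: assumption for agmon} directly to the weighted function and avoid the IMS step altogether.
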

\begin{rem}
Recall that eigenfunctions $\psi_n$ of the operators $h_{n^{-1}}$ decay exponentially with powers depending on the distances from the corresponding eigenvalues to zero. Since for $n\rightarrow \infty$ the negative eigenvalues of $h_{n^{-1}}$ converge to zero, these exponential estimates are not uniform in $n\in \mathbb{N}$. However, Lemma \ref{1: lem uniformly bound of psi n} shows that if condition \eqref{1: assumption for agmon} holds for functions supported far from the origin, a uniform estimate on the rate of decay of eigenfunctions of $h_{n^{-1}}$ exists. This estimate is of the polynomial type and the corresponding power depends on the parameter $\alpha_0$ in \eqref{1: assumption for agmon} only.
\end{rem}
\begin{proof}[Proof of Lemma \ref{1: lem uniformly bound of psi n}]
For any $\varepsilon>0$ and $R>0$ we define the function
\begin{equation}\label{1: function agmon}
G_\varepsilon(x)=\frac{|x|^{\alpha_0}}{1+\varepsilon|x|^{\alpha_0}} \chi_R(x),
\end{equation}
where $\chi_R$ is a $C^\infty$ cutoff function, such that
\begin{equation}\label{11chi}
\chi_R(x)=\begin{cases}
0, & |x|\leq R,
\\ 1, & |x|\geq 2R.
\end{cases}
\end{equation}
Since for the eigenfunctions $\psi_n$ we have
\begin{equation}\label{1: multiply and int by parts}
-(1-n^{-1})\Delta\psi_n+V\psi_n=E_n\psi_n
\end{equation}
with $E_n<0$ and each $\psi_n$ decays exponentially, we can multiply \eqref{1: multiply and int by parts} with $G^2_\varepsilon\overline{\psi_n}$ and integrate by parts to obtain
\begin{equation}\label{eq: eigenfunctions psi_n}
\left( 1-n^{-1}\right) \langle \nabla \psi_n, \nabla\left(G_\varepsilon^2\psi_n\right) \rangle+ \langle V\psi_n , G_\varepsilon^2\psi_n \rangle= E_n\Vert G_\varepsilon \psi_n \Vert^2<0.
\end{equation}
Since 
\begin{equation}
\Re \langle V\psi_n,G_\varepsilon^2\psi_n \rangle = \langle V\psi_n,G_\varepsilon^2\psi_n \rangle \qquad \text{and} \qquad \Re E_n \Vert G_\varepsilon \psi_n \Vert^2 = E_n\Vert G_\varepsilon \psi_n \Vert^2
\end{equation}
we have
\begin{equation}
\Re \langle \nabla \psi_n,\nabla \left( G_\varepsilon^2 \psi_n \right)\rangle = \langle \nabla \psi_n,\nabla \left( G_\varepsilon^2 \psi_n \right)\rangle.
\end{equation}
Note that 
\begin{align}\label{eq: agmon Re}
\begin{split}
\Re \langle \nabla \psi_n , \nabla (G_\varepsilon^2 \psi_n) \rangle &= \Re \langle \nabla \psi_n, G_\varepsilon \psi_n \nabla G_\varepsilon \rangle  + \Re   \langle (\nabla \psi_n) G_\varepsilon , \nabla (G_\varepsilon \psi_n)\rangle
\\ &= \Re \langle \nabla(\psi_n G_\varepsilon), \psi_n \nabla G_\varepsilon \rangle - \Re \langle \psi_n\nabla G_\varepsilon, \psi_n \nabla G_\varepsilon \rangle  
\\ & \ \ \ +\Re \langle \nabla(\psi_n G_\varepsilon), \nabla (\psi_n G_\varepsilon) \rangle - \Re \langle \psi_n \nabla G_\varepsilon, \nabla (\psi_n G_\varepsilon) \rangle 
\\ &=\Re \langle \nabla(\psi_n G_\varepsilon), \nabla (\psi_n G_\varepsilon) \rangle - \Re \langle \psi_n \nabla G_\varepsilon, \psi_n \nabla G_\varepsilon \rangle.
\end{split}
\end{align}
This implies
\begin{equation}
\langle \nabla \psi_n, \nabla (G_\varepsilon^2 \psi_n) \rangle= \Vert \nabla(\psi_n G_\varepsilon)\Vert^2 - \Vert \psi_n \nabla G_\varepsilon\Vert^2,
\end{equation}
which together with \eqref{eq: eigenfunctions psi_n} yields
\begin{equation}\label{1: agmon with psin}
\left(1-\frac{1}{n} \right)\left(\Vert \nabla(\psi_n G_\varepsilon) \Vert^2 - \int|\psi_n|^2 |\nabla G_\varepsilon|^2\, \mathrm{d}x\right)+\int V|\psi_n G_\varepsilon|^2\, \mathrm{d}x<0.
\end{equation}
For $|x|>2R$ we can estimate
\begin{equation}\label{1: estimate G}
|\nabla G_\varepsilon| = \frac{\alpha_0 |x|^{\alpha_0-1}}{(1+\varepsilon|x|^{\alpha_0})^2}\leq \alpha_0|x|^{-1}|G_\varepsilon|.
\end{equation} 
For $|x| \in[R,2R]$ the function $|\nabla G_\varepsilon|$ is uniformly bounded in $\varepsilon$, which together with Hardy's inequality implies
\begin{equation}\label{1: eq with C(R)}
\int_{\{R\leq |x|\leq 2R\}} |G_\varepsilon|^2|\psi_n|^2\, \mathrm{d}x \leq C\int_{\{R\leq |x|\leq 2R\}} |\psi_n|^2\, \mathrm{d}x \leq \widetilde{C} R^2 \int |\nabla \psi_n|^2 \, \mathrm{d}x=: C_0.
\end{equation}
Substituting \eqref{1: estimate G} and \eqref{1: eq with C(R)} into \eqref{1: agmon with psin} we obtain
\begin{equation}\label{1: 39}
\left(1-n^{-1}\right)\Vert \nabla(\psi_n G_\varepsilon) \Vert^2 +\langle V G_\varepsilon \psi_n,G_\varepsilon \psi_n \rangle - \alpha_0^2 \int_{\{|x|>2R\}}\frac{|G_\varepsilon \psi_n|^2}{|x|^2}\, \mathrm{d}x\leq C_1,
\end{equation}
where $C_1>0$ does not depend on $n\in \mathbb{N}$ or $\varepsilon>0$. Note that the function $G_\varepsilon \psi_n$ is supported outside the ball with radius $R>0$. For $R>b$ it satisfies \eqref{1: assumption for agmon}, i.e. it holds
\begin{equation}\label{1gamma}
(1-\gamma_0)\Vert \nabla (G_\varepsilon \psi_n)\Vert^2+\langle VG_\varepsilon \psi_n,G_\varepsilon \psi_n \rangle-\alpha_0^2 \langle |x|^{-2}G_\varepsilon \psi_n,G_\varepsilon \psi_n \rangle \geq 0.
\end{equation}
For $n>2\gamma_0^{-1} $ estimates \eqref{1: 39} and \eqref{1gamma} imply
\begin{equation}\label{241}
\frac{\gamma_0}{2} \Vert \nabla(G_\varepsilon \psi_n) \Vert^2 \leq C_1.
\end{equation}
Taking $\varepsilon\rightarrow 0$ yields $\Vert \nabla \left( |x|^{\alpha_0}\psi_n\right) \Vert \leq C$, which together with Hardy's inequality completes the proof.
\end{proof}
\begin{lem}\label{1Lem 1} 
Assume that \eqref{1: virtual level} and \eqref{1: assumption for agmon} hold for $\alpha_0>1$. Then zero is an eigenvalue of $h$ and the corresponding eigenfunction $\varphi_0$ satisfies
\begin{equation}\label{1: decay rate}
\nabla \left(|x|^{\alpha_0}\varphi_0 \right)\in L^2(\mathbb{R}^d) \qquad \text{and} \qquad (1+|x|)^{\alpha_0-1}\varphi_0 \in L^2(\mathbb{R}^d).
\end{equation}
\end{lem}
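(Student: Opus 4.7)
The plan is to identify the function $\varphi_0$ constructed above as the zero-energy eigenfunction by passing to the limit in the eigenvalue equation for $\psi_n$. By Lemma~\ref{lemma inside proof 1} the weak limit $\varphi_0\in\dot H^1(\mathbb{R}^d)$ is nontrivial, and by Lemma~\ref{1: lem uniformly bound of psi n} one has the uniform bounds
\[
\Vert\nabla(|x|^{\alpha_0}\psi_n)\Vert\le C,\qquad \Vert(1+|x|)^{\alpha_0-1}\psi_n\Vert\le C.
\]
Along a subsequence one has $\psi_n\to\varphi_0$ in $L^2_{\mathrm{loc}}(\mathbb{R}^d)$ and almost everywhere, so Fatou's lemma gives $(1+|x|)^{\alpha_0-1}\varphi_0\in L^2(\mathbb{R}^d)$. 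Since $\alpha_0>1$, the weight satisfies $(1+|x|)^{\alpha_0-1}\ge 1$, hence $\varphi_0\in L^2(\mathbb{R}^d)$. Together with weak $\dot H^1$-convergence this yields $\psi_n\rightharpoonup\varphi_0$ weakly in $H^1(\mathbb{R}^d)$. The asserted decay $\nabla(|x|^{\alpha_0}\varphi_0)\in L^2(\mathbb{R}^d)$ follows from weak lower semicontinuity: after a further subsequence $\nabla(|x|^{\alpha_0}\psi_n)$ converges weakly in $L^2$, and its distributional limit is identified as $\nabla(|x|^{\alpha_0}\varphi_0)$ via the $L^2_{\mathrm{loc}}$-convergence of $\psi_n$.

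To prove $h\varphi_0=0$ I would test \eqref{1: eigenfunctions} against an arbitrary $\phi\in C_0^\infty(\mathbb{R}^d)$ and pass to the limit in
\[
(1-n^{-1})\langle\nabla\psi_n,\nabla\phi\rangle+\langle V\psi_n,\phi\rangle=E_n\langle\psi_n,\phi\rangle.
\]
The first term converges to $\langle\nabla\varphi_0,\nabla\phi\rangle$ by weak $\dot H^1$-convergence. For the right-hand side I would first show $E_n\to 0$: rewriting \eqref{1: eigenfunctions} as $E_n\Vert\psi_n\Vert^2=\langle h\psi_n,\psi_n\rangle-n^{-1}\Vert\nabla\psi_n\Vert^2$ and using $h\ge 0$, $E_n<0$ and $\Vert\nabla\psi_n\Vert=1$ gives $-n^{-1}\le E_n\Vert\psi_n\Vert^2<0$; the uniform positive lower bound on $\Vert\psi_n\Vert$ obtained in the proof of Lemma~\ref{lemma inside proof 1} then forces $E_n\to 0$, and hence $E_n\langle\psi_n,\phi\rangle\to 0$ since $\Vert\psi_n\Vert$ is also uniformly bounded above.

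The main obstacle is the passage to the limit in $\langle V\psi_n,\phi\rangle$, since $V$ is only form-bounded and $V\phi$ need not lie in $L^2(\mathbb{R}^d)$. Here I would use that the form-bound \eqref{eq: A1} combined with the Cauchy--Schwarz-type inequality $|\langle Vf,g\rangle|\le\langle|V|f,f\rangle^{1/2}\langle|V|g,g\rangle^{1/2}$ makes $q[f,g]:=\langle Vf,g\rangle$ a bounded bilinear form on $H^1(\mathbb{R}^d)$; in particular the linear functional $f\mapsto q[f,\phi]$ is continuous on $H^1$. Since $\psi_n\rightharpoonup\varphi_0$ weakly in $H^1$, this gives $\langle V\psi_n,\phi\rangle\to\langle V\varphi_0,\phi\rangle$. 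Passing to the limit in the tested equation therefore yields $\langle\nabla\varphi_0,\nabla\phi\rangle+\langle V\varphi_0,\phi\rangle=0$ for every $\phi\in C_0^\infty(\mathbb{R}^d)$, so by density $\varphi_0\in H^1(\mathbb{R}^d)$ is a zero-energy eigenfunction of $h$ with the stated decay. The key role of the hypothesis $\alpha_0>1$ is precisely to ensure $\varphi_0\in L^2(\mathbb{R}^d)$ via the weight $(1+|x|)^{\alpha_0-1}\ge 1$, which is the decisive difference from part~\textbf{(ii)}.
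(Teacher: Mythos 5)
Your argument is correct, but the mechanism by which you identify $\varphi_0$ as a zero-energy eigenfunction differs from the paper's. The paper proceeds variationally: it upgrades the $L^2_{\mathrm{loc}}$-convergence to strong convergence $\Vert\psi_n-\varphi_0\Vert\to 0$ (using the uniform bound $\Vert(1+|x|)^{\alpha_0-1}\psi_n\Vert\le C$ to control the tails), shows $\langle V\psi_n,\psi_n\rangle\to\langle V\varphi_0,\varphi_0\rangle=-1$, and combines this with $\Vert\nabla\varphi_0\Vert\le 1$ and $h\ge 0$ to conclude that $\varphi_0$ is a minimizer of the quadratic form and hence an eigenfunction; the decay $\nabla(|x|^{\alpha_0}\varphi_0)\in L^2(\mathbb{R}^d)$ is then obtained by re-running the Agmon-type weight argument of Lemma \ref{1: lem uniformly bound of psi n} directly on $\varphi_0$. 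You instead pass to the limit in the weak formulation of \eqref{1: eigenfunctions} tested against $\phi\in C_0^\infty(\mathbb{R}^d)$, which requires only weak $H^1$-convergence together with the observation that \eqref{eq: A1} makes $f\mapsto\langle Vf,\phi\rangle$ a bounded functional on $H^1(\mathbb{R}^d)$, and you obtain the decay by weak lower semicontinuity from the uniform bounds of Lemma \ref{1: lem uniformly bound of psi n}. Both routes are sound. Yours is slightly more economical for this lemma in isolation (no strong $L^2$-convergence of $\psi_n$ is needed to derive the limit equation, and the Agmon computation is not repeated), while the paper's route has the advantage of delivering the normalization facts $\Vert\nabla\varphi_0\Vert=1$ and $\langle V\varphi_0,\varphi_0\rangle=-1$, which are reused in Lemmas \ref{1: lemma all eigenfunctions converge}--\ref{1:lem 2} to prove non-degeneracy and the gap estimate \eqref{2.8}. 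The auxiliary steps you invoke (the uniform lower and upper bounds on $\Vert\psi_n\Vert$ giving $E_n\to 0$, the identification of the weak $L^2$-limit of $\nabla(|x|^{\alpha_0}\psi_n)$ with the distributional gradient of $|x|^{\alpha_0}\varphi_0$, and the passage from the weak identity on $C_0^\infty(\mathbb{R}^d)$ to $\varphi_0\in D(h)$ via density in the form domain and the representation theorem for forms) all check out.
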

\begin{proof}
We take a sequence of eigenfunctions $\psi_n$ of $h_{n^{-1}}$ normalized by $\Vert \nabla \psi_n \Vert = 1$. This sequence has a subsequence (also denoted by $(\psi_n)_{n\in \mathbb{N}})$ with a weak limit $\varphi_0 \in \dot{H}^1(\mathbb{R}^d)$. According to Lemma \ref{lemma inside proof 1} we have $\varphi_0 \not \equiv 0$. Since $(\psi_n)_{n\in \mathbb{N}}$ converges to $\varphi_0$ in $L_{\mathrm{loc}}^2(\mathbb{R}^d)$ and by Lemma \ref{1: lem uniformly bound of psi n} we have $\Vert (1+|x|)^{\alpha_0-1}\psi_n \Vert \leq C$ for $\alpha_0>1$ and $C$ independent of $n\in \mathbb{N}$, we conclude that $(1+|x|)^{\alpha_0-1}\varphi_0 \in L^2(\mathbb{R}^d)$ holds. Furthermore, this also shows that $\langle V\varphi_0,\varphi_0\rangle$ is well defined. Our next goal is to prove $\langle V\varphi_0,\varphi_0 \rangle =-1$. We write
\begin{align}\label{eq:1}
\langle V\varphi_0,\varphi_0\rangle &= \langle V\varphi_0, \varphi_0-\psi_n \rangle + \langle V \varphi_0,\psi_n \rangle \notag
\\ &= \langle V\varphi_0, \varphi_0-\psi_n \rangle + \langle V(\varphi_0-\psi_n),\psi_n \rangle +\langle V\psi_n,\psi_n \rangle.
\end{align}
Due to \eqref{eq: A1} the first term on the r.h.s. of \eqref{eq:1} can be estimated by
\begin{align}\label{eq:2}
|\langle V\varphi_0, \varphi_0-\psi_n \rangle| &\leq \langle |V|^\frac{1}{2} \varphi_0, |V|^\frac{1}{2} |\varphi_0-\psi_n| \rangle \notag
\\ &\leq \left( \Vert \nabla \varphi_0 \Vert^2 + C(1) \Vert \varphi_0\Vert^2\right)^{\frac{1}{2}}\left(\varepsilon \Vert \nabla(\varphi_0-\psi_n) \Vert^2 + C(\varepsilon) \Vert \varphi_0-\psi_n\Vert^2 \right)^{\frac{1}{2}} \notag
\\ &\leq C\left(2\varepsilon \left(\Vert \nabla\varphi_0 \Vert^2 +\Vert \nabla \psi_n\Vert^2\right)+ C(\varepsilon) \Vert \varphi_0-\psi_n\Vert^2 \right)^{\frac{1}{2}}.
\end{align}
Note that by the semicontinuity of the norm we have $\Vert \nabla \varphi_0 \Vert \leq 1$. Since $\Vert \psi_n-\varphi_0 \Vert \rightarrow 0$ as $n\rightarrow \infty$, choosing $\varepsilon>0$ sufficiently small and $n\in \mathbb{N}$ sufficiently large we can get the r.h.s. of \eqref{eq:2} arbitrarily small. Similar arguments show that the second term on the r.h.s. of \eqref{eq:1} can be done arbitrarily small as well. Consequently, we have $\langle V\psi_n,\psi_n\rangle \rightarrow \langle V\varphi_0,\varphi_0 \rangle$ as $n \rightarrow \infty.$ By
\begin{equation}
(1-n^{-1})\Vert \nabla\psi_n \Vert^2 +\langle V\psi_n,\psi_n \rangle \leq 0 \qquad \text{and} \qquad \Vert \nabla \psi_n \Vert=1
\end{equation}
we conclude $\langle V\varphi_0,\varphi_0\rangle = -1$. Since $\Vert\nabla \varphi_0 \Vert \leq 1$,	 we have
\begin{equation}\label{1: 240}
\Vert \nabla \varphi_0 \Vert^2+\langle V\varphi_0,\varphi_0 \rangle\leq 0.
\end{equation}
Together with $h\geq 0$ this implies $\Vert \nabla \varphi_0 \Vert=1$. Hence, $\varphi_0$ is a minimizer of the quadratic form of $h$ and an eigenfunction of $h$, corresponding to the eigenvalue zero. Finally, repeating the same arguments for $\varphi_0$ as we used in Lemma \ref{1Lem 1} to get \eqref{241} for the eigenfunctions $\psi_n$, we obtain $\nabla(|x|^{\alpha_0}\varphi_0) \in L^2(\mathbb{R}^d)$.
\end{proof}
Our next goal is to prove inequality \eqref{2.8} and the nondegeneracy of $\varphi_0$. We will do it in the following Lemmas \ref{1: lemma all eigenfunctions converge} - \ref{1:lem 2}.
\begin{lem}\label{1: lemma all eigenfunctions converge}
For any $\varepsilon>0$ one can find $n_0 \in \mathbb{N}$, such that for any $n\geq n_0$ and any eigenfunction $\psi_n$ with $\Vert \nabla \psi_n \Vert =1$, corresponding to some negative eigenvalue of the operator $h_{n^{-1}}$, it holds $\Vert \psi_n-\varphi_0 \Vert<\varepsilon$.
\end{lem}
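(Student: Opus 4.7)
The strategy is to argue by contradiction, combining the uniform polynomial decay from Lemma \ref{1: lem uniformly bound of psi n} with the compactness/weak-limit argument of Lemma \ref{1Lem 1}.

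Suppose the conclusion fails. Then there exist $\varepsilon_0>0$, an increasing sequence $n_k\to\infty$, and eigenfunctions $\psi_{n_k}$ of $h_{n_k^{-1}}$ corresponding to some negative eigenvalue, normalized by $\Vert\nabla\psi_{n_k}\Vert=1$, with $\Vert\psi_{n_k}-\varphi_0\Vert\geq\varepsilon_0$ for all $k$. The goal is to produce a subsequential $L^2$-limit that must coincide with $\varphi_0$, thereby contradicting the assumption.

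The first step is tightness. Since $\alpha_0>1$, Lemma \ref{1: lem uniformly bound of psi n} yields a constant $C$ independent of $k$ with $\Vert (1+|x|)^{\alpha_0-1}\psi_{n_k}\Vert\leq C$, from which
\begin{equation*}
\int_{\{|x|>R\}}|\psi_{n_k}|^2\,\mathrm{d}x \;\leq\; C^2(1+R)^{-2(\alpha_0-1)}
\end{equation*}
uniformly in $k$. Combined with Rellich--Kondrachov compactness of $H^1(B_R)\hookrightarrow L^2(B_R)$ on balls, together with weak $\dot H^1$-compactness coming from $\Vert\nabla\psi_{n_k}\Vert=1$, I extract a further subsequence, still denoted $(\psi_{n_k})$, converging to some $\tilde\varphi$ strongly in $L^2(\mathbb R^d)$ and weakly in $\dot H^1(\mathbb R^d)$. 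The arguments in Lemmas \ref{lemma inside proof 1} and \ref{1Lem 1}, applied verbatim to this subsequence, then give $\tilde\varphi\not\equiv 0$, $\Vert\nabla\tilde\varphi\Vert=1$, $\langle V\tilde\varphi,\tilde\varphi\rangle=-1$, and $h\tilde\varphi=0$. In particular $\tilde\varphi$ is a minimizer of the quadratic form of $h$ subject to $\Vert\nabla\psi\Vert=1$, just like $\varphi_0$.

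The main obstacle is the last step: identifying $\tilde\varphi$ with $\varphi_0$ (up to a sign, which can be absorbed into the convention by which each $\psi_{n_k}$ is chosen). This amounts to showing that the set of normalized zero-energy minimizers of $h$ in $\dot H^1(\mathbb R^d)$ consists of a single ray. Since simplicity of the zero eigenvalue and the coercivity \eqref{2.8} are precisely what the companion Lemma \ref{1:lem 2} is designed to establish, I expect the identification here to proceed by a Perron--Frobenius-type positivity argument applied to the ground-state quadratic form: $|\tilde\varphi|$ and $|\varphi_0|$ are both non-negative normalized minimizers of the same non-negative form, and standard ground-state uniqueness forces them to be proportional, hence equal in modulus; combined with the normalization $\Vert\nabla\tilde\varphi\Vert=\Vert\nabla\varphi_0\Vert=1$ and the sign convention on $\psi_{n_k}$, this gives $\tilde\varphi=\varphi_0$. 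Once this is in hand, the strong $L^2$-convergence $\psi_{n_k}\to\varphi_0$ contradicts $\Vert\psi_{n_k}-\varphi_0\Vert\geq\varepsilon_0$, completing the proof.
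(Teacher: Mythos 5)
Your proposal is correct and follows essentially the same route as the paper: argue by contradiction, use the compactness machinery (the uniform bound of Lemma \ref{1: lem uniformly bound of psi n} plus Rellich--Kondrachov, exactly as in Lemmas \ref{lemma inside proof 1} and \ref{1Lem 1}) to extract a subsequential limit $\tilde\varphi$ that is another normalized zero-energy minimizer, and then conclude via non-degeneracy of the ground state. Your final Perron--Frobenius-type identification step is precisely what the paper outsources to the citation \cite{Goelden} (after observing that $\tilde\varphi$ and $\varphi_0$ would otherwise be linearly independent eigenfunctions at the bottom of the spectrum), and the sign ambiguity you flag is present, and glossed over, in the paper's statement as well.
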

\begin{proof}
Assume that we have a sequence of eigenfunctions $\psi_n \in H^1(\mathbb{R}^d),\ \Vert \nabla \psi_n \Vert=1$, corresponding to some negative eigenvalues of the operator $h_{n^{-1}}$ for $ n\in \mathbb{N}$. Furthermore, we assume that $\Vert \psi_{n}-\varphi_0 \Vert\geq C>0$ holds for every $n\in \mathbb{N}$. Proceeding as in the proof of Lemmas \ref{lemma inside proof 1} and \ref{1Lem 1} we can find a subsequence, also denoted by $(\psi_n)_{n\in \mathbb{N}}$, such that $(\psi_n)_{n\in \mathbb{N}}$ converges to some function $\tilde{\varphi}_0 \in H^1(\mathbb{R}^d)$ with $\tilde{\varphi}_0\not \equiv 0, \ \Vert\nabla \tilde{\varphi}_0\Vert = 1$ and
\begin{equation}
\Vert \nabla \tilde{\varphi}_0 \Vert^2+\langle V\tilde{\varphi}_0,\tilde{\varphi}_0 \rangle = 0.
\end{equation}
By $\Vert \nabla \varphi_0\Vert = \Vert\nabla \tilde{\varphi}_0\Vert = 1$ and $\Vert \psi_{n}-\varphi_0 \Vert\geq C>0$ we conclude that $\varphi_0$ and $\tilde{\varphi}_0$ are linearly independent. According to \cite{Goelden} an eigenvalue of a Schr\"odinger operator coinciding with the bottom of the spectrum cannot be degenerate. Consequently, $\varphi_0$ and $\tilde{\varphi}_0$ cannot be linearly independent.
\end{proof}
\begin{lem}\label{1: corollary}
For any sufficiently small $\varepsilon>0$ the operator $h_\varepsilon$ has only one negative eigenvalue, which is non-degenerate.
\end{lem}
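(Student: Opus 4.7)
The plan is an argument by contradiction, relying on Lemma~\ref{1: lemma all eigenfunctions converge} (every gradient-normalized eigenfunction of $h_{n^{-1}}$ corresponding to a negative eigenvalue converges in $L^2$ to $\varphi_0$) combined with form-monotonicity of the family $(h_\varepsilon)_\varepsilon$ in $\varepsilon$.

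I first establish that for all sufficiently large $n\in\mathbb{N}$ the operator $h_{n^{-1}}$ has exactly one negative eigenvalue, which is simple. Suppose this fails; then along some subsequence $(n_k)_{k\in\mathbb{N}}$ the operator $h_{n_k^{-1}}$ has, counted with multiplicity, at least two negative eigenvalues. The spectral theorem then supplies two $L^2$-orthogonal eigenfunctions $\psi_{n_k}^{(1)}, \psi_{n_k}^{(2)}\in H^1(\mathbb{R}^d)$ attached to (possibly coinciding) negative eigenvalues. I rescale them to satisfy $\Vert \nabla \psi_{n_k}^{(i)}\Vert = 1$ for $i=1,2$, which preserves the $L^2$-orthogonality $\langle \psi_{n_k}^{(1)},\psi_{n_k}^{(2)}\rangle = 0$.

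Applying Lemma~\ref{1: lemma all eigenfunctions converge} separately to each sequence $(\psi_{n_k}^{(i)})_k$ gives $\psi_{n_k}^{(i)}\to\varphi_0$ in $L^2(\mathbb{R}^d)$ as $k\to\infty$ for $i=1,2$. Continuity of the inner product then forces
\[
0 = \langle \psi_{n_k}^{(1)},\psi_{n_k}^{(2)}\rangle \;\longrightarrow\; \langle \varphi_0,\varphi_0\rangle = \Vert \varphi_0\Vert^2,
\]
so $\varphi_0\equiv 0$. This contradicts Lemma~\ref{1Lem 1}, where $\varphi_0$ was produced as a nontrivial eigenfunction of $h$ at zero energy, lying in $L^2(\mathbb{R}^d)$ because $(1+|x|)^{\alpha_0-1}\varphi_0\in L^2(\mathbb{R}^d)$ and $\alpha_0>1$.

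To upgrade the statement from the discrete sequence $\varepsilon=1/n$ to every sufficiently small $\varepsilon>0$ I would invoke form-monotonicity: for $\varepsilon_1<\varepsilon_2$ one has $h_{\varepsilon_1}\geq h_{\varepsilon_2}$ as quadratic forms, hence by the min-max principle the number of negative eigenvalues of $h_\varepsilon$ is non-decreasing in $\varepsilon$. In particular, once $h_{1/n_0}$ has at most one negative eigenvalue, so does $h_\varepsilon$ for every $\varepsilon\leq 1/n_0$; combined with the hypothesis $\inf\mathcal{S}(h_\varepsilon)<0$ from \eqref{1: virtual level}, this produces exactly one simple negative eigenvalue. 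I do not anticipate any serious obstacle; the only delicate points are the uniform gradient-normalization of two $L^2$-orthogonal eigenfunctions (so that Lemma~\ref{1: lemma all eigenfunctions converge} can be applied to each of them independently) and the monotonicity bridge from the sequence $\varepsilon=1/n$ to arbitrary small $\varepsilon>0$.
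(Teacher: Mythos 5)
Your proof is correct and follows essentially the same route as the paper: one assumes two negative eigenvalues exist along a sequence of $\varepsilon$'s tending to zero, picks two $L^2$-orthogonal eigenfunctions, and notes that by Lemma \ref{1: lemma all eigenfunctions converge} both must converge in $L^2(\mathbb{R}^d)$ to the nontrivial limit $\varphi_0$, which contradicts their orthogonality. The only difference is cosmetic: the paper runs the contradiction directly along an arbitrary sequence $a(n)\to 0$, whereas you work along $\varepsilon=1/n$ and then pass to all sufficiently small $\varepsilon$ via form-monotonicity of $h_\varepsilon$ in $\varepsilon$; both are valid.
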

\begin{proof}
Assume there is a sequence $a(n)\in(0,1)$ with $a(n)\rightarrow 0$ as $n\rightarrow \infty$, such that for any $n\in \mathbb{N}$ the operator $h_{a(n)}=-(1-a(n))\Delta+V$ has at least two eigenvalues. Recall that the lowest eigenvalue of $h_{a(n)}$ is non-degenerate. We consider two eigenfunctions $\psi_{n}^{(1)}$ and $\psi_{n}^{(2)}$ of $h_{a(n)}$, normalized by $\Vert \psi_{n}^{(1)}\Vert =\Vert \psi_{n}^{(2)}\Vert=1$, where $\psi_{n}^{(1)}$ corresponds to the lowest eigenvalue. Now $\psi_{n}^{(1)}$ and $\psi_{n}^{(2)}$ are orthogonal in $L^2(\mathbb{R}^d)$ and by Lemma \ref{1: lemma all eigenfunctions converge} $\psi_{n}^{(1)}$ and $\psi_{n}^{(2)}$ both converge to $\varphi_0 \in L^2(\mathbb{R}^d)$, which is a contradiction.
\end{proof}
\begin{lem}\label{1:lem 2}
There exists a constant $\delta_0>0$, such that for every function $\psi \in H^1(\mathbb{R}^d)$ with $\langle \nabla \psi, \nabla \varphi_0 \rangle =0$ it holds
\begin{equation}
(1-\delta_0)\Vert \nabla \psi \Vert^2 + \langle V\psi,\psi\rangle \geq 0.
\end{equation}
\end{lem}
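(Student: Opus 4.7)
The plan is to argue by contradiction, mimicking the strategy used in the preceding lemmas. If the conclusion fails, then for every $n\in\mathbb{N}$ one can find $\psi_n\in H^1(\mathbb{R}^d)$ with $\langle\nabla\psi_n,\nabla\varphi_0\rangle=0$, $\Vert\nabla\psi_n\Vert=1$, and $L[\psi_n,n^{-1}]=(1-n^{-1})\Vert\nabla\psi_n\Vert^2+\langle V\psi_n,\psi_n\rangle<0$. Passing to a subsequence, I extract a weak limit $\psi_n\rightharpoonup\tilde\varphi$ in $\dot H^1(\mathbb{R}^d)$ with $L^2_{\mathrm{loc}}$-convergence via Rellich's theorem; since $\varphi_0\in H^1(\mathbb{R}^d)\subset\dot H^1(\mathbb{R}^d)$, the condition $\langle\nabla\psi_n,\nabla\varphi_0\rangle=0$ passes to the limit, giving $\langle\nabla\tilde\varphi,\nabla\varphi_0\rangle=0$. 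The IMS argument in the proof of Lemma~\ref{lemma inside proof 1} uses only $L[\psi_n,n^{-1}]<0$ and the form-bound \eqref{eq: A1}, not the fact that the $\psi_n$ there are eigenfunctions; applied verbatim it gives $\Vert\chi_1\psi_n\Vert_{L^2}$ bounded below, hence $\tilde\varphi\not\equiv 0$.

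The heart of the argument is to show $L[\tilde\varphi,0]\leq 0$. I would choose cut-off functions $\chi_1,\chi_2$ from Lemma~\ref{eq: cmp1} for a small parameter $\varepsilon>0$, so that
\begin{equation*}
L[\psi_n,n^{-1}] = L[\chi_1\psi_n,n^{-1}] + L[\chi_2\psi_n,n^{-1}] - (1-n^{-1})\int\left(|\nabla\chi_1|^2+|\nabla\chi_2|^2\right)|\psi_n|^2\,\mathrm{d}x.
\end{equation*}
Since $\supp(\chi_2\psi_n)\subset\{|x|\geq b\}$, assumption \eqref{1: assumption for agmon} gives $L[\chi_2\psi_n,n^{-1}]\geq(\gamma_0-n^{-1})\Vert\nabla(\chi_2\psi_n)\Vert^2\geq 0$ for $n$ sufficiently large, while the localization error is absorbed by \eqref{eq: remark for cmp1} into a term of size $O(\varepsilon)$. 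Combining these I obtain $L[\chi_1\psi_n,n^{-1}]\leq L[\psi_n,n^{-1}]+O(\varepsilon)\leq O(\varepsilon)$. Because $\chi_1\psi_n$ has uniformly compact support, Rellich's theorem together with the form-bound \eqref{eq: A1} (used exactly as in \eqref{eq:2}) yields $\langle V\chi_1\psi_n,\chi_1\psi_n\rangle\to\langle V\chi_1\tilde\varphi,\chi_1\tilde\varphi\rangle$, while the gradient term is weakly lower semicontinuous; sending $n\to\infty$ gives $L[\chi_1\tilde\varphi,0]\leq O(\varepsilon)$. Letting the localization parameter shrink (and $\tilde b\to\infty$) then extracts $L[\tilde\varphi,0]\leq 0$.

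The main obstacle is precisely this intermediate step: unlike in Lemmas~\ref{1Lem 1} and \ref{1: lemma all eigenfunctions converge}, the sequence $\psi_n$ consists of trial functions rather than eigenfunctions of $h_{n^{-1}}$, so Lemma~\ref{1: lem uniformly bound of psi n} is unavailable and uniform polynomial decay of $\psi_n$ cannot be invoked. All control over the outer part of $\psi_n$ must therefore be extracted from \eqref{1: assumption for agmon} together with the cut-off decomposition, rather than from any pointwise decay estimate. Once $L[\tilde\varphi,0]\leq 0$ is in hand, the assumption $h\geq 0$ promotes this to $L[\tilde\varphi,0]=0$, so $\tilde\varphi$ is a zero-eigenfunction of $h$. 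By the non-degeneracy of the ground state \cite{Goelden}, already invoked in Lemma~\ref{1: lemma all eigenfunctions converge}, we have $\tilde\varphi=c\varphi_0$ for some scalar $c$, and the orthogonality $\langle\nabla\tilde\varphi,\nabla\varphi_0\rangle=0$ combined with $\Vert\nabla\varphi_0\Vert>0$ forces $c=0$ and hence $\tilde\varphi\equiv 0$, contradicting $\tilde\varphi\not\equiv 0$.
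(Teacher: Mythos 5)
Your strategy is genuinely different from the paper's, and it has a gap at exactly the point you yourself flag as ``the main obstacle'' without actually resolving it. The argument up to $L[\chi_1\tilde\varphi,0]\leq O(\varepsilon_1)$ is sound, but the passage from there to ``$\tilde\varphi$ is a zero-eigenfunction of $h$'' requires $\tilde\varphi\in L^2(\mathbb{R}^d)$, and nothing in your argument provides this. Since your $\psi_n$ are arbitrary trial functions rather than eigenfunctions of $h_{n^{-1}}$, Lemma \ref{1: lem uniformly bound of psi n} is unavailable (as you note), so there is no uniform bound on $\Vert(1+|x|)^{\alpha_0-1}\psi_n\Vert$ nor even on $\Vert\psi_n\Vert_{L^2}$ (normalizing $\Vert\nabla\psi_n\Vert=1$ does not control the $L^2$-norm), and the weak limit $\tilde\varphi$ is a priori only in $\dot H^1(\mathbb{R}^d)$. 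This breaks the conclusion in two places: first, in part \textbf{(i)} of Theorem \ref{thm: Theorem 1} only \eqref{eq: A1} is assumed, so $\langle V\tilde\varphi,\tilde\varphi\rangle$ need not be well defined for $\tilde\varphi\in\dot H^1\setminus H^1$, and $h\geq 0$ is a statement about the form on $H^1$ which cannot be invoked to upgrade $L[\tilde\varphi,0]\leq 0$ to an equality; second, the non-degeneracy result of \cite{Goelden} concerns $L^2$-eigenfunctions, so without $\tilde\varphi\in L^2$ you cannot conclude $\tilde\varphi=c\varphi_0$. Saying that the needed control ``must be extracted from \eqref{1: assumption for agmon} together with the cut-off decomposition'' describes what is required rather than supplying it: \eqref{1: assumption for agmon} gives positivity of the exterior energy but no decay of trial functions.

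The paper sidesteps all of this with a short spectral-counting argument: if $g_n$ satisfies $\langle\nabla g_n,\nabla\varphi_0\rangle=0$ and $\langle h_{n^{-1}}g_n,g_n\rangle<0$, then the form of $h_{n^{-1}}$ is negative on the two-dimensional span of $g_n$ and $\varphi_0$ (the cross term vanishes because $h\varphi_0=0$ and the gradients are orthogonal, while $\langle h_{n^{-1}}\varphi_0,\varphi_0\rangle=-n^{-1}<0$), so $h_{n^{-1}}$ has at least two negative eigenvalues, contradicting Lemma \ref{1: corollary}. If you wish to keep a compactness approach, the natural repair is to project $\psi_n$ onto the unique ground state of $h_{n^{-1}}$ and its complement --- but that is essentially the paper's proof in disguise, and it still relies on Lemmas \ref{1: lemma all eigenfunctions converge} and \ref{1: corollary} rather than on a limit of the $\psi_n$ themselves.
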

\begin{proof}
We prove the Lemma by contradiction. Assume that there is no such constant $\delta_0>0$. Then there exists a sequence of functions $g_{n}\in H^1(\mathbb{R}^d)$ with 
\begin{equation}\label{1: gn<0}
\langle \nabla g_n, \nabla \varphi_0 \rangle =0 \qquad \text{and} \qquad \langle h_{n^{-1}}g_n,g_n \rangle <0.
\end{equation}
Note that for $c_1,c_2 \in \mathbb{C}$ we have
\begin{align}
\begin{split}
\langle h_{n^{-1}}(c_1g_n+c_2\varphi_0), (c_1g_n+c_2\varphi_0) \rangle &=
c_1^2\langle h_{n^{-1}} g_n,g_n \rangle + c_2^2 \langle h_{n^{-1}} \varphi_0,\varphi_0 \rangle 
\\ &\ \ \ \ \ \ \ \ \ \ \ \ \ \qquad \ \ \ \ \ \ \ + 2\Re c_1 \overline{c_2} \langle h_{n^{-1}}g_n,\varphi_0 \rangle.
\end{split}
\end{align}
Further, it is easy to see that
\begin{equation}
\Re\langle h_{n^{-1}} g_n,\varphi_0 \rangle = \Re\langle g_n, h \varphi_0 \rangle-n^{-1}\Re \langle \nabla g_n,\nabla \varphi_0 \rangle=0
\end{equation} and
\begin{equation}
\langle h_{n^{-1}}\varphi_0,\varphi_0\rangle = \langle h\varphi_0,\varphi_0\rangle - n^{-1} \Vert \nabla \varphi_0 \Vert^2 = -n^{-1}
\end{equation}
hold for every $n \in \mathbb{N}$. Hence, we conclude that for any linear combination $c_1 g_n+c_2\varphi_0$ we have
\begin{equation}\label{eq: linear combination negative}
\langle h_{n^{-1}}(c_1g_n+c_2\varphi_0), (c_1g_n+c_2\varphi_0) \rangle<0.
\end{equation}
Since by \eqref{1: gn<0} the functions $\varphi_0$ and $g_n$ are linearly indpendent, for any $n\in \mathbb{N}$ we can find a linear combination $f_n$ of $\varphi_0$ and $g_n$, such that $f_n$ is orthogonal to the ground state of $h_{n^{-1}}$. According to Lemma \ref{1: corollary} for sufficiently large $n\in \mathbb{N}$ the operator $h_{n^{-1}}$ has only one negative eigenvalue, which yields $\langle h_{n^{-1}}f_n,f_n \rangle\geq 0$. This is a contradiction to \eqref{eq: linear combination negative}.
\end{proof}
Combining Lemma \ref{1Lem 1} and Lemma \ref{1:lem 2} proves statement \textbf{(i)} of Theorem \ref{thm: Theorem 1}.
\subsection*{Proof of statements (ii) and (iii) of Theorem \ref{thm: Theorem 1} }
Note that for $\alpha_0\in(0,1)$ the sequence of eigenfunctions $\psi_n$ of the operators $h_{n^{-1}}$, normalized by $\Vert \nabla \psi_n \Vert=1$, does not necessarily converge in $L^2(\mathbb{R}^d)$, as for example happens in the case of a one-particle Schr\"odinger operator in $\mathbb{R}^3$. To ensure that the functional $\Vert \nabla\psi \Vert^2+\langle V\psi,\psi\rangle$ is well defined for the weak limit $\varphi_1 \in \dot{H}^1(\mathbb{R}^d)$ and that $\langle V\psi_n,\psi_n\rangle$ converges to $\langle V\varphi_1,\varphi_1 \rangle$ as $n \rightarrow \infty$, we assume \eqref{1: hardyassumption}. We will prove part $\textbf{(ii)}$ of Theorem \ref{thm: Theorem 1} in two steps. In Lemma \ref{1Lem 3} we prove the existence of a function $\varphi_1$ satisfying \eqref{1111}. Then, in Lemma \ref{210} we prove the uniqueness of $\varphi_1$ and the inequality \eqref{2.12}.
\begin{lem}\label{1Lem 3}
Assume that \eqref{1: virtual level} and \eqref{1: assumption for agmon} hold for $\alpha_0\in(0,1)$ and in addition
\begin{equation}\label{1: hardy assumption}
\langle |V|\psi,\psi \rangle \leq C\Vert \nabla \psi\Vert^2
\end{equation}
holds for any function $\psi \in \dot{H}^1(\mathbb{R}^d)$ and some constant $C>0$. Then, there exists a function $\varphi_1 \in \dot{H}^1(\mathbb{R}^d)$ with
\begin{equation}
\Vert \nabla \varphi_1 \Vert^2+ \langle V\varphi_1,\varphi_1 \rangle = 0.
\end{equation}
Moreover, $\varphi_1$ satisfies 
\begin{equation}
\nabla \left(|x|^{\alpha_0} \varphi_1 \right)\in L^2(\mathbb{R}^d) \qquad \text{and} \qquad (1+|x|)^{\alpha_0-1}\varphi_1 \in L^2(\mathbb{R}^d).
\end{equation}
\end{lem}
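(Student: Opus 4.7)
The plan is to mirror the strategy of statement \textbf{(i)}, now with the weak limit living only in $\dot{H}^1(\mathbb{R}^d)$ rather than in $L^2(\mathbb{R}^d)$. As in the proof of \textbf{(i)}, take a sequence of eigenfunctions $\psi_n \in H^1(\mathbb{R}^d)$ of $h_{n^{-1}}$ corresponding to negative eigenvalues $E_n < 0$, normalized by $\Vert \nabla \psi_n \Vert = 1$, and pass to a weakly convergent subsequence with weak limit $\varphi_1 \in \dot{H}^1(\mathbb{R}^d)$, strongly convergent in $L^2_{\mathrm{loc}}(\mathbb{R}^d)$ by Rellich--Kondrachov. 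The argument of Lemma \ref{lemma inside proof 1} establishing $\varphi_1 \not\equiv 0$ relies only on the relative form bound \eqref{eq: A1} and on $\gamma_0 > 0$ in \eqref{1: assumption for agmon}; it is insensitive to the value of $\alpha_0$ and applies here verbatim.

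For the weighted decay, the Agmon-type calculation in the proof of Lemma \ref{1: lem uniformly bound of psi n} is likewise independent of whether $\alpha_0 > 1$ or $\alpha_0 \in (0,1)$ and yields the uniform bound $\Vert \nabla(|x|^{\alpha_0}\psi_n) \Vert \leq C$. Hardy's inequality applied to $|x|^{\alpha_0}\psi_n$ gives $\Vert |x|^{\alpha_0-1}\psi_n \Vert \leq C$, and combined with local $L^2$ boundedness (from Rellich--Kondrachov) this upgrades to $\Vert (1+|x|)^{\alpha_0-1}\psi_n \Vert \leq C$. Both bounds descend to $\varphi_1$ via Fatou's lemma and weak lower semicontinuity.

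The crux of the lemma is the identity $\Vert \nabla\varphi_1 \Vert^2 + \langle V\varphi_1,\varphi_1\rangle = 0$. From the eigenvalue equation $(1-n^{-1})\Vert \nabla\psi_n\Vert^2 + \langle V\psi_n,\psi_n\rangle < 0$ together with $h\geq 0$ one reads off $\langle V\psi_n,\psi_n\rangle \to -1$, so everything reduces to showing $\langle V\psi_n,\psi_n\rangle \to \langle V\varphi_1,\varphi_1\rangle$. This is the main obstacle: since $\psi_n$ need not converge to $\varphi_1$ in $L^2(\mathbb{R}^d)$ (think of one-body short-range operators in $\mathbb{R}^3$), the direct Cauchy--Schwarz argument from Lemma \ref{1Lem 1} is unavailable on the whole space. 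I handle this by using the IMS partition $\chi_1^2 + \chi_2^2 = 1$ from Lemma \ref{eq: cmp1} with $\chi_1$ compactly supported. On the compact piece, $\langle V\chi_1\psi_n,\chi_1\psi_n\rangle \to \langle V\chi_1\varphi_1,\chi_1\varphi_1\rangle$ by the same relative-form-bound argument as in Lemma \ref{1Lem 1}, since $\chi_1\psi_n \to \chi_1\varphi_1$ strongly in $L^2(\mathbb{R}^d)$. On the complementary piece, the additional hypothesis \eqref{1: hardy assumption} yields $|\langle V\chi_2\psi_n,\chi_2\psi_n\rangle| \leq C\Vert \nabla(\chi_2\psi_n)\Vert^2$, and by the remark following Lemma \ref{lemma inside proof 1} this can be made arbitrarily small by taking $\tilde{b}$ and $n$ large; the analogous bound for $\varphi_1$ follows from the weak convergence $\chi_2\psi_n \rightharpoonup \chi_2\varphi_1$ in $\dot{H}^1(\mathbb{R}^d)$ and lower semicontinuity. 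Putting the two contributions together gives $\langle V\varphi_1,\varphi_1\rangle = -1$; combined with $\Vert \nabla\varphi_1\Vert \leq 1$ (lower semicontinuity) and $\Vert \nabla\varphi_1\Vert^2 + \langle V\varphi_1,\varphi_1\rangle \geq 0$ (from $h \geq 0$) this forces $\Vert \nabla\varphi_1\Vert = 1$ and the desired identity \eqref{1111}.
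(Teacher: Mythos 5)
Your proposal is correct, and for the heart of the lemma --- the identity $\langle V\varphi_1,\varphi_1\rangle=-1$ --- it coincides with the paper's argument: the same IMS splitting $\chi_1^2+\chi_2^2=1$ from Lemma \ref{eq: cmp1}, the relative form bound together with $L^2_{\mathrm{loc}}$-convergence on the compact piece, and the hypothesis \eqref{1: hardy assumption} combined with the remark after Lemma \ref{lemma inside proof 1} to kill the $\chi_2$-piece. Where you genuinely diverge is in the derivation of the weighted estimates. The paper first observes that the minimizer $\varphi_1$ satisfies the generalized Euler--Lagrange equation \eqref{1: euler lagrange} and then runs the Agmon computation with $G_\varepsilon^2\varphi_1$ as test function directly on $\varphi_1$; you instead note (correctly) that the proof of Lemma \ref{1: lem uniformly bound of psi n} nowhere uses $\alpha_0>1$, obtain the uniform bounds $\Vert\nabla(|x|^{\alpha_0}\psi_n)\Vert\leq C$ on the approximating eigenfunctions, and pass them to $\varphi_1$ by weak lower semicontinuity and Fatou. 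Both routes are sound; yours is slightly more economical here since the eigenfunctions decay exponentially and the integration by parts needs no further justification, while the paper's choice has the side benefit that the Euler--Lagrange identity it establishes is exactly what is reused in Lemma \ref{210} to prove uniqueness of $\varphi_1$, so it is not wasted work. One cosmetic remark: for $\alpha_0\in(0,1)$ the bound $\Vert(1+|x|)^{\alpha_0-1}\psi_n\Vert\leq C$ follows from Hardy's inequality alone, since $(1+|x|)^{\alpha_0-1}\leq|x|^{\alpha_0-1}$ pointwise when the exponent is negative; the appeal to local $L^2$-boundedness is superfluous (it would be needed only for $\alpha_0>1$).
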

\begin{proof}
By assumption \eqref{1: virtual level} there exists a sequence of functions $\psi_n \in \dot{H}^1(\mathbb{R}^d)$ satisfying
\begin{equation}
\left(1-n^{-1}\right)\Vert \nabla \psi_n \Vert^2 +\langle V\psi_n,\psi_n \rangle<0 \qquad \text{and} \qquad \Vert \nabla \psi_n \Vert=1.
\end{equation}
Repeating the same arguments as in Lemma \ref{lemma inside proof 1} shows that there is a subsequence, also denoted by $(\psi_n)_{n\in \mathbb{N}}$, which converges in $L_{\mathrm{loc}}^2(\mathbb{R}^d)$ to some function $\varphi_1\in \dot{H}^1(\mathbb{R}^d)$. Let us prove that $\varphi_1$ is a minimizer of the quadratic form of $h$ in $\dot{H}^1(\mathbb{R}^d)$ by showing that $\langle V\varphi_1,\varphi_1 \rangle =-1$. We fix the constant $b>0$ and construct functions $\chi_1,\chi_2$ according to Lemma \ref{eq: cmp1}. Since $\chi_1^2+\chi_2^2=1$ we have
\begin{equation}
\langle V\varphi_1,\varphi_1 \rangle = \langle V\varphi_1,\varphi_1\chi_1^2 \rangle + \langle V\varphi_1,\varphi_1\chi_2^2 \rangle.
\end{equation}
Note that
\begin{align}
\langle V\varphi_1,\varphi_1\chi_1^2 \rangle &= \langle V(\varphi_1-\psi_n),\varphi_1\chi_1^2 \rangle + \langle V \psi_n ,\varphi_1 \chi_1^2 \rangle \notag
\\ &=\langle V(\varphi_1-\psi_n),\varphi_1\chi_1^2 \rangle+\langle V \psi_n,\psi_n\chi_1^2 \rangle +\langle V\psi_n,(\varphi_1-\psi_n)\chi_1^2 \rangle.
 \label{eq: estimate with three summands}
\end{align} 
At first we estimate the first term on the r.h.s. of \eqref{eq: estimate with three summands}. It holds
\begin{align}
|\langle V(\varphi_1-\psi_n),\varphi_1\chi_1^2  \rangle| &\leq \Vert |V|^{\frac{1}{2}}\chi_1(\varphi_1-\psi_n) \Vert \cdot \Vert |V|^{\frac{1}{2}} \varphi_1 \Vert \leq C\Vert |V|^{\frac{1}{2}}\chi_1(\varphi_1-\psi_n) \Vert\cdot \Vert \nabla \varphi_1\Vert \notag
\\ &\leq C \left(\varepsilon \Vert \nabla\left( \chi_1(\varphi_1-\psi_n)\right) \Vert^2  + C(\varepsilon)\Vert \chi_1(\varphi_1-\psi_n) \Vert^2\right)^{\frac{1}{2}}.
\end{align}
Here we used \eqref{eq: A1}, \eqref{1: hardyassumption}, $|\chi_1|\leq 1$ and $\Vert \nabla \varphi_1 \Vert \leq 1$. Moreover, it holds
\begin{equation}
\Vert \nabla(\chi_1(\varphi_1-\psi_n)) \Vert^2 \leq 2\Vert \nabla\chi_1 \Vert^2 \Vert \varphi_1-\psi_n \Vert_{\supp(\chi_1)}^2 + 2\Vert \nabla(\varphi_1-\psi_n) \Vert^2.
\end{equation}
Since $\psi_n \rightarrow \varphi_1$ in $L_{\mathrm{loc}}^2(\mathbb{R}^d)$ and $\chi_1$ is compactly supported, for fixed $\varepsilon_1>0$ and large $n\in \mathbb{N}$ we get
\begin{equation}
\Vert \nabla(\chi_1(\varphi_1-\psi_n)) \Vert^2 \leq 2\varepsilon_1+4\Vert \nabla \varphi_1 \Vert^2 +4 \Vert \nabla \psi_n \Vert^2 \leq 9.
\end{equation}
For any fixed $\tilde{\varepsilon}>0$ and large $n$ this implies
\begin{equation}
|\langle V(\varphi_1-\psi_n),\chi_1^2\varphi_1 \rangle | \leq C \left( 9\varepsilon+C(\varepsilon)\Vert \chi_1(\varphi_1-\psi_n) \Vert^2 \right)^{\frac{1}{2}} \leq \tilde{\varepsilon}.
\end{equation}
Applying similar arguments to the last term on the r.h.s. of \eqref{eq: estimate with three summands} yields
\begin{equation}
|\langle V\psi_n\chi_1,(\varphi_1-\psi_n)\chi_1 \rangle| \leq \tilde{\varepsilon}.
\end{equation}
Hence, it holds
\begin{equation}\label{eq: estimate for final 1}
\langle V \varphi_1\chi_1,\varphi_1\chi_1 \rangle \leq \langle V\psi_n \chi_1,\psi_n \chi_1 \rangle +2\tilde{\varepsilon}.
\end{equation}
Further, by \eqref{1: hardy assumption} we have
\begin{equation}\label{eq: estimate V phi0 chi2}
\langle V\varphi_1\chi_2,\varphi_1\chi_2 \rangle \leq C\Vert \nabla(\varphi_1 \chi_2) \Vert^2 \leq 2C\Vert (\nabla \varphi_1)\chi_2 \Vert^2+2C\Vert (\nabla \chi_2)\varphi_1 \Vert^2.
\end{equation}
Since $\varphi_1 \in \dot{H}^1(\mathbb{R}^d)$ and $\chi_2$ is bounded and supported in the region $\{x\in \mathbb{R}^d: |x|\geq b\}$, the first term on the r.h.s. of \eqref{eq: estimate V phi0 chi2} is arbitrarily small if $b$ is sufficiently large. Due to \eqref{eq: remark for cmp1} it holds
\begin{equation}
\Vert (\nabla \chi_2)\varphi_1 \Vert^2 \leq \varepsilon \Vert \nabla \varphi_1 \Vert^2=\varepsilon
\end{equation} 
for $\tilde{b}>0$ sufficiently large. This shows that the second term on the r.h.s. of \eqref{eq: estimate V phi0 chi2} can be done arbitrarily small. Hence, we obtain
\begin{equation}\label{eq. estimate for final 2}
\langle V\varphi_1\chi_2,\varphi_1\chi_2 \rangle \leq 2\tilde{\varepsilon}.
\end{equation}
Collecting estimates \eqref{eq: estimate for final 1} and \eqref{eq. estimate for final 2} yields
\begin{equation}\label{eq: estimate <vphi0,phi0>}
\langle V\varphi_1,\varphi_1\rangle \leq \langle V\psi_n\chi_1,\psi_n\chi_1 \rangle + 4\tilde{\varepsilon}
\end{equation}
for $n\in \mathbb{N}$ sufficiently large. 
\\ Let us estimate the r.h.s. of \eqref{eq: estimate <vphi0,phi0>}. Assumption \eqref{1: hardyassumption} implies
\begin{align}
\begin{split}
\langle V\psi_n \chi_1,\psi_n\chi_1 \rangle &= \langle V\psi_n,\psi_n \rangle-\langle V\psi_n \chi_2,\psi_n\chi_2 \rangle\leq \langle V\psi_n,\psi_n \rangle + C\Vert \nabla(\psi_n\chi_2) \Vert^2
\\ &\leq -\left( 1-n^{-1}\right)+C\Vert \nabla(\psi_n\chi_2)\Vert^2.
\end{split}
\end{align}
Due to the remark after Lemma \ref{lemma inside proof 1} we can choose $n\in \mathbb{N}$ and $\tilde{b}>0$, such that $\Vert \nabla(\psi_n\chi_2) \Vert \leq \varepsilon$. Theorefore, we conclude $\langle V\varphi_1 , \varphi_1 \rangle = -1$ and
\begin{equation}\label{qform}
\Vert \nabla \varphi_1 \Vert^2 +\langle V\varphi_1,\varphi_1 \rangle=0.
\end{equation}
Now we prove that $\nabla \left(|x|^{\alpha_0}\varphi_1 \right)\in L^2(\mathbb{R}^d)$ and $(1+|x|)^{\alpha_0-1}\varphi_1 \in L^2(\mathbb{R}^d)$. Let $G_\varepsilon$ be the function defined by \eqref{1: function agmon}. Since $\varphi_1$ is a minimizer of the quadratic form of \eqref{qform} in $\dot{H}^1(\mathbb{R}^d)$, it satisfies the Euler-Lagrange equation in a generalized sense, i.e. it holds
\begin{equation}\label{1: euler lagrange}
\langle \nabla\varphi_1, \nabla \psi \rangle + \langle V\varphi_1,\psi \rangle=0
\end{equation}
for every function $\psi \in \dot{H}^1(\mathbb{R}^d)$. By setting $\psi=G_\varepsilon^2\varphi_1$ we obtain
\begin{equation}\label{111uuu}
\langle \nabla\varphi_1, \nabla \left(G_\varepsilon^2\varphi_1\right) \rangle + \langle V\varphi_1, G_\varepsilon^2\varphi_1 \rangle=0.
\end{equation}
Computations similar to \eqref{eq: agmon Re} together with \eqref{111uuu} yield
\begin{equation}\label{eq: agmon rewritten}
\Vert \nabla(\varphi_1G_\varepsilon) \Vert^2 - \int|\varphi_1|^2 |\nabla G_\varepsilon|^2\, \mathrm{d}x+\int V|\varphi_1G_\varepsilon|^2\, \mathrm{d}x=0.
\end{equation}
By \eqref{1: estimate G} we can rewrite \eqref{eq: agmon rewritten} as
\begin{equation}
\Vert \nabla(\varphi_1G_\varepsilon) \Vert^2 +\langle V \varphi_1 G_\varepsilon, \varphi_1 G_\varepsilon \rangle - \alpha_0^2 \int_{\{|x|\geq 2R\}}\frac{|G_\varepsilon \varphi_1|^2}{|x|^{2}}\, \mathrm{d}x 
\leq \int_{\{R\leq |x|\leq 2R\}} |\varphi_1|^2|\nabla G_\varepsilon|^2\, \mathrm{d}x.
\end{equation}
Since the function $|\nabla G_\varepsilon|$ is uniformly bounded in $\varepsilon$ for $|x|\in[R,2R]$, we have
\begin{align}
\int_{\{R\leq |x|\leq 2R\}} |\varphi_1|^2|\nabla G_\varepsilon|^2\, \mathrm{d}x &\leq C_0 \int_{\{R\leq |x|\leq 2R\}} |\varphi_1 |^2 \, \mathrm{d}x \leq  C_1R^2\int \frac{|\varphi_1|^2}{|x|^2}\, \mathrm{d}x \notag
\\ &\leq C_2 \int |\nabla \varphi_1|^2 \, \mathrm{d}x \leq C_2,
\end{align}
where the constant $C_2>0$ does not depend on $\varepsilon>0$. Similar to the proof of Lemma \ref{1: lem uniformly bound of psi n}, assumption \eqref{1: assumption for agmon} implies
\begin{equation*}
\Vert \nabla (\varphi_1 G_\varepsilon) \Vert \leq C.
\end{equation*}
Taking $\varepsilon \rightarrow 0$ yields $\Vert \nabla(|x|^{\alpha_0}\varphi_1) \Vert<\infty$, which together with Hardy's inequality implies 
\begin{equation}
(1+|x|)^{\alpha_0-1}\varphi_1 \in L^2(\mathbb{R}^d).
\end{equation}
This completes the proof.
\end{proof}
\begin{lem}\label{210}
Assume that
\begin{equation}\label{bbb}
\Vert V \psi \Vert^2 \leq C\left( \Vert \nabla \psi \Vert^2 + \Vert \psi \Vert^2 \right)
\end{equation}
holds for some $C>0$ and every function $\psi\in C_0^\infty(\mathbb{R}^d)$. Then the solution $\varphi_1 \in \dot{H}^1(\mathbb{R}^d)$ in Lemma \ref{1Lem 3} is unique. Moreover, there exists a constant $\delta_1>0$, such that for any function $\psi \in \dot{H}^1(\mathbb{R}^d)$ with $\langle \nabla\psi,\nabla \varphi_1\rangle =0$ it holds 
\begin{equation}\label{1: distance to next virtual level}
\langle h \psi,\psi \rangle \geq \delta_1 \Vert \nabla \psi \Vert^2.
\end{equation}
\end{lem}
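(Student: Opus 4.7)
The claim has two parts: uniqueness of $\varphi_1$ and the coercivity estimate \eqref{1: distance to next virtual level}. For uniqueness, Lemma \ref{A3} of the Appendix already tells us that the set $M \subset \dot H^1(\mathbb R^d)$ of solutions of \eqref{1111} is finite-dimensional; my task is to show $\dim M = 1$. Each $\varphi \in M$ is a weak solution of $(-\Delta + V)\varphi = 0$ via the Euler--Lagrange identity \eqref{1: euler lagrange}, and under the bound \eqref{bbb} the Schechter--Simon unique continuation theorem applies to such solutions. I would combine this with a Perron-Frobenius reduction: since $\|\nabla|\varphi|\|\le\|\nabla\varphi\|$ and $\langle V|\varphi|,|\varphi|\rangle=\langle V\varphi,\varphi\rangle$, the modulus of any minimizer is itself a minimizer, so after replacement we may take $\varphi_1 \ge 0$, whence unique continuation forces $\varphi_1 > 0$ almost everywhere. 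If $\tilde\varphi_1 \in M$ were linearly independent of $\varphi_1$, choose $c \in \mathbb R$ so that $\xi := \tilde\varphi_1 - c\varphi_1$ changes sign. Since $\xi \in \dot H^1(\mathbb R^d)$ implies $\xi_\pm \in \dot H^1(\mathbb R^d)$ with $\nabla \xi_\pm$ supported where $\pm\xi \ge 0$, the identity $L[\xi] = L[\xi_+] + L[\xi_-]$ together with $L[\xi_\pm] \ge 0$ (the form-closure of $h$ remains non-negative on $\dot H^1$ by \eqref{1: hardyassumption}) forces $\xi_+$ and $\xi_-$ to be nonnegative, nontrivial minimizers each vanishing on a set of positive measure, contradicting unique continuation.

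For the coercivity \eqref{1: distance to next virtual level} I argue by contradiction, paralleling the proof of Lemma \ref{1:lem 2}. If no such $\delta_1$ exists, there is a sequence $(g_n) \subset \dot H^1(\mathbb R^d)$ with $\langle \nabla g_n, \nabla \varphi_1\rangle = 0$, $\|\nabla g_n\| = 1$, and $L[g_n] < 1/n$. Rewritten, this becomes $(1-1/n)\|\nabla g_n\|^2 + \langle V g_n, g_n\rangle = L[g_n] - 1/n < 0$, which is precisely the hypothesis on the approximate sequence used in Lemmas \ref{lemma inside proof 1} and \ref{1Lem 3}. I would apply the IMS-style localization of Lemma \ref{eq: cmp1} together with the Agmon-type bound \eqref{1: assumption for agmon} to extract a subsequence converging weakly in $\dot H^1$ to a nonzero limit $\tilde\varphi_1$ which, by the argument of Lemma \ref{1Lem 3}, satisfies $\|\nabla\tilde\varphi_1\|^2 + \langle V\tilde\varphi_1,\tilde\varphi_1\rangle = 0$. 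Since the gradient bilinear form $\psi \mapsto \langle \nabla\psi, \nabla\varphi_1\rangle$ is continuous for the weak topology on $\dot H^1(\mathbb R^d)$, the orthogonality passes to the limit: $\langle \nabla\tilde\varphi_1, \nabla\varphi_1\rangle = 0$. Thus $\tilde\varphi_1$ is a second minimizer of $L$ linearly independent of $\varphi_1$, contradicting the uniqueness just established.

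The hard part is the uniqueness step. The main technical obstacle I expect to encounter there is a careful application of the Schechter--Simon unique continuation theorem to $\dot H^1$-solutions that need not lie in $L^2(\mathbb R^d)$, along with verifying that the nodal decomposition $\xi = \xi_+ - \xi_-$ is admissible in the extended form domain. Once uniqueness is in place, the coercivity step is essentially a rerun of the compactness and weak-limit machinery already developed in Lemmas \ref{lemma inside proof 1} and \ref{1Lem 3}, with the sole new ingredient being the weak continuity of the gradient inner product on $\dot H^1(\mathbb R^d)$, which transfers the orthogonality condition to the limit.
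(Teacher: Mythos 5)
Your proposal is correct and follows essentially the same route as the paper: the failure of \eqref{1: distance to next virtual level} produces, via the weak-limit machinery of Lemmas \ref{lemma inside proof 1} and \ref{1Lem 3} together with the weak continuity of $\psi\mapsto\langle\nabla\psi,\nabla\varphi_1\rangle$ on $\dot H^1(\mathbb{R}^d)$, a second minimizer linearly independent of $\varphi_1$; a sign-changing element of their span is then eliminated through the splitting $L[f]=L[f_+]+L[f_-]$, the non-negativity of $L$ on $\dot H^1(\mathbb{R}^d)$, and the Schechter--Simon unique continuation theorem. The only real deviation is how the sign-changing minimizer is manufactured: you normalize $\varphi_1\geq 0$ by a Perron--Frobenius replacement $\varphi_1\mapsto|\varphi_1|$ and subtract a suitable multiple, whereas the paper orthogonalizes $\varphi_1$ and $\tilde\varphi_1$ in the weighted space $L^2\bigl((1+|x|)^{-2}\,\mathrm{d}x\bigr)$ (available by Hardy's inequality); both devices yield a minimizer with nontrivial positive and negative parts, and both share the same small gloss that you correctly flag --- namely that $f_\pm$ vanish on \emph{open} sets (as unique continuation requires) needs local continuity of the minimizer, a point the paper also leaves implicit.
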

\begin{proof}
We will prove the Lemma by contradiction. Assume that there is no such constant $\delta_1>0$, then there exists a sequence of functions $(\psi_n^{(1)})_{n\in\mathbb{N}}$ in $\dot{H}^1(\mathbb{R}^d)$, such that 
\begin{equation}
\Vert \nabla \psi_n^{(1)} \Vert =1, \quad \langle \nabla \psi_n^{(1)},\nabla \varphi_1 \rangle=0,  \quad \left(1-n^{-1}\right)\Vert \nabla \psi_n^{(1)} \Vert^2 + \langle V\psi_n^{(1)},\psi_n^{(1)} \rangle<0.
\end{equation}
Moreover, there exists a subsequence (which by abuse of notation is denoted by $(\psi_n^{(1)})_{n\in\mathbb{N}}$ again) and a function $\tilde{\varphi}_1 \in \dot{H}^1(\mathbb{R}^d)$, such that $\psi_n^{(1)}\rightharpoonup \tilde{\varphi}_1$ in $\dot{H}^1(\mathbb{R}^d)$ and therefore $\psi_n^{(1)} \rightarrow \tilde{\varphi}_1$ in $L_{\mathrm{loc}}^2(\mathbb{R}^d)$. Obviously, $\varphi_1$ and $\tilde{\varphi}_1$ are linearly independent and $\tilde{\varphi}_1$ is a minimizer of the quadratic form of $h$ as well.  Since \eqref{1: euler lagrange} holds for $\psi=\tilde{\varphi}_1$, any linear combination of $\varphi_1$ and $\tilde{\varphi}_1$ is also a minimizer of the quadratic form of $h$. By Hardy's inequality both functions $\varphi_1$ and $\tilde{\varphi}_1$ belong to the weighted $L^2$-space with weight $(1+|\cdot|)^{-2}$. Since the subspace of linear combinations of $\varphi_1$ and $\tilde{\varphi}_1$ is two-dimensional, it contains two functions orthogonal with respect to the weighted scalar product. At least one of these functions, say $f$, has a nontrivial positive part $f_+$ and a nontrivial negative part $f_-$, which are also minimizers of the quadratic form of the operator $h$ and satisfy the corresponding Schr\"odinger equation. Functions $f_+$ and $f_-$ are zero on some open sets. Since $V$ satisfies \eqref{bbb}, the unique continuation Theorem \cite[Theorem 2.1]{schechter} yields $f_+=f_-=0$. This contradiction completes the proof of statement \textbf{(ii)} of Theorem \ref{thm: Theorem 1}.
\end{proof}

The proof of statement \textbf{(iii)} is similar to the proof of Lemma \ref{1: lem uniformly bound of psi n} and Lemma \ref{1Lem 1} with replacing the function $G_\varepsilon$ in \eqref{1: function agmon} by the function
\begin{equation}\label{agmon with exp}
J_{\varepsilon}=\mathrm{exp}\left(\alpha_0\kappa^{-1}\frac{|x|^\kappa}{1+\varepsilon|x|^\kappa}\right)\chi_R(x)
\end{equation}
with $\chi_R(x)$ defined by \eqref{11chi}. This completes the proof of Theorem \ref{thm: Theorem 1}. \qed
\section{Resonances and eigenfunctions on subspaces with fixed symmetries}
Let $h=-\Delta+V$ be invariant under action of a symmetry group $G$ and let $\sigma$ be a type of irreducible representation of $G$. Denote by $P^\sigma$ the projection in $L^2(\mathbb{R}^d)$ onto the subspace of functions transformed according to the representation $\sigma$. In the following we assume that for every function $\psi \in L^2(\mathbb{R}^d)$ and $\chi \in C_0(\mathbb{R}^d)$ with $\chi(x)=\chi(|x|)$ the condition $P^\sigma \psi =\psi$ implies $P^\sigma \chi \psi = \chi \psi$. We denote $h^{\sigma}=P^{\sigma} h$, $h_\varepsilon^{\sigma}=P^{\sigma} h_\varepsilon$, $\mathcal{H}^{\sigma}=P^{\sigma}H^1(\mathbb{R}^d)$ and $\dot{\mathcal{H}}^{\sigma}=P^{\sigma}\dot{H}^1(\mathbb{R}^d)$.
\begin{thm}\label{thm: Theorem 1'}
Suppose that $V$ satisfies \eqref{eq: A1}. Further, assume that
\begin{equation}
h^\sigma\geq 0 \qquad \text{and} \qquad  \inf \mathcal{S}\left( h_\varepsilon^\sigma\right) <0
\end{equation}
holds for any $\varepsilon \in (0,1)$. If there exist constants $\alpha_0>0$, $b>0$ and $\gamma_0\in(0,1)$, such that for any function $\psi\in\mathcal{H}^{\sigma}$ with  $\supp \psi \subset \{x\in \mathbb{R}^d:\ |x|\geq b\}$ we have
\begin{equation}\label{1': assumption for agmon}
\langle h^\sigma\psi,\psi \rangle- \gamma_0\Vert \nabla \psi \Vert^2 - \langle \alpha_0^2|x|^{-2}\psi,\psi \rangle \geq 0,
\end{equation}
then the following assertions hold:
\begin{enumerate}
\item[\textbf{(i)}] If $\alpha_0>1$, then zero is an eigenvalue of $h^\sigma$ with finite degeneracy. Denote by $\mathcal{W}_0$ the corresponding eigenspace. Then for any $\varphi_0 \in \mathcal{W}_0$ we have 
\begin{equation}
\nabla \left(|x|^{\alpha_0} \varphi_0 \right) \in L^2(\mathbb{R}^d) \qquad \text{and} \qquad (1+|x|)^{\alpha_0-1}\varphi_0 \in L^2(\mathbb{R}^d).
\end{equation}
Moreover, there exists a constant $\delta_0>0$, such that for any function $\psi \in \mathcal{H}^{\sigma}$ with $\langle \nabla\psi,\nabla \varphi_0\rangle =0$ for all $\varphi_0 \in \mathcal{W}_0$ it holds
\begin{equation}
\langle h^\sigma \psi,\psi \rangle \geq \delta_0 \Vert \nabla \psi \Vert^2.
\end{equation}
\item[\textbf{(ii)}] If $\alpha_0 \in (0,1)$ and in addition
\begin{equation}
\langle |V|\psi,\psi \rangle \leq C\Vert \nabla \psi\Vert^2
\end{equation}
holds for any function $\psi \in \dot{\mathcal{H}}^{\sigma}$ and some constant $C>0$, then there exists a finite-dimensional subspace $\mathcal{W}_1\subset \dot{\mathcal{H}}^{\sigma}$, such that for any function $\varphi_1 \in \mathcal{W}_1$ it holds
\begin{equation}
\Vert \nabla \varphi_1 \Vert^2+ \langle V\varphi_1,\varphi_1 \rangle = 0.
\end{equation}
Moreover, it holds 
\begin{equation}
\nabla \left(|x|^{\alpha_0} \varphi_1 \right) \in L^2(\mathbb{R}^d) \qquad \text{and} \qquad (1+|x|)^{\alpha_0-1}\varphi_1 \in L^2(\mathbb{R}^d)
\end{equation}
and there exists a constant $\delta_1>0$, such that for any function $\psi \in \dot{\mathcal{H}}^{\sigma}$ satisfying the condition $\langle \nabla\psi,\nabla \varphi_1\rangle =0$ for all $\varphi_1 \in \mathcal{W}_1$ we have
\begin{equation}
\langle h^\sigma \psi,\psi \rangle \geq \delta_1 \Vert \nabla \psi \Vert^2.
\end{equation}
\item[\textbf{(iii)}] If instead of \eqref{1': assumption for agmon} a stronger inequality
\begin{equation}
\langle h^\sigma\psi,\psi\rangle - \gamma_0 \Vert \nabla \psi \Vert^2 - \langle \alpha_0^2|x|^{-\beta} \psi,\psi \rangle \geq 0
\end{equation}
holds for some constant $\alpha_0>0$ and $\beta\in(0,2)$, then each function $\varphi_0 \in \mathcal{W}_0$ in part \textbf{(i)} of the theorem satisfies
\begin{equation}
\exp\left( \alpha_0\kappa^{-1}|x|^{\kappa} \right)\varphi_0 \in L^2(\mathbb{R}^d), \qquad \text{where}\quad \kappa=1-\frac{\beta}{2}.
\end{equation}
\end{enumerate}
\end{thm}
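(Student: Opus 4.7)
The plan is to run the proof of Theorem \ref{thm: Theorem 1} inside the symmetric subspace, with $\mathcal{H}^\sigma$ and $\dot{\mathcal{H}}^\sigma$ in the roles of $H^1(\mathbb{R}^d)$ and $\dot{H}^1(\mathbb{R}^d)$. The crucial observation is that the standing hypothesis on $P^\sigma$, namely that multiplication by a radial function preserves $\mathcal{H}^\sigma$, is precisely what is needed for every auxiliary function appearing in the proof of Theorem \ref{thm: Theorem 1} to remain admissible. The IMS cutoffs $\chi_1,\chi_2$ of Lemma \ref{eq: cmp1}, the Agmon weight $G_\varepsilon$ of \eqref{1: function agmon}, and the exponential weight $J_\varepsilon$ of \eqref{agmon with exp} are all radial, so each of the IMS, Hardy, and Agmon computations from the unrestricted proof transplants verbatim to $\mathcal{H}^\sigma$. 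Furthermore, Lemma \ref{A3} applied to $h^\sigma$ provides a sequence of negative eigenfunctions $\psi_n \in \mathcal{H}^\sigma$ of $h^\sigma_{n^{-1}}$ for every $n \in \mathbb{N}$.

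The construction of the zero-mode subspaces $\mathcal{W}_0$ and $\mathcal{W}_1$ proceeds as follows. For each $n$, select an orthonormal basis in the gradient norm of the negative eigenspace of $h^\sigma_{n^{-1}}$ and, by a diagonal extraction, pass to a subsequence along which each selected basis element converges weakly in $\dot{\mathcal{H}}^\sigma$ and strongly in $L^2_{\mathrm{loc}}(\mathbb{R}^d)$. Lemmas \ref{lemma inside proof 1}, \ref{1: lem uniformly bound of psi n} and \ref{1Lem 1} go through without modification and show that every weak limit $\varphi_0$ is non-zero, satisfies $h^\sigma\varphi_0=0$, and obeys $\nabla(|x|^{\alpha_0}\varphi_0)\in L^2(\mathbb{R}^d)$ and $(1+|x|)^{\alpha_0-1}\varphi_0\in L^2(\mathbb{R}^d)$. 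Define $\mathcal{W}_0$ as the linear span of all such weak limits. For part \textbf{(ii)} the Hardy-type hypothesis is used exactly as in Lemma \ref{1Lem 3} to produce an analogous subspace $\mathcal{W}_1 \subset \dot{\mathcal{H}}^\sigma$ of resonances with the same polynomial decay. Part \textbf{(iii)} is obtained by replacing $G_\varepsilon$ by $J_\varepsilon$ as in the full-space argument.

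The hardest step, and the main structural difference from Theorem \ref{thm: Theorem 1}, is the finite-dimensionality of $\mathcal{W}_0$ and $\mathcal{W}_1$. Goelden's non-degeneracy theorem \cite{Goelden} pertains to Schr\"odinger operators on the whole $L^2(\mathbb{R}^d)$ but not to their restrictions to $\mathcal{H}^\sigma$, so simplicity is unavailable here. Instead, assumption \eqref{1': assumption for agmon} together with the IMS partition of Lemma \ref{eq: cmp1} shows that for every $\varepsilon \in (0,\gamma_0/2)$ the operator $h^\sigma_\varepsilon$ is non-negative outside a fixed ball $\{|x|\le \tilde{b}\}$; its restriction to the remaining bounded region therefore has only finitely many negative eigenvalues, with multiplicity bounded uniformly in $\varepsilon$ by a standard min-max count against Dirichlet eigenvalues on the ball. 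This uniform bound controls $\dim \mathcal{W}_0$, and an identical argument controls $\dim \mathcal{W}_1$.

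Finally, the gap inequalities in parts \textbf{(i)} and \textbf{(ii)} are proved by the contradiction argument of Lemma \ref{1:lem 2}: a sequence $g_n \in \mathcal{H}^\sigma$ with $\Vert\nabla g_n\Vert=1$, gradient-orthogonal to $\mathcal{W}_0$ and satisfying $\langle h^\sigma_{n^{-1}} g_n, g_n\rangle<0$ would produce, via the extraction procedure above, a non-zero weak limit which is a zero mode of $h^\sigma$ gradient-orthogonal to $\mathcal{W}_0$, contradicting the maximality of $\mathcal{W}_0$. The analogous statement for $\mathcal{W}_1$ follows in the same way. Note that the unique-continuation input of Lemma \ref{210} is not needed here: Theorem \ref{thm: Theorem 1'}\textbf{(ii)} only asserts finite-dimensionality of $\mathcal{W}_1$, not uniqueness, and accordingly the hypothesis \eqref{1: b. simon} of Theorem \ref{thm: Theorem 1} is dropped from the symmetric statement.
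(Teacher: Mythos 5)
Your overall strategy---rerunning the proof of Theorem \ref{thm: Theorem 1} inside $\mathcal{H}^\sigma$ using that the cutoffs $\chi_1,\chi_2$, the weight $G_\varepsilon$ and the weight $J_\varepsilon$ are radial, and extracting finite-dimensionality of $\mathcal{W}_0$ and $\mathcal{W}_1$ from the localization/Dirichlet-eigenvalue count---is exactly what the paper intends: its proof is a two-sentence remark deferring to Lemma \ref{A3}, whose proof is the very argument you reproduce and whose accompanying remark covers the symmetry-restricted case. Your treatment of part \textbf{(ii)}, including the observation that the unique-continuation hypothesis \eqref{1: b. simon} is simply dropped, is also sound: there the weak-limit argument of Lemma \ref{1Lem 3} applies to arbitrary trial functions because \eqref{1: hardyassumption} controls $\langle V\psi,\psi\rangle$ by $\Vert\nabla\psi\Vert^2$ alone, so a minimizing sequence gradient-orthogonal to the full solution space $W$ of Lemma \ref{A3}\textbf{(iv)} does yield a contradiction.

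The gap is in your proof of the spectral-gap inequality in part \textbf{(i)}. You take $g_n\in\mathcal{H}^\sigma$ with $\Vert\nabla g_n\Vert=1$, $\langle h^\sigma_{n^{-1}}g_n,g_n\rangle<0$, gradient-orthogonal to $\mathcal{W}_0$, and assert that ``the extraction procedure above'' yields a non-zero weak limit which is a zero mode. But that procedure rests on Lemmas \ref{1: lem uniformly bound of psi n} and \ref{1Lem 1}, and both use the eigenvalue equation in an essential way: Lemma \ref{1: lem uniformly bound of psi n} multiplies \eqref{1: multiply and int by parts} by $G_\varepsilon^2\overline{\psi_n}$ to obtain the uniform bound $\Vert(1+|x|)^{\alpha_0-1}\psi_n\Vert\le C$, and Lemma \ref{1Lem 1} needs exactly this tightness to upgrade $L^2_{\mathrm{loc}}$-convergence to $L^2$-convergence, without which the form bound \eqref{eq: A1} (which involves $\Vert\cdot\Vert_{L^2}$, not $\Vert\nabla\cdot\Vert$) does not give $\langle Vg_n,g_n\rangle\to\langle V\tilde\varphi,\tilde\varphi\rangle$. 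For arbitrary $g_n$ there is not even a uniform $L^2$ bound, so the weak limit need not lie in $L^2$, let alone be an eigenfunction, and no contradiction with the ``maximality'' of $\mathcal{W}_0$ arises. The step you must generalize instead is the chain of Lemmas \ref{1: lemma all eigenfunctions converge}--\ref{1:lem 2}, where Goelden's non-degeneracy is used: prove that for large $n$ the number of negative eigenvalues of $h^\sigma_{n^{-1}}$, counted with multiplicity, is at most $k=\dim\mathcal{W}_0$ (take $k+1$ mutually $L^2$-orthogonal eigenfunctions, use the uniform decay of Lemma \ref{1: lem uniformly bound of psi n} to pass to strong $L^2$ limits in $\mathcal{W}_0$, and check that linear independence survives); then $\mathrm{span}\left(\mathcal{W}_0\cup\{g_n\}\right)$ is a $(k+1)$-dimensional subspace on which $h^\sigma_{n^{-1}}$ is negative definite, since the cross terms vanish exactly as in Lemma \ref{1:lem 2}---a contradiction. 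A related minor point: $\mathcal{W}_0$ should be taken to be the full zero-eigenspace of $h^\sigma$ (finite-dimensional by Lemma \ref{A3}\textbf{(iii)}), with the decay estimates obtained for each of its elements by applying the Agmon computation directly to $h^\sigma\varphi_0=0$; defining it as the span of particular weak limits leaves open whether an eigenfunction produced by a contradiction argument belongs to it.
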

\begin{proof}
The proof of Theorem \ref{thm: Theorem 1'} is a straightforward generalization of the proof of Theorem \ref{thm: Theorem 1}. The main difference between these two theorems is that in Theorem \ref{thm: Theorem 1} we have non-degenerate minimizers $\varphi_0$ or $\varphi_1$ of the quadratic form of the operator $h$ in the spaces $H^1(\mathbb{R}^d)$ and $\dot{H}^1(\mathbb{R}^d)$, respectively. In Theorem \ref{thm: Theorem 1'} the corresponding subspaces $\mathcal{W}_0$ and $\mathcal{W}_1$ are not necessarily one-dimensional. However, due to Lemma \ref{A3} (see in Appendix) they are always finite-dimensional.
\end{proof}
\begin{rem}
Theorem \ref{thm: Theorem 1} and Theorem \ref{thm: Theorem 1'} require $d\geq 3$. We used this condition twice. At first, we used Hardy's inequality to compensate the localization error $\varepsilon |x|^{-2}$ with a part of the kinetic energy in Lemma \ref{eq: cmp1}. Secondly, we used the Rellich–Kondrachov theorem in the proof of Theorem \ref{thm: Theorem 1} to obtain convergence of the constructed subsequence in $L_{\mathrm{loc}}^2(\mathbb{R}^d)$. If the dimension is one or two, but the operator $h$ is considered on a subspace with a fixed symmetry $\sigma$, such that Hardy's inequality
\begin{equation}
\Vert \nabla \psi \Vert^2 \geq C\Vert \psi |x|^{-1} \Vert^2
\end{equation}
holds for some $C>0$, the statement of Theorem \ref{thm: Theorem 1'} remains true.
\end{rem}
\section{Applications}
\subsection{Virtual levels of one-body Schr\"odinger operators}
The main goal of our paper is to study decay properties of virtual levels of multi-particle Schr\"odinger operators. However, in order to show how effective Theorem \ref{thm: Theorem 1} is we start with the easiest case of one-particle Schr\"odinger operators. Some of the results below are already known. 
\\ 
For $\varepsilon\in (0,1)$ we consider
\begin{equation}
h=-\Delta+V \qquad \qquad \text{and} \qquad \qquad  h_\varepsilon=h+\varepsilon\Delta
\end{equation}
in $L^2(\mathbb{R}^d),$ where $d\geq 3$ and $V$ satisfies \eqref{eq: A1}.
\begin{thm}\label{aaaa}(Short-range potentials)
Let $d\geq 3, h\geq 0$ and assume that for any sufficiently small $\varepsilon>0$ we have
\begin{equation}
\inf \mathcal{S}_{\mathrm{ess}}(h_\varepsilon)=0 \qquad \text{and} \qquad \inf \mathcal{S}(h_\varepsilon)<0.
\end{equation}
Further, assume that one of the following conditions is fulfilled:
\begin{enumerate}
\item[\textbf{(i)}] $d=3$ and $V\in L^{\frac{3}{2}}(\mathbb{R}^3)$,
\item[\textbf{(ii)}] $d=4$ and $V\in L^2(\mathbb{R}^4)\cap L^{2+\mu}(\mathbb{R}^4)$ for some $\mu>0$,
\item[\textbf{(iii)}] $d\geq 5$ and $V\in L^{\frac{d}{2}}(\mathbb{R}^d)$.
\end{enumerate}
Then there exists a solution $\varphi_0 \in \dot{H}^1(\mathbb{R}^d)$ of the equation
\begin{equation}
\Vert \nabla \varphi_0 \Vert^2+\langle V\varphi_0,\varphi_0\rangle =0.
\end{equation}
For any $0\leq \alpha_0 < \frac{d-2}{2}$ function $\varphi_0$ satisfies
\begin{equation}\label{ppp}
\nabla \left( |x|^{\alpha_0}\varphi_0 \right)\in L^2(\mathbb{R}^d) \qquad \text{and} \qquad (1+|x|)^{\alpha_0-1} \varphi_0 \in L^2(\mathbb{R}^d).
\end{equation}
\end{thm}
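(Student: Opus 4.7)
The plan is to derive Theorem~\ref{aaaa} as a direct application of Theorem~\ref{thm: Theorem 1}. Since the conclusion only asserts $\varphi_0 \in \dot{H}^1(\mathbb{R}^d)$, the natural route is to apply part \textbf{(ii)} of that theorem when $\alpha_0 \in (0,1)$, and part \textbf{(i)} when $\alpha_0 > 1$ (which only occurs in the regime $d \geq 5$). In either case the task reduces to verifying the three abstract hypotheses of Theorem~\ref{thm: Theorem 1}: the relative form-bound zero condition \eqref{eq: A1}, the global $\dot{H}^1$-bound \eqref{1: hardyassumption}, and, for each $\alpha_0 < (d-2)/2$ together with a suitably small $\gamma_0 > 0$ and large $b > 0$, the Agmon-type inequality \eqref{1: assumption for agmon}. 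Existence of $\varphi_0$ then follows from Theorem~\ref{thm: Theorem 1}, and the weighted bound \eqref{ppp} is delivered by the same result.

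The main step -- and the only one where the sharp constraint $\alpha_0 < (d-2)/2$ actually enters -- is the verification of \eqref{1: assumption for agmon}. For $\psi \in H^1(\mathbb{R}^d)$ with $\supp \psi \subset \{|x| \geq b\}$, H\"older's inequality with exponents $d/2$ and $d/(d-2)$, together with the Sobolev embedding $\dot{H}^1(\mathbb{R}^d) \hookrightarrow L^{2d/(d-2)}(\mathbb{R}^d)$, yields
\begin{equation*}
|\langle V \psi, \psi \rangle| \leq \bigl\| V \mathbf{1}_{\{|x| \geq b\}} \bigr\|_{L^{d/2}} \cdot \|\psi\|_{L^{2d/(d-2)}}^2 \leq S \bigl\| V \mathbf{1}_{\{|x| \geq b\}} \bigr\|_{L^{d/2}} \|\nabla \psi\|^2 =: \eta(b)\,\|\nabla \psi\|^2,
\end{equation*}
where $S$ denotes the Sobolev constant and $L^{d/2}(\mathbb{R}^4) = L^2(\mathbb{R}^4)$ handles case \textbf{(ii)}. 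Dominated convergence gives $\eta(b) \to 0$ as $b \to \infty$. Combined with the sharp Hardy inequality $\|\nabla \psi\|^2 \geq \bigl(\tfrac{d-2}{2}\bigr)^2 \langle |x|^{-2}\psi, \psi \rangle$, the left-hand side of \eqref{1: assumption for agmon} is bounded below by
\begin{equation*}
\bigl( 1 - \gamma_0 - \eta(b) \bigr) \|\nabla \psi\|^2 - \alpha_0^2 \langle |x|^{-2}\psi, \psi \rangle \geq \left( \bigl(1 - \gamma_0 - \eta(b)\bigr)\Bigl(\tfrac{d-2}{2}\Bigr)^{\!2} - \alpha_0^2 \right) \langle |x|^{-2}\psi, \psi \rangle,
\end{equation*}
which is nonnegative once $\gamma_0$ is chosen small and $b$ is taken large enough, precisely because $\alpha_0 < (d-2)/2$.

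The remaining hypotheses are routine consequences of the integrability of $V$. The form-bound zero condition \eqref{eq: A1} follows from the standard truncation $V = V \mathbf{1}_{\{|V| > M\}} + V \mathbf{1}_{\{|V| \leq M\}}$: the first summand has arbitrarily small $L^{d/2}$-norm as $M \to \infty$ and is absorbed into $\varepsilon \|\nabla \psi\|^2$ by the same H\"older--Sobolev chain, while the second is $L^\infty$-bounded and is absorbed into $C(\varepsilon)\|\psi\|^2$. The global bound \eqref{1: hardyassumption} is the same H\"older--Sobolev estimate performed without the spatial localization. The slightly stronger assumption $V \in L^{2 + \mu}(\mathbb{R}^4)$ in case \textbf{(ii)} is what allows the critical Sobolev exponent $2d/(d-2) = 4$ to be handled cleanly (and would secure \eqref{1: b. simon} if uniqueness were desired). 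With the hypotheses verified, Theorem~\ref{thm: Theorem 1} yields a nontrivial $\varphi_0 \in \dot{H}^1(\mathbb{R}^d)$ with $\nabla(|x|^{\alpha_0}\varphi_0) \in L^2(\mathbb{R}^d)$; applying Hardy's inequality to the weighted function then gives $(1+|x|)^{\alpha_0-1}\varphi_0 \in L^2(\mathbb{R}^d)$. Since the verification can be carried out for every $\alpha_0 \in [0,(d-2)/2)$ simultaneously on the same minimizer, the decay bounds hold for the full range claimed. The main obstacle throughout is the sharp-constant bookkeeping in \eqref{1: assumption for agmon}, which is exactly what forces the restriction $\alpha_0 < (d-2)/2$.
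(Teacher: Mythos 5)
Your proposal is correct and follows essentially the same route as the paper: the paper's (very terse) proof also verifies \eqref{eq: A1} and \eqref{1: hardyassumption} via the Reed--Simon/Sobolev argument, establishes $\varepsilon\Vert\nabla\varphi\Vert^2+\langle V\varphi,\varphi\rangle\geq 0$ for functions supported in $\{|x|\geq b\}$ using exactly the H\"older--Sobolev smallness of $\Vert V\mathbf{1}_{\{|x|\geq b\}}\Vert_{L^{d/2}}$, and then invokes Hardy's inequality to obtain \eqref{1: assumption for agmon} for every $\alpha_0<\frac{d-2}{2}$ before applying Theorem \ref{thm: Theorem 1}. Your write-up simply makes explicit the constant bookkeeping and the case split between parts \textbf{(i)} and \textbf{(ii)} of Theorem \ref{thm: Theorem 1} that the paper leaves implicit.
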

\begin{rem}
Theorem \ref{aaaa} implies in particular that for $d\geq 5$ virtual levels of $h$ are eigenvalues.
\end{rem}
\begin{proof}
According to \cite{reed}, p.170-171 and Sobolev's inequality the potential $V$ satisfies \eqref{eq: A1} and \eqref{1: hardyassumption}. Moreover, conditions \textbf{(i)}-\textbf{(iii)} and Sobolev's inequality imply that for any $\varepsilon>0$ and sufficiently large $b>0$ 
\begin{equation}
\varepsilon \Vert \nabla \varphi \Vert^2 + \langle V\varphi,\varphi\rangle \geq 0
\end{equation}
holds for any $\varphi \in \dot{H}^1(\mathbb{R}^d)$ with $\supp \varphi \subset \{x \in \mathbb{R}^d : \ |x|\geq b\}$. Applying Hardy's inequality we see that condition \eqref{1: assumption for agmon} of Theorem \ref{thm: Theorem 1} is fulfilled. This yields the result.
\end{proof}

\begin{thm}\label{a}(Long-range potentials positive at infinity)
Let $d\geq3, h\geq 0$ and assume that for any sufficiently small $\varepsilon>0$ we have
\begin{equation}
\inf \mathcal{S}_{\mathrm{ess}}(h_\varepsilon)=0 \qquad \text{and} \qquad \inf \mathcal{S}(h_\varepsilon)<0.
\end{equation}
Further, assume that
\begin{enumerate}
\item[\textbf{(i)}] $V\in L_{\mathrm{loc}}^{\frac{d}{2}}(\mathbb{R}^d)$ for $d\not = 4$ and $V\in L_{\mathrm{loc}}^{2+\mu}(\mathbb{R}^d)$ for some $\mu>0$ if $d = 4$.
\item[\textbf{(ii)}] There exist constants $A_1,A_2\geq 0,\beta_1> 0$ and $\beta_2\in(0,2]$ with
\begin{equation}
\beta_1|x|^{-\beta_2} \leq V(x)\leq A_1 \qquad \text{for} \qquad |x|\geq A_2.
\end{equation}
\item[\textbf{(iii)}] $V(x)\rightarrow 0$ as $|x|\rightarrow \infty$.
\end{enumerate}
Then there exists a solution $\varphi_0 \in \dot{H}^1(\mathbb{R}^d)$ of the equation
\begin{equation}
\Vert \nabla \varphi_0 \Vert^2+\langle V\varphi_0,\varphi_0\rangle =0.
\end{equation}
If $\beta_2=2$, then for any $0\leq \alpha_0< \sqrt{\beta_1+4^{-1}(d-2)^2}$ the function $\varphi_0$ satisfies
\begin{equation}\label{ppp}
\nabla \left( |x|^{\alpha_0}\varphi_0 \right)\in L^2(\mathbb{R}^d) \qquad \text{and} \qquad (1+|x|)^{\alpha_0-1} \varphi_0 \in L^2(\mathbb{R}^d).
\end{equation}
If $\beta_2<2$, then $\varphi_0$ satisfies
\begin{equation}
\exp\left( \beta_1|x|^\kappa\right)\varphi_0\in L^2(\mathbb{R}^d), \qquad \text{where} \quad \kappa=1-\frac{\beta_2}{2}.
\end{equation}
\end{thm}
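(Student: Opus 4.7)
The plan is to reduce Theorem \ref{a} to Theorem \ref{thm: Theorem 1} by translating condition (ii) into the Agmon-type inequality \eqref{1: assumption for agmon} (or the stronger one of part (iii)) with the help of Hardy's inequality. First I would verify the standing assumption \eqref{eq: A1}. Using (iii) I fix, for any prescribed $\varepsilon>0$, a radius $R$ with $\sup_{|x|\geq R}|V|<\varepsilon$ and decompose $V=V\mathbf{1}_{|x|\leq R}+V\mathbf{1}_{|x|>R}$. By (i) the local piece lies in $L^{d/2}(\mathbb R^d)$ (respectively $L^{2+\mu}(\mathbb R^d)$ for $d=4$) and yields a form bound of the shape $\varepsilon'\|\nabla\psi\|^2+C(\varepsilon')\|\psi\|^2$ via Sobolev/KLMN, while the tail contributes at most $\varepsilon\|\psi\|^2$; together this gives relative form bound zero.

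Next I localize outside a ball. For $\psi$ supported in $\{|x|\geq b\}$ with $b\geq A_2$, condition (ii) gives
\begin{equation*}
\langle h\psi,\psi\rangle \;\geq\; \|\nabla\psi\|^2 + \beta_1\,\langle |x|^{-\beta_2}\psi,\psi\rangle.
\end{equation*}
If $\beta_2=2$, I write $\|\nabla\psi\|^2=\gamma_0\|\nabla\psi\|^2+(1-\gamma_0)\|\nabla\psi\|^2$ and apply Hardy's inequality to the $(1-\gamma_0)$-fraction to arrive at
\begin{equation*}
\langle h\psi,\psi\rangle-\gamma_0\|\nabla\psi\|^2 \;\geq\; \Bigl[\beta_1+(1-\gamma_0)\tfrac{(d-2)^2}{4}\Bigr]\langle |x|^{-2}\psi,\psi\rangle.
\end{equation*}
Thus for any prescribed $\alpha_0<\sqrt{\beta_1+(d-2)^2/4}$, a sufficiently small $\gamma_0\in(0,1)$ places us in the setting of \eqref{1: assumption for agmon}. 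Applying Theorem \ref{thm: Theorem 1}(i) when $\alpha_0>1$, and Theorem \ref{thm: Theorem 1}(ii) when $\alpha_0\in(0,1)$, delivers the solution $\varphi_0\in\dot H^1(\mathbb R^d)$ and the weighted $L^2$-bounds in \eqref{ppp}. The additional hypothesis \eqref{1: hardyassumption} needed in the second subcase is verified by the same local/tail decomposition of $V$ together with the Sobolev embedding $\dot H^1(\mathbb R^d)\hookrightarrow L^{2d/(d-2)}(\mathbb R^d)$.

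For $\beta_2<2$, the estimate above with any $\gamma_0<1$ already furnishes the stronger hypothesis of Theorem \ref{thm: Theorem 1}(iii) with $\alpha_0^2\leq\beta_1$ and $\beta=\beta_2$. Since part (iii) returns the function $\varphi_0$ of part (i), I must first confirm the $|x|^{-2}$-version of the inequality with some $\alpha_0'>1$; this follows from $|x|^{-\beta_2}\geq b^{\,2-\beta_2}|x|^{-2}$ on $\supp\psi$, which gives $(\alpha_0')^2=\beta_1 b^{\,2-\beta_2}$, larger than $1$ once $b$ is chosen sufficiently large. Part (iii) of Theorem \ref{thm: Theorem 1} then yields the exponential decay $\exp(\alpha_0\kappa^{-1}|x|^\kappa)\varphi_0\in L^2(\mathbb R^d)$ with $\kappa=1-\beta_2/2$ for every $\alpha_0<\sqrt{\beta_1}$, which is the content of the exponential decay claim.

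The main technical point I foresee is the borderline subcase $\beta_2=2$ in dimension $d=3$ with small $\beta_1<3/4$, where the admissible $\alpha_0$ falls in $(0,1)$ and one must invoke Theorem \ref{thm: Theorem 1}(ii) rather than (i); here the solution is produced only in $\dot H^1(\mathbb R^d)$ and need not be an eigenfunction, but this is precisely what the statement of Theorem \ref{a} allows. Verifying \eqref{1: hardyassumption} in this subcase is the only place where the interplay between the local integrability (i) and the decay (iii) of $V$ is delicate, and it is handled by the decomposition argument outlined above.
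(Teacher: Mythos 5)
Your reduction to Theorem \ref{thm: Theorem 1} is the intended route (the paper proves the companion Theorem \ref{aaaa} exactly this way and leaves Theorem \ref{a} without a written proof), and most of it is correct: the verification of \eqref{eq: A1}, the passage from $V\geq\beta_1|x|^{-\beta_2}$ on $\{|x|\geq b\}$ to \eqref{1: assumption for agmon} with $\alpha_0^2=\beta_1+(1-\gamma_0)\tfrac{(d-2)^2}{4}$ via Hardy's inequality, and the observation that for $\beta_2<2$ one first obtains the $|x|^{-2}$-version with $(\alpha_0')^2=\beta_1 b^{2-\beta_2}>1$ so that parts (i) and then (iii) of Theorem \ref{thm: Theorem 1} apply, are all sound.

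The gap sits exactly in the subcase you flag as delicate, $d=3$, $\beta_2=2$, $\beta_1\leq 3/4$, where the admissible $\alpha_0$ lies in $(0,1)$ and you must invoke Theorem \ref{thm: Theorem 1}(ii): your verification of \eqref{1: hardyassumption} does not work. The hypotheses of Theorem \ref{a} only give $\beta_1|x|^{-2}\leq V\leq A_1$ and $V\to 0$ at infinity; they do not give an upper bound $V\leq C|x|^{-2}$. The tail $V\mathbf{1}_{\{|x|>R\}}$ is then never in $L^{d/2}$ (its $d/2$-th power dominates $\beta_1^{d/2}|x|^{-d}$, which is not integrable at infinity), so Sobolev gives nothing for the tail, and sup-norm smallness of the tail is useless because functions in $\dot H^1(\mathbb{R}^d)$ need not lie in $L^2$. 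Concretely, $V(x)=|x|^{-1}$ for large $|x|$ in $d=3$ satisfies (i)--(iii) with $\beta_2=2$, yet $\langle V\psi,\psi\rangle\leq C\Vert\nabla\psi\Vert^2$ fails on $\dot H^1(\mathbb{R}^3)$ (test with $\psi\sim|x|^{-1/2-\epsilon}$ at infinity). So in this subcase the hypothesis of Theorem \ref{thm: Theorem 1}(ii) is simply not available from (i)--(iii), and the reduction breaks down; one would have to rerun the limiting argument of Lemma \ref{1Lem 3} using the positivity of $V$ outside a compact set (monotonicity/Fatou for the positive part, Sobolev only for the compactly supported part) instead of \eqref{1: hardyassumption}. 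A minor further point: for $\beta_2<2$, Theorem \ref{thm: Theorem 1}(iii) yields $\exp(\alpha_0\kappa^{-1}|x|^{\kappa})\varphi_0\in L^2$ for every $\alpha_0<\sqrt{\beta_1}$, i.e.\ the constant $\sqrt{\beta_1}\,\kappa^{-1}$ rather than the $\beta_1$ appearing in the statement; these do not coincide for large $\beta_1$, so you should not identify the two without comment.
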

\begin{rem}
Theorem \ref{a} implies in particular that for $d=3$ zero is an eigenvalue of $h$ for $\beta_1>\frac{3}{4}$ and in case $d=4$ zero is an eigenvalue of $h$ for any $\beta_1>0$.
\end{rem}

\subsection{Virtual levels of $N$-body Schr\"odinger operators}
Now we consider a system of $N\geq 3$ quantum particles in dimension $n\geq 3$ with masses $m_i>0, \ i =1,\dots,N$ and position vectors $x_i \in \R^n, \ i=1,\dots, N$. Such a system is described by the Hamiltonian
\begin{equation}\label{Definition Hamiltonian}
H_N = -\sum\limits_{i=1}^N \frac{1}{m_i} \Delta_{x_i} +  \sum\limits_{1\leq i<j \leq N} V_{ij}(x_{ij}), \quad x_{ij}=x_i-x_j
\end{equation}
acting on $L^2\left(\R^{nN}\right)$. The potentials $V_{ij}$ describe the particle pair interaction and in the following we assume that $V_{ij}=V_{ij}^{(1)}+V_{ij}^{(2)}$, such that for some constants $A,C,\nu>0$ we have
\begin{equation}\label{2: assumptions on potential} 
|V_{ij}^{(1)}(x_{ij})| \leq C|x_{ij}|^{-2-\nu}, \ \text{if}\ |x_{ij}|\geq A \qquad \text{and} \qquad V_{ij}^{(1)} \in L_{\mathrm{loc}}^{p}(\mathbb{R}^n),
\end{equation}
with $p>2$ for $n=4$ and $p=\frac{n}{2}$ for $n\neq 4$. Furthermore, we assume that
\begin{equation}\label{2: assumptions on potential part 2} 
V_{ij}^{(2)}\geq 0\quad \text{is bounded and}\quad V_{ij}^{(2)}(x_{ij})\rightarrow 0 \quad \text{as}\quad|x_{ij}|\rightarrow \infty.
\end{equation}
Under these conditions $V_{ij}$ is relatively form-bounded with relative bound zero, i.e. it satisfies \eqref{eq: A1}, see \cite{reed}, p. 170-171.
\subsubsection*{Separation of the center of mass of the system}
We will consider the operator $H_N$ in the center-of-mass frame following \cite{Sigalov}. We introduce the scalar product $\langle \cdot, \cdot\rangle_m$ on $\mathbb{R}^{nN}$ by
\begin{equation}\label{scalar 1}
    \langle x,y\rangle_m = \sum\limits_{i=1}^N m_i \langle x_i ,y_i \rangle, \qquad \vert x\vert^2_m = \langle x,x\rangle_m, \qquad x, y \in \R^{nN}.
\end{equation}
Here we denote by $\langle \cdot, \cdot \rangle $ the standard scalar product on $\R^{n}$. Let $X$ be the space $\mathbb{R}^{nN}$ equipped with the scalar product $\langle \cdot,\cdot \rangle_m$ and let
\begin{equation}\label{R_0[C]}
    X_0 = \left\{x=(x_1,\ldots,x_N) \in X \; : \; \sum\limits_{i=1}^N m_i x_i =0 \right\} 
\end{equation}
be the space of relative positions of the particles and $X_c =X \ominus X_0$ be the space of the center of mass position of the system. Denote by $P_0$ and $P_c$ the corresponding projections from $X$ on $X_0$ and $X_c$, respectively.
\\Furthermore, we introduce $-\Delta, -\Delta_0$ and $-\Delta_c$ as the Laplace-Beltrami operators with respect to \eqref{scalar 1} on $X, X_0$ and $X_c$, respectively. Then, corresponding to $L^2(X)= L^2(X_0)\otimes L^2(X_c)$ we have
\begin{equation}
-\Delta=-\Delta_0 \otimes I+I\otimes (-\Delta_c).
\end{equation}
Since for every $x\in X$ we have
\begin{equation}
(P_0 x)_i-(P_0 x)_j =x_i-x_j,
\end{equation}
it follows that the potential $V(x)=\sum_{1\leq i<j\leq N} V_{ij}(x_{ij})$ satisfies
\begin{equation}
V(x)=V(P_0 x).
\end{equation}
Hence, $H_N$ is unitarily equivalent to the operator
\begin{equation}\label{center of mass}
H\otimes I + I\otimes\left(-\Delta_c\right),
\end{equation}
where 
\begin{equation}\label{H_0[C]}
H=-\Delta_0+V.
\end{equation}
In view of \eqref{center of mass} the center of mass of the system moves like a free particle and the Hamiltonian $H$ corresponds to the relative motion of the system. This procedure is known as the reduction of the center of mass of the system. In the following we only consider the operator $H$.
\subsubsection*{Clusters and cluster Hamiltonians} 
We call an arbitrary non-empty subset $C\subseteq \{1,\ldots,N\}$ a cluster of the system and denote by $|C|$ the number of its particles. Let
\begin{equation}\label{R0 of C}
X_0[C]=\left\{x \in X_0 \; : \; \sum\limits_{i\in C} m_i x_i =0, \ x_j =0, \; j \notin C \right\}
\end{equation}
be the corresponding subspace of the relative positions of the particles within the cluster $C$. Let $-\Delta_0[C]$ be the Laplace-Beltrami operator on $X_0[C]$. We denote the potential of interactions of the particles in $C$ by
\begin{equation}
V[C]=\sum_{i,j\in C, i < j}V_{ij}.
\end{equation}
Then for $1<|C|<N$ the corresponding cluster Hamiltonian with its center of mass removed is given by
\begin{equation}\label{HC without}
H[C]=-\Delta_0[C]+V[C].
\end{equation}
The operator $H[C]$ acts on $L^2(X_0[C])$ and it describes the relative motion of the particles within the cluster $C$ ignoring all the other particles of the system. Note that for $C=\{1,\ldots,N\}$ we have $X_0[C]=X_0$, so we set $H[C]=H$. For $|C|=1$ we have $X_0[C]=\{0\}$, so in this case we set $H[C]=0$.
\subsubsection*{Partitions of the system} 
We say that $Z=(C_1,\ldots,C_p)$ is a partition or a cluster decomposition of the system of order $|Z|=p$, if and only if for all $i,j=1,\ldots,p$ with $i\not=j$ we have
\begin{equation}
C_i\cap C_j= \emptyset \qquad \text{and} \qquad \bigcup\limits_{j=1}^p C_j = \{1,\ldots,N\}.
\end{equation}
We refer to $C\subset Z$ as a cluster of the partition $Z=(C_1,\ldots,C_p)$, if $C=C_i$ for some $i=1,\ldots,p$. Let
\begin{equation}\label{R0 of Z}
   X_0(Z)=\bigoplus \limits_{C_k\subset Z} X_0[C_k] \qquad \text{and} \qquad X_c(Z)= X_0\ominus  X_0(Z).
\end{equation}
This gives rise to the decomposition
\begin{equation}
L^2(X_0(Z))=L^2(X_0[C_1])\otimes \dots \otimes L^2(X_0[C_p]).
\end{equation}
By abuse of notation we denote the operators
\begin{equation}\label{H[C] on X(Z)}
I\otimes \dots \otimes I \otimes (-\Delta_0[C_i]) \otimes I \otimes \dots \otimes I\qquad \text{and} \qquad I\otimes \dots \otimes I \otimes H[C_i] \otimes I \otimes \dots \otimes I
\end{equation}
acting on $L^2(X_0(Z))$ by $-\Delta_0[C_i]$ and $H[C_i]$, respectively. Then the cluster decomposition Hamiltonian acting on $L^2(X_0(Z))$ is defined by
\begin{equation}
    H(Z)= \sum\limits_{C_k\subset Z} H[C_k].
\end{equation}
The operator $H(Z)$ describes the joint internal dynamics of the non-interacting clusters. Let $-\Delta_0(Z)$ be the Laplace-Beltrami operator on $X_0(Z)$. Then
\begin{equation}
-\Delta_0(Z)=-\sum_{C_k\subset Z}\Delta_0[C_k].
\end{equation} 
\\Corresponding to the decomposition $L^2(X_0)=L^2(X_0(Z))\otimes L^2(X_c(Z))$ we will sometimes use the same symbols $H[C_i]$ and $H(Z)$ for the operators acting on $L^2(X_0)$ as
\begin{equation}\label{H(Z) on X}
H[C_i]\otimes I \qquad \text{and} \qquad  H(Z)\otimes I,
\end{equation}
respectively. We denote the intercluster interaction by
\begin{equation}
I(Z)=V-\sum_{C_k \subset Z} V[C_k].
\end{equation}
Then the Hamiltonian $H$ can be written as
\begin{equation}
H=H(Z)\otimes I + I\otimes (-\Delta_c(Z))+I(Z),
\end{equation}
where $-\Delta_c(Z)$ is the Laplace-Beltrami operator on $X_c(Z)$. Denote by $P_0(Z)$ and $P_c(Z)$ the corresponding projections from $X_0$ on $X_0(Z)$ and $X_c(Z)$, respectively. For $x\in X_0$ we set
\begin{equation}
q(Z)=P_0(Z)x\qquad \text{and} \qquad \xi(Z)=P_c(Z)x.
\end{equation}
To emphasize the dependence of $q(Z)$ and $\xi(Z)$ we will write
\begin{equation}
-\Delta_{q(Z)} = - \Delta_0(Z) \qquad \text{and} \qquad -\Delta_{\xi(Z)}=-\Delta_c(Z)
\end{equation}
and
\begin{equation}\label{H decomposed in partitions}
H=-\Delta_{q(Z)}-\Delta_{\xi(Z)}+V \qquad \text{or} \qquad H=H(Z)-\Delta_{\xi(Z)}+I(Z).
\end{equation}
Note that the $i$-th coordinates of $q(Z)$ and $\xi(Z)$ are vectors $q_i$ and $\xi_i$ given by
\begin{equation}
q_i=x_i-x_{C_l}\qquad \text{and} \qquad \xi_i=x_{C_l},
\end{equation}
where $C_l$ is the cluster of the partition $Z$ with $i\in C_l$ and $x_{C_l}$ is the center of mass of the cluster $C_l$ given by
\begin{equation}
x_{C_l} = \frac{1}{\sum_{j\in C_l} m_j} \sum_{k\in C_l} m_kx_k.
\end{equation}
\\With regard to $q(Z)$ and $\xi(Z)$ we introduce the following regions, which we will refer to as cones in the following. For $\kappa>0$ and partitions $Z$ with $1<|Z|<N$ let
\begin{align}\label{eq: cones}
    \begin{split}
        K(Z,\kappa) &=\left\{x\in X_0 \ :\ \vert q\left({Z}\right)\vert_m \le \kappa \vert \xi\left(Z\right)\vert_m\right\} .       \\
    \end{split}
\end{align}
For the entire system $Z_1=(\{1,\ldots,N\})$ we set
	\begin{equation}\label{eq: cone for system}
	K(Z_1,\kappa)=\{x\in X_0: |x|_m \leq \kappa\}.
	\end{equation}
\begin{df}\label{2345}
For an arbitrary cluster $C\subseteq \{1,\ldots,N\}$ we say that the operator $H[C]=-\Delta_0[C]+V[C]$ has a virtual level at zero, if $H[C] \ge 0$ and for all sufficiently small $\varepsilon>0$ we have
\begin{equation}\label{ess}
\mathcal{S}_{\mathrm{ess}}\left(-(1-\varepsilon)\Delta_0[C] + V[C]\right)=[0,\infty)
\end{equation}
and
\begin{equation}\label{disc}
\mathcal{S}_{\mathrm{disc}}\left(-(1-\varepsilon)\Delta_0[C] + V[C]\right)\neq \emptyset.
\end{equation}
\end{df}
\begin{rem}
Assume that $H$ has a virtual level. Then condition \eqref{ess} together with the HVZ-theorem implies that there exists $\varepsilon>0$, such that for any non-trivial cluster $C$ with $1<|C|<N$ we have
\begin{equation}\label{ineq for ess}
\mathcal{S}\left( -(1-\varepsilon)\Delta_0[C]+V[C] \right) = [0,\infty).
\end{equation}
In particular \eqref{ineq for ess} yields that if $H$ has a virtual level, then the Hamiltonians corresponding to the non-trivial clusters of the system do not have resonances or eigenvalues at zero.
\end{rem}
The main result of this section is the following

\begin{thm}\label{Main theorem}
Consider a system of $N\ge 3$ particles in dimension $n\ge 3$. Suppose that the potentials $V_{ij}$ satisfy \eqref{2: assumptions on potential} and \eqref{2: assumptions on potential part 2}. Assume that $H$ has a virtual level at zero. Then
\begin{enumerate}
\item[\textbf{(i)}] zero is an eigenvalue of $H$ and the corresponding eigenfunction $\varphi_0$ satisfies
\begin{equation}\label{alpha 0 in this case}
\nabla_0\left(|x|_m^{\alpha_0}\varphi_0 \right)\in L^2(X_0) \qquad \text{and} \qquad (1+|x|_m)^{\alpha_0-1}\varphi_0\in L^2(X_0)
\end{equation}
for any $0\leq \alpha_0 < \frac{n(N-1)-2}{2}$.
\item[\textbf{(ii)}] There exists a constant $\delta_0>0$, such that for every function $\psi\in H^1(X_0)$ satisfying $\langle \nabla_0 \psi, \nabla_0 \varphi_0 \rangle=0$ we have
\begin{equation}
(1-\delta_0)\Vert \nabla_0 \psi \Vert^2 + \langle V\psi,\psi \rangle \geq 0.
\end{equation} 
\item[\textbf{(iii)}] If $V_{ij}^{(2)}$ satisfies $V_{ij}^{(2)}(x)\geq \alpha_{ij}|x|^{-\beta}$ for some constants $\alpha_{ij}>0$ and $\beta\in(0,2)$, then zero is an eigenvalue of $H$ and the corresponding eigenfunction $\varphi_0$ satisfies
\begin{equation}
\mathrm{exp}\left(\mu|x|_m^\kappa\right)\varphi_0 \in L^2(X_0),
\end{equation}
where $\kappa=1-\frac{\beta}{2}$ and $\mu>0$ depends on the coefficients $\alpha_{ij}$ and on the masses of the particles only.
\end{enumerate}
\end{thm}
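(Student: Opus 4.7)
The plan is to apply Theorem \ref{thm: Theorem 1} to $H$ on $L^2(X_0)$, where $X_0$ has dimension $d = n(N-1) \geq 6$. The hypotheses $H \geq 0$ and $\inf \mathcal{S}(H_\varepsilon) < 0$ are built into Definition \ref{2345}, and form-boundedness \eqref{eq: A1} for $V = \sum_{i<j} V_{ij}$ has already been observed from \eqref{2: assumptions on potential} and \eqref{2: assumptions on potential part 2}. The central task is to verify the Agmon-type coercivity \eqref{1: assumption for agmon} for an arbitrary $\alpha_0 \in \bigl(1, (d-2)/2\bigr)$, with some $\gamma_0, b > 0$ depending on $\alpha_0$.

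To do so I would introduce a partition of unity $\{J_Z\}$ on the unit sphere of $X_0$, indexed by cluster decompositions $Z$ with $2 \leq |Z| \leq N$, each $J_Z$ supported in a cone slightly larger than $K(Z, \kappa)$ (for a small opening $\kappa$ to be chosen later), and $\sum_Z J_Z^2 = 1$ outside the ball of radius $b$. The IMS localization produces an error of order $|x|_m^{-2}$ that, by Hardy's inequality in the $d$-dimensional space $X_0$, is absorbed into an arbitrarily small fraction of the kinetic energy. On each piece I would use $H = H(Z) - \Delta_{\xi(Z)} + I(Z)$ together with two structural inputs: the remark after Definition \ref{2345} forces $H[C] \geq \varepsilon_1(-\Delta_0[C])$ for every cluster $C$ with $1 < |C| < N$ and a fixed $\varepsilon_1 > 0$, so $H(Z) \geq \varepsilon_1(-\Delta_{q(Z)})$; and on the support of $J_Z$ every intercluster pair satisfies $|x_{ij}|_m \geq c(\kappa)|x|_m$, so after dropping the non-negative tails $V^{(2)}_{ij}$ one has $I(Z) \geq -C|x|_m^{-2-\nu}$, which is absorbed into a small fraction of $-\Delta_{\xi(Z)}$ by Hardy in the intercluster space of dimension $n(|Z|-1) \geq 3$. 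The combination yields, on each piece,
\begin{equation*}
\langle H J_Z \psi, J_Z \psi\rangle \geq \varepsilon_1 \|\nabla_{q(Z)}(J_Z\psi)\|^2 + (1-\delta)\|\nabla_{\xi(Z)}(J_Z\psi)\|^2
\end{equation*}
for any preset $\delta > 0$, once $b$ is large enough.

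The main obstacle is to convert this anisotropic estimate into the isotropic bound \eqref{1: assumption for agmon} with $\alpha_0$ arbitrarily close to $(d-2)/2$; naively applying Hardy separately in $q(Z)$ and $\xi(Z)$ gives a strictly smaller constant. The resolution is to rescale $q(Z) \mapsto \sqrt{\varepsilon_1}\, q(Z)$, which turns the right-hand side into a standard isotropic Laplacian in dimension $d$ with sharp Hardy constant $\bigl((d-2)/2\bigr)^2$; on the support of $J_Z$ the rescaling distorts $|x|_m^{-2}$ by a factor bounded below by $\bigl(1 + \kappa^2(\varepsilon_1^{-1}-1)\bigr)^{-1}$, arbitrarily close to $1$ for small $\kappa$. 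Summing over $Z$ via $\sum_Z J_Z^2 = 1$ then delivers \eqref{1: assumption for agmon} for the chosen $\alpha_0$. Theorem \ref{thm: Theorem 1}(i) thereby gives assertions (i) and (ii) for any $\alpha_0 \in (1, (d-2)/2)$; the remaining range $\alpha_0 \in [0, 1]$ in (i) follows from monotonicity of the weight $(1+|x|_m)^{\alpha_0-1}$ in $\alpha_0$. For (iii), the enhanced lower bound $V^{(2)}_{ij} \geq \alpha_{ij}|x_{ij}|^{-\beta}$ yields in each cone $I(Z) \geq \mu^2|x|_m^{-\beta}$ for some $\mu > 0$ depending on the coefficients $\alpha_{ij}$ and the masses (since each intercluster pair contributes at least $c\alpha_{ij}|x|_m^{-\beta}$ in the cone, and the short-range error $O(|x|_m^{-2-\nu})$ is negligible compared to $|x|_m^{-\beta}$ at infinity because $\beta < 2 < 2+\nu$); this replaces $\alpha_0^2|x|_m^{-2}$ by $\mu^2|x|_m^{-\beta}$ in the Agmon inequality, and Theorem \ref{thm: Theorem 1}(iii) yields the stretched-exponential decay with $\kappa = 1 - \beta/2$.
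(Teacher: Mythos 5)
Your outline shares the paper's architecture: verify the Agmon condition \eqref{1: assumption for agmon} for $H$ on $X_0$ with $\alpha_0$ arbitrarily close to $\frac{d-2}{2}$, $d=n(N-1)$, by splitting the exterior of a large ball into cluster cones, exploiting $H(Z)\geq \varepsilon_1(-\Delta_{q(Z)})$ inside the cones and the $d$-dimensional Hardy inequality in the free region, and then invoking Theorem \ref{thm: Theorem 1}. Your anisotropic rescaling $q(Z)\mapsto\sqrt{\varepsilon_1}\,q(Z)$ is a correct and rather elegant alternative to the paper's in-cone estimate (the paper instead applies Hardy in the $q(Z)$-variable alone and uses $|q(Z)|_m\leq\kappa|\xi(Z)|_m$ to produce a coefficient $\mu_0/(8\kappa^2)$ that beats any fixed $\alpha_0^2$). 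However, two steps of your sketch have genuine gaps.

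The first is the localization error. You claim the IMS error is of order $|x|_m^{-2}$ and is absorbed by Hardy into an arbitrarily small fraction of the kinetic energy. For any fixed smooth angular partition of unity the error is $C_0|x|_m^{-2}$ with $C_0$ a \emph{fixed} constant of order $\kappa^{-2}$ (the cutoffs must transition over an angular width $\lesssim\kappa$), and Hardy converts this only into $\frac{4C_0}{(d-2)^2}\Vert\nabla_0\psi\Vert^2$, a fixed and typically large fraction of the kinetic energy. In the free region the Hardy inequality is the \emph{only} source of positivity and you need $\alpha_0^2$ arbitrarily close to the sharp constant $\frac{(d-2)^2}{4}$, so no fixed loss is affordable: the naive partition destroys the estimate. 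This is exactly what Lemma \ref{Lemma localization error} is for: logarithmic transition profiles over a wide annulus $\kappa'\leq|q(Z)|_m/|\xi(Z)|_m\leq\kappa$ make the error $\varepsilon\bigl[|v_Z|^2|x|_m^{-2}+|u_Z|^2|q(Z)|_m^{-2}\bigr]$ with $\varepsilon$ arbitrarily small, at the price of the term $\varepsilon|q(Z)|_m^{-2}$, which is much larger than $|x|_m^{-2}$ inside the cone but is absorbable there by your coercive term $\varepsilon_1\Vert\nabla_{q(Z)}\cdot\Vert^2$ via Hardy in $q(Z)$. Your sketch does not engage with this mechanism, and without it the argument fails precisely where the sharp constant is at stake.

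The second gap is the intercluster distance bound. The assertion that on $\supp J_Z$ every intercluster pair satisfies $|x_{ij}|\geq c(\kappa)|x|_m$ is false on $K(Z,\kappa)$ once $|Z|\geq 3$: $|\xi(Z)|_m$ controls only the mass-weighted sum of squared distances between cluster centers, so two clusters of $Z$ may be close to each other while $x\in K(Z,\kappa)$. The bound holds only on $K(Z,\kappa)\setminus\bigcup_{|Z'|<|Z|}K(Z',\kappa'(|Z'|))$, which is why the paper proceeds hierarchically (first $|Z|=2$, then $|Z|=3$ on the remainder, etc.) and needs the separation statement of Theorem \ref{Lemma Intersection cones}; a flat partition with $\sum_Z J_Z^2=1$ does not supply this by itself. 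The same issue affects the free region, where you need all $|x_{ij}|\gtrsim|x|_m$. By contrast, your reduction of parts (i)--(ii) to Theorem \ref{thm: Theorem 1} and your treatment of part (iii) are sound (every pair trivially satisfies $|x_{ij}|\leq C|x|_m$, so $\sum_{i<j}V_{ij}^{(2)}\geq c|x|_m^{-\beta}$ holds globally, as in the paper).
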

\begin{rem}
\begin{enumerate}
\item[$\textbf{(i)}$] Theorem \ref{Main theorem} tells us that for $n$-dimensional particles with $n\geq 3$ virtual levels may be resonances for two-body Hamiltonians only. This is the reason why the Efimov effect does not occur for $n\geq3$ and $N\geq 4$. The proof of the absence of the Efimov effect will be given in Theorem \ref{No Efimov} in the next section.
\item[$\textbf{(ii)}$] Note that, as it is usual for variational methods for multi-particle Schr\"odinger operators, Theorem \ref{Main theorem} has very weak restrictions on the decay of the positive part of the potentials. Part \textbf{(iii)} of the theorem shows that if the interactions of particles for large distances are long-range and positive, an Agmon-type method can be used to prove the sub-exponential decay of eigenfunctions at the bottom of the essential spectrum. This idea is due to D. Hundertmark, M. Jex and M. Lange, see \cite{hundert1,hundert2}.
\end{enumerate}
\end{rem}
Before proving the theorem, we will generalize it in two directions: We will give an analogue of this theorem for systems including a particle of infinite mass and we will consider systems with symmetry restrictions.

\subsubsection{Remark on $N$-particle systems in an external potential field}
Now we consider the operator $\tilde{H}_N$ of an $N$-particle system $\{1,\ldots,N\}$ with an external electric field
\begin{equation}
\tilde{H}_N=-\sum_{i=1}^N \frac{1}{m_i}\Delta_{x_i} + \sum_{1\leq i<j\leq N} V_{ij}(x_{ij})+\sum_{j=1}^N U_j(x_j)
\end{equation}
acting on $L^2(\mathbb{R}^{nN})$. 
\\Formally, this system can be considered as an $(N+1)$-particle system with one particle having infinite mass and positioned at the origin. This particle will have number zero and we assume that the external potentials $U_j$ satisfy the same conditions as $V_{ij}$. Similar to \eqref{scalar 1} we define the scalar product $\langle \cdot,\cdot\rangle_m$ in $\mathbb{R}^{nN}$ and write the operator $\tilde{H}_N$ as
\begin{equation}
H_\infty =-\Delta_0+\sum_{1\leq i<j\leq N}V_{ij} +\sum_{j=1}^N U_j,
\end{equation}
where $-\Delta_0$ is the Laplace-Beltrami operator on $X$. Note that for the operator $H_\infty$ we do not need to separate the center of mass motion.
\\Let $H_\infty \geq 0$. We say that $H_\infty$ has a virtual level if for all sufficiently small $\varepsilon>0$ we have
\begin{equation}
\mathcal{S}_{\mathrm{ess}}(H_\infty +\varepsilon\Delta_0)=[0,\infty)
\end{equation}
and
\begin{equation}
\mathcal{S}_{\mathrm{disc}}(H_\infty+\varepsilon\Delta_0)\not = \emptyset.
\end{equation}

With the definitions given above assertions \textbf{(i)}-\textbf{(iii)} of Theorem \ref{Main theorem} hold for $N+1$ with $H$ replaced by $H_\infty$ and $X_0$ replaced by $X$.

\subsubsection{Systems with permutational symmetry}\label{sec}
Assume now that a system of $N$ particles contains several identical particles of finite mass. Let $S$ be the group of permutations of identical particles and $\sigma$ be a type of irreducible representation of this group. Let $P^\sigma$ be the corresponding projection on the subspace of functions transformed according to the representation $\sigma$. For any fixed partition $Z=(C_1,\ldots,C_p), \ 2\leq p \leq N-1,$ we define $S(Z)$ as a group, which permutes identical particles within the clusters $C_k \subset Z, \ k=1,2,\ldots,p$ and permutes identical clusters if such clusters exist in $Z$. Obviously $S(Z)$ is a subgroup of $S$. Denote by $\sigma'(Z)$ types of irreducible representations of $S(Z)$. We say that the representation $\sigma'(Z)$ of the group $S(Z)$ is induced by the representation $\sigma$ of the group $S$ and write $\sigma'(Z) \prec \sigma$ if $\sigma'(Z)$ is contained in $\sigma$ restricted to $S(Z)$.
\begin{df}\label{234}
We say that $H^\sigma:=P^\sigma H$ has a virtual level of symmetry $\sigma$, if $H^\sigma \geq 0$ and for all sufficiently small $\varepsilon>0$ it holds
\begin{equation}\label{989}
\mathcal{S}_{\mathrm{ess}}\left( P^{\sigma}(H+\varepsilon\Delta_0)\right)=[0,\infty)
\end{equation}
and
\begin{equation}
\mathcal{S}_{\mathrm{disc}}\left( P^\sigma(H+\varepsilon\Delta_0) \right) \not = \emptyset.
\end{equation}
\end{df}

\begin{rem}
Analogously to the remark after Definition \ref{2345}, condition \eqref{989} together with the HVZ-Theorem \cite{hhvz} imply that for any partition $Z=(C_1,\ldots,C_p)$ with $1<p<N$ and any type of irreducible representation $\sigma'(Z)\prec \sigma$ it holds
\begin{equation}
P^{\sigma'(Z)} \left( H(Z)+\varepsilon\Delta_{q(Z)} \right) \geq 0
\end{equation}
for sufficiently small $\varepsilon>0$.
\end{rem}

\begin{thm}\label{1: thm with symmetry}
Suppose that $N\geq 3$ and consider the operator $H^\sigma$, where the potentials $V_{ij}$ satisfy \eqref{2: assumptions on potential} and \eqref{2: assumptions on potential part 2}. Assume that $H^\sigma$ has a virtual level of symmetry $\sigma$. Then
\begin{enumerate}
\item[\textbf{(i)}] zero is an eigenvalue of $H^\sigma$ with finite degeneracy. Let $\mathcal{W}_0$ be the corresponding eigenspace, then for any $\varphi_0 \in \mathcal{W}_0$ we have 
\begin{equation}
\nabla_0 \left( |x|_m^{\alpha_0} \varphi_0 \right) \in L^2(X_0) \qquad \text{and} \qquad (1+|x|_m)^{\alpha_0-1}\varphi_0 \in L^2(X_0)
\end{equation}
for any $0\leq \alpha_0 < \frac{n(N-1)-2}{2}.$
\item[\textbf{(ii)}] There exists a constant $\delta_0>0$, such that for any function $\psi \in P^\sigma H^1(X_0)$ satisfying $\langle \nabla_0\psi,\nabla_0 \varphi_0\rangle =0$ for all $\varphi_0 \in \mathcal{W}_0$, it holds
\begin{equation}
(1-\delta_0)\Vert \nabla_0 \psi \Vert^2 + \langle V\psi,\psi \rangle\geq 0.
\end{equation}
\item[\textbf{(iii)}] If $V_{ij}^{(2)}$ satisfies $V_{ij}^{(2)}(x)\geq \alpha_{ij}|x|^{-\beta}$ for some constants $\alpha_{ij}>0$ and $\beta\in(0,2)$, then for every function $\varphi_0 \in \mathcal{W}_0$ we have
\begin{equation}\label{statement three}
\mathrm{exp}\left(\mu|x|_m^\kappa\right) \varphi_0 \in L^2(X_0),
\end{equation}
where $\kappa=1-\frac{\beta}{2}$ and $\mu>0$ depends on the coefficients $\alpha_{ij}$ and on the masses of the particles only.
\end{enumerate}
\end{thm}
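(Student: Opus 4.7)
The strategy is to deduce Theorem \ref{1: thm with symmetry} from the general variational Theorem \ref{thm: Theorem 1'} applied to $H^\sigma$ in $L^2(X_0)$ with $d = n(N-1) \geq 6$. For parts (i) and (ii) I need to verify the Agmon-type inequality \eqref{1': assumption for agmon} with parameter $\alpha_0$ arbitrarily close to $(n(N-1)-2)/2$ from below; for part (iii) I need to verify the stronger hypothesis with $|x|_m^{-\beta}$. Once \eqref{1': assumption for agmon} is established, parts (i)-(iii) follow directly from the corresponding parts of Theorem \ref{thm: Theorem 1'}.

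The first ingredient is an inductive statement: by Definition \ref{234} and the HVZ theorem (as in the remark after that definition), the hypothesis that $H^\sigma$ has a virtual level implies that for every partition $Z$ with $1<|Z|<N$ and every representation $\sigma'(Z)\prec\sigma$ of $S(Z)$, there is $\gamma_1>0$ with
\begin{equation*}
P^{\sigma'(Z)}\left(H(Z)-\gamma_1(-\Delta_{q(Z)})\right)\geq 0.
\end{equation*}
This expresses the absence of virtual levels in all proper sub-cluster Hamiltonians of compatible symmetry.

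The second ingredient is an IMS partition of unity $\{J_Z\}$ on $X_0\setminus B(0,b)$ indexed by partitions with $|Z|\geq 2$. Each $J_Z$ is supported in a cone $K(Z,\kappa)$ (cf.\ \eqref{eq: cones}) for a small fixed $\kappa>0$, satisfies $|\nabla_0 J_Z|^2\leq C|x|_m^{-2}$ in the spirit of Lemma \ref{eq: cmp1}, and is invariant under $S(Z)$, so that $P^\sigma\psi=\psi$ implies $P^{\sigma'(Z)}(J_Z\psi)=J_Z\psi$. The required covering property $\sum_Z J_Z^2\equiv 1$ outside a ball uses the standard geometric fact that every point in $X_0\setminus\{0\}$ lies in some cone $K(Z,\kappa)$ with $|Z|\geq 2$, and the full symmetry is obtained by averaging initial cone cutoffs over cosets of $S(Z)$ in $S$. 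With this in hand, the IMS formula gives
\begin{equation*}
\langle H^\sigma\psi,\psi\rangle=\sum_Z\langle H(J_Z\psi),J_Z\psi\rangle-\sum_Z\|(\nabla_0 J_Z)\psi\|^2,
\end{equation*}
and the localization error is absorbed by $\varepsilon\|\nabla_0\psi\|^2+\varepsilon\langle|x|_m^{-2}\psi,\psi\rangle$ via Hardy.

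The third ingredient is the cone estimate. On $K(Z,\kappa)$ every intercluster distance satisfies $|x_{ij}|\geq c_\kappa|x|_m$ for $i,j$ in different clusters of $Z$. Hence by \eqref{2: assumptions on potential} the short-range part of $I(Z)$ is bounded by $C|x|_m^{-2-\nu}\leq\varepsilon_1|x|_m^{-2}$ on $K(Z,\kappa)\cap\{|x|_m\geq b\}$ for $b$ large, while the part coming from $V_{ij}^{(2)}$ is non-negative (and, in the situation of (iii), bounded below by $c\,|x|_m^{-\beta}$). Writing $H=H(Z)-\Delta_{\xi(Z)}+I(Z)$ and applying the first-step bound to $J_Z\psi$ yields
\begin{equation*}
\langle H(J_Z\psi),J_Z\psi\rangle\geq \gamma_2\|\nabla_0(J_Z\psi)\|^2-\varepsilon_1\langle|x|_m^{-2}J_Z\psi,J_Z\psi\rangle,
\end{equation*}
with $\gamma_2=\min(\gamma_1,1)$, using $-\Delta_0=-\Delta_{q(Z)}-\Delta_{\xi(Z)}$. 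Summing over $Z$ and reserving a small fraction $\eta>0$ of $\|\nabla_0\psi\|^2$ for the Hardy inequality in $X_0$,
\begin{equation*}
(1-\eta)\|\nabla_0\psi\|^2\geq\left(\tfrac{n(N-1)-2}{2}\right)^2(1-\eta)\langle|x|_m^{-2}\psi,\psi\rangle,
\end{equation*}
produces \eqref{1': assumption for agmon} for any $\alpha_0<(n(N-1)-2)/2$ and some $\gamma_0\in(0,1)$. The analogous argument with $V_{ij}^{(2)}\geq\alpha_{ij}|x|^{-\beta}$ yields the $|x|_m^{-\beta}$ version needed for Theorem \ref{thm: Theorem 1'}(iii), with the constant $\mu$ determined by the $\alpha_{ij}$ and the masses through the cone geometry.

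The main obstacle is the construction of the partition of unity in the second step: producing functions $J_Z$ that are simultaneously (a) supported in the cones $K(Z,\kappa)$ so $I(Z)$ is controllable, (b) invariant under $S(Z)$ so that $J_Z\psi$ lives in the correct symmetry subspace for the inductive inequality, and (c) satisfy the derivative bound $|\nabla_0 J_Z|^2\leq C|x|_m^{-2}$ with $\sum_Z J_Z^2\equiv 1$ outside a ball. Balancing symmetrization by averaging over $S/S(Z)$ with the cone and derivative requirements is the most delicate part of the argument; the remaining estimates then follow by the same scheme as in the proof of Theorem \ref{thm: Theorem 1} and its symmetric version.
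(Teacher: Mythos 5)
Your overall architecture coincides with the paper's (which proves this theorem by rerunning the proof of Theorem \ref{Main theorem} with Theorem \ref{thm: Theorem 1'} in place of Theorem \ref{thm: Theorem 1}): an inductive spectral gap for sub-cluster Hamiltonians of compatible symmetry via HVZ, a cone localization, control of $I(Z)$ on the cones, and Hardy's inequality in $X_0$. However, the step in which you absorb the weight $\alpha_0^2|x|_m^{-2}$ contains a genuine gap. After reducing the cone estimate to $\langle H(J_Z\psi),J_Z\psi\rangle\geq\gamma_2\Vert\nabla_0(J_Z\psi)\Vert^2-\varepsilon_1\langle|x|_m^{-2}J_Z\psi,J_Z\psi\rangle$ with $\gamma_2=\min(\gamma_1,1)$ and summing over $Z$, all you retain is $\gamma_2\Vert\nabla_0\psi\Vert^2$ up to errors; your displayed inequality $(1-\eta)\Vert\nabla_0\psi\Vert^2\geq\bigl(\tfrac{n(N-1)-2}{2}\bigr)^2(1-\eta)\langle|x|_m^{-2}\psi,\psi\rangle$ is then invoked with the full gradient norm, which at that point you do not have. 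Since $\gamma_1$ is only whatever gap the HVZ argument supplies and may be arbitrarily small, this route yields at best $\alpha_0^2<\gamma_2\bigl(\tfrac{n(N-1)-2}{2}\bigr)^2$, which is not the claimed range and need not even exceed $1$, so already part \textbf{(i)} could fail.

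The missing idea is the paper's inequality \eqref{friedrich}: on the cone $K(Z,\kappa)$ one has $|q(Z)|_m\leq\kappa|\xi(Z)|_m$ and $|\xi(Z)|_m\leq|x|_m$, so Hardy's inequality in the $q(Z)$-variables converts the cluster gap into $\gamma_1\Vert\nabla_{q(Z)}(J_Z\psi)\Vert^2\geq c\,\gamma_1\kappa^{-2}\Vert\,|x|_m^{-1}J_Z\psi\Vert^2$; choosing $\kappa$ small (depending on $\gamma_1$ and $\alpha_0$) then absorbs $\alpha_0^2|x|_m^{-2}$, $I(Z)$ and the localization error inside every non-trivial cone for \emph{arbitrary} $\alpha_0$. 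The Hardy constant of $X_0$ is needed only in the residual region where all $|x_{ij}|\gtrsim|x|_m$, the potential is $O(|x|_m^{-2-\nu})$, and the full $\Vert\nabla_0\psi\Vert^2$ is genuinely available. Relatedly, a flat partition $\sum_ZJ_Z^2\equiv1$ over all $Z$ glosses over the fact that cones of different orders overlap: the paper's construction is nested (Lemma \ref{Lemma localization error} combined with Theorem \ref{Lemma Intersection cones}), and the derivative bound required on the inner part of each cone is the weighted one $|\nabla_0u_Z|^2\lesssim\varepsilon|u_Z|^2|q(Z)|_m^{-2}$ from \eqref{Localization error I}, again so that it is absorbed by the cluster gap rather than by the scarce global Hardy term. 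Your symmetry bookkeeping ($S(Z)$-invariance of the cutoffs and decomposition into $\sigma'(Z)\prec\sigma$) is correct and is indeed the only new element relative to Theorem \ref{Main theorem}.
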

\begin{rem}
The decay rate of the eigenfunctions $\varphi_0 \in \mathcal{W}_0$ depends on the corresponding Hardy constant $c_H$, which on the whole space $L^2(X_0)$ is given by $c_H=\frac{(\dim X_0-2)^2}{4}$. However, if $\sigma$ is a representation different from the symmetric representation, the Hardy constant can become larger. This can result in a stronger decay rate of the eigenfunctions.
\end{rem}
\section*{Proof of Theorem \ref{Main theorem}}
The following estimate for the localization error, originally proved in \cite{Semjon2}, plays a crucial role in the proof of Theorems \ref{Main theorem} and \ref{1: thm with symmetry}. For the convenience of the reader a complete proof of this estimate is given in the Appendix.
\begin{lem}\cite[Lemma 5.1]{Semjon2}\label{Lemma localization error} Given $\varepsilon>0$ and $\kappa>0$, for each partition $Z$ with $1<|Z|<N$ one can find $0 < \kappa'<\kappa$ and functions $u_{Z},\ v_{Z}\ : \ X_0\rightarrow\R$, such that
\begin{equation}\label{Localization functions}
u_{Z}^2+v_{Z}^2 =1, \qquad u_{Z}(x)=\begin{cases}
	1, & x\in K\left(Z,\kappa'\right)\\
	0, & x\notin K\left(Z,\kappa\right)
\end{cases}
\end{equation}
and
\begin{equation}\label{Localization error I}
	\vert \nabla_0 u_{Z} \vert^2 + \vert \nabla_0 v_{Z} \vert^2 < \varepsilon \left[\vert v_{Z} \vert^2\vert x\vert_m^{-2}+\vert u_{Z} \vert^2\vert q\left(Z\right)\vert_m^{-2}\right]
\end{equation}
for $x\in  K\left(Z,\kappa\right)\setminus K\left(Z,\kappa'\right)$.
\end{lem}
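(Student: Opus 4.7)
The plan is to realize $u_Z$ and $v_Z$ as functions of the single scalar ratio $t(x) := |q(Z)|_m/|\xi(Z)|_m$, which parametrizes the family of cones via $K(Z,\rho) = \{x \in X_0 : t(x) \le \rho\}$. Fix once and for all two real functions $F, G$ with piecewise continuous derivatives, $F^2 + G^2 \equiv 1$, $F \equiv 1$ on $(-\infty, 0]$ and $F \equiv 0$ on $[1, \infty)$; then set $L := \log(\kappa/\kappa')$ and
\begin{equation*}
u_Z(x) := F\!\left(\frac{\log t(x) - \log \kappa'}{L}\right), \qquad v_Z(x) := G\!\left(\frac{\log t(x) - \log \kappa'}{L}\right).
\end{equation*}
The choice of a logarithmic (rather than linear) reparametrization of $t$ is the crucial point: the $t$-derivative of $u_Z$ is then of order $1/(tL)$, exactly the scale compensated by the singular weight $|q(Z)|_m^{-2}$ appearing on the right-hand side of \eqref{Localization error I}. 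The identity $u_Z^2 + v_Z^2 \equiv 1$ and the required boundary values on $K(Z, \kappa')$ and outside $K(Z, \kappa)$ are built into the definition.

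The main geometric computation is the identity $|\nabla_0 t|_m = (1+t^2)/|x|_m$. It follows from the $\langle\cdot,\cdot\rangle_m$-orthogonal decomposition $X_0 = X_0(Z) \oplus X_c(Z)$: the self-adjoint projections $P_0(Z)$ and $P_c(Z)$ give $\nabla_0|q(Z)|_m = P_0(Z)x/|q(Z)|_m$ and $\nabla_0|\xi(Z)|_m = P_c(Z)x/|\xi(Z)|_m$ as orthogonal unit vectors in the $m$-metric, whence $|\nabla_0 t|_m^2 = (1+t^2)/|\xi(Z)|_m^2$; the relation $|x|_m^2 = |\xi(Z)|_m^2(1+t^2)$ then yields the claim. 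Writing $s := (\log t - \log \kappa')/L$, the chain rule produces
\begin{equation*}
|\nabla_0 u_Z|_m^2 + |\nabla_0 v_Z|_m^2 \;=\; \frac{(F'(s)^2 + G'(s)^2)(1+t^2)^2}{t^2 L^2 |x|_m^2},
\end{equation*}
while the same substitutions together with $F^2 + G^2 = 1$ collapse the bracket on the right-hand side of \eqref{Localization error I} to $(t^2 + F(s)^2)/(t^2|x|_m^2)$. Hence \eqref{Localization error I} reduces to the scalar pointwise inequality
\begin{equation*}
(F'(s)^2 + G'(s)^2)(1+t^2)^2 \;<\; \varepsilon\,L^2\,\bigl(t^2 + F(s)^2\bigr), \qquad s \in (0,1).
\end{equation*}

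Since $|F'|+|G'|$ is bounded by a constant depending only on $F$ and $(1+t^2)^2 \le (1+\kappa^2)^2$, the estimate holds as soon as $t(s)^2 + F(s)^2$ admits a uniform lower bound of order $\omega(L)/L^2$ with $\omega(L)\to\infty$ as $L\to\infty$; then $\kappa'$ is chosen (equivalently $L$ is taken) sufficiently large depending on $\varepsilon$, $\kappa$ and $F$ to close the estimate. Establishing this lower bound is the main analytical obstacle: near $s = 1$ both $F(s)$ and $t(s) = (\kappa')^{1-s}\kappa^{s}$ can simultaneously become small, so the trivial bounds $t \ge \kappa'$ and $F \ge 0$ do not suffice. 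A short one-variable calculus optimization using only that $F$ is smooth near $s=1$ with $F'(1)\neq 0$ shows that $\min_{s \in [0,1]}\bigl(t(s)^2 + F(s)^2\bigr) \sim (\log L)^2/L^2$ for large $L$, which is more than enough and closes the argument. All remaining points---boundary values of $u_Z, v_Z$ and the partition of unity $u_Z^2 + v_Z^2 = 1$---are immediate from the construction.
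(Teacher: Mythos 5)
Your construction is correct, and it shares the paper's central device: localization functions depending only on the ratio $t=|q(Z)|_m/|\xi(Z)|_m$ through a logarithmic stretching of $t$, together with the chain-rule identity $|\nabla_0 t|_m=(1+t^2)/|x|_m$ on which the paper's computation is also built. The structural difference lies in how the outer end $t\approx\kappa$ is treated, where $u_Z\to 0$ and the favourable weight $|u_Z|^2|q(Z)|_m^{-2}$ degenerates. The paper splits the transition region in two: on an outer collar $\kappa''\le t\le\kappa$ it uses a profile with $v_1'(t)\left(1-v_1^2(t)\right)^{-1/2}\to 0$ as $t\to\kappa$, so the localization error is at most $\varepsilon|x|_m^{-2}$, which is dominated by the right-hand side of \eqref{Localization error I} because $|q(Z)|_m\le|x|_m$ gives $|v_Z|^2|x|_m^{-2}+|u_Z|^2|q(Z)|_m^{-2}\ge|x|_m^{-2}$; on the inner region $\kappa'\le t\le\kappa''$ it uses the profile linear in $\log t$ and the largeness of $\log(\kappa''/\kappa')$, absorbing the error into the term $\varepsilon|u_Z|^2|q(Z)|_m^{-2}$ alone. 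You instead keep one fixed shape $F$ over the whole interval and close the estimate by the pointwise optimization $\min_{s\in[0,1]}\bigl(t(s)^2+F(s)^2\bigr)\gtrsim(\log L)^2/L^2$, which indeed beats the required threshold of order $1/(\varepsilon L^2)$; I checked this minimization (the minimum occurs at $1-s\sim\log L/L$) and your reduction to the scalar inequality $(F'^2+G'^2)(1+t^2)^2<\varepsilon L^2(t^2+F^2)$ is accurate. Each route has a virtue: the paper's two-zone argument avoids any optimization --- the elementary inequality $|v_Z|^2|x|_m^{-2}+|u_Z|^2|q(Z)|_m^{-2}\ge|x|_m^{-2}$ already disposes of the region where $t$ is bounded below, and you could use it to shorten your own endgame --- while your single-formula version makes the dependence of $\kappa'$ on $\varepsilon$ and $\kappa$ completely explicit. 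One detail to pin down: choose $F$ so that $G=\sqrt{1-F^2}$ also has a bounded derivative where $G$ vanishes (e.g.\ $F(s)=\cos(\pi s/2)$, $G(s)=\sin(\pi s/2)$ on $[0,1]$), since otherwise $G'=-FF'/G$ need not be piecewise continuous at $s=0$; with such a choice, and with $F$ vanishing on $[0,1]$ only at $s=1$, every step you state goes through.
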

To explain the main ideas of the proof of Theorem \ref{Main theorem} we start with $N=3$ and extend the strategy to the case $N\geq 4$ afterwards.
\begin{proof}[Proof of Theorem \ref{Main theorem} for $N=3$ particles and $n=3$] 
Note that in this case we have to prove \eqref{alpha 0 in this case} with $\alpha_0\in(0,2)$. We will prove that all conditions of statement \textbf{(i)} of Theorem \ref{thm: Theorem 1} are fulfilled. We will also show that if in addition $V_{ij}^{(2)}(x) \geq \alpha_{ij}|x|^{-\beta}$ holds for some constants $\alpha_{ij}> 0$ and $\beta \in(0,2)$, then \eqref{statement three} follows from statement \textbf{(iii)} of Theorem \ref{thm: Theorem 1}. 
\\Since $V_{ij} \in L_{\mathrm{loc}}^{\frac{3}{2}}(\mathbb{R}^3)$ and it decays at infinity, for any $\varepsilon>0$ there exists a constant $C(\varepsilon)>0$, such that
\begin{equation}
\langle |V_{ij}|\varphi,\varphi \rangle \leq \varepsilon \Vert \nabla_{x_{ij}} \varphi \Vert^2 + C(\varepsilon) \Vert \varphi \Vert^2
\end{equation}
holds for any function $\varphi \in C_0^\infty (X_0)$. This obviously implies \eqref{eq: A1} for $V=\sum_{1\leq i < j \leq N}V_{ij}$, see \cite{reed}, p.170. 
\\To prove statements \textbf{(i)} and \textbf{(ii)} of the theorem it is sufficient to prove that
\begin{equation}\label{Quadratic form to be estimated}
L[\varphi]:=(1-\gamma_0)\left\Vert\nabla_0 \varphi\right\Vert^2+ \left\langle V\varphi,\varphi \right\rangle - \Vert \alpha_0 \vert x\vert_m^{-1}\varphi\Vert^2\ge 0
\end{equation}
holds for some constant $\gamma_0>0$, any $\alpha_0 \in (1,2)$ and any function $\varphi\in H^1(X_0)$, which satisfies $\supp (\varphi) \subset \{x\in X_0 : |x|_m \geq R\}$ for some sufficiently large $R>0$.
\\The proof of \eqref{Quadratic form to be estimated} follows the ideas of the estimate from below of the quadratic form of a multi-particle Schr\"odinger operator in \cite{Semjon1} in the easiest case when the corresponding cluster Hamiltonians do not have bound states or virtual levels. The difference between \eqref{Quadratic form to be estimated} and a similar inequality proved in \cite{Semjon1} is that for the purposes of \cite{Semjon1} it was sufficient to prove this inequality with an arbitrary small $\alpha_0>0$. Now we need to prove \eqref{Quadratic form to be estimated} with $\alpha_0 \in(1,2)$. Following \cite{Sem4} we will make a partition of unity of the support of the function $\varphi$, separating regions corresponding to different partitions of the system into clusters.
\\Let $u_{Z}$ be the localization functions defined by \eqref{Localization functions}. Recall that for $|Z|=2$ the function $u_{Z}$ is supported in the cone in the configuration space, where two particles belonging to the same cluster in $Z$ are close one to another and the third particle is very far away from this cluster. Due to Theorem \ref{Lemma Intersection cones} in the in the Appendix we can choose $\kappa>0$ sufficiently small, such that the cones $K(Z,\kappa)$ for different $Z$ with $|Z|=2$ do not overlap on the support of $\varphi$. Then Lemma \ref{Lemma localization error} yields
 \begin{equation}
 	L[\varphi] \ge \sum\limits_{Z,|Z|=2} L_1\left[\varphi u_{Z}\right]+L_2\left[\varphi \mathcal{V}\right],
 \end{equation}
where $\mathcal{V}= \sqrt{1-\sum_{Z,|Z|=2} u^2_{Z}}$ and the functionals $L_1,L_2: H^1(X_0)\rightarrow \mathbb{R}$ are defined by
\begin{align}\label{Definition functionals}
	\begin{split}
		L_1[\psi] &:= (1-\gamma_0)\left\Vert\nabla_0\psi\right\Vert^2+ \left\langle V\psi,\psi \right\rangle - \Vert \alpha_0 \vert x\vert_m^{-1} \psi\Vert^2 - \varepsilon\left\Vert\vert q(Z)\vert^{-1}_m \psi\right\Vert^2,\\
		L_2[\psi] &:= (1-\gamma_0)\left\Vert\nabla_0\psi\right\Vert^2+ \left\langle V\psi,\psi \right\rangle - \Vert \alpha_0 \vert x\vert_m^{-1} \psi\Vert^2 - \varepsilon\left\Vert\vert x\vert^{-1}_m \psi\right\Vert^2.
	\end{split}
\end{align}
We will prove that $L_1[\varphi u_{Z}]\ge 0$ and $L_2[\varphi \mathcal{V}]\ge 0$, if $\varepsilon, \gamma_0>0$ and $\kappa>0$ are sufficiently small and $R>0$ is sufficiently large. Here, $\kappa>0$ is the parameter in the definition of the cone $K(Z,\kappa)$.
\\
At first we estimate $L_1[\varphi u_{Z}]$ for an arbitrary partition $Z=(C_1,C_2)$. Note that by \eqref{H decomposed in partitions}
\begin{align}
\begin{split}
L_1[\varphi u_{Z}] &= 
		\left\langle H(Z) \varphi u_{Z}, \varphi u_{Z}\right\rangle - \gamma_0\left\Vert\nabla_{q(Z)}(\varphi u_{Z})\right\Vert^2 
		\\&\ \qquad \qquad + (1-\gamma_0)\left\Vert\nabla_{\xi(Z)}(\varphi u_{Z})\right\Vert^2	
+\langle I(Z)\varphi u_{Z}, \varphi u_{Z}\rangle 
\\&\ \qquad \qquad \qquad \qquad \qquad- \Vert \alpha_0 \vert x\vert_m^{-1} \varphi u_{Z}\Vert^2 - \varepsilon\left\Vert \vert q\left(Z\right)\vert^{-1}_m\varphi u_{Z} \right\Vert^2.
		\end{split}
\end{align}
Without loss of generality we assume that in $Z=(C_1,C_2)$ the cluster $C_1$ has two particles and $C_2$ has only one particle. Applying \eqref{ineq for ess} we get
\begin{equation}
	\langle H(Z) \varphi u_{Z}, \varphi u_{Z}\rangle  \ge \mu_0 \Vert \nabla_{q\left(Z\right)}\left(\varphi u_{Z}\right)\Vert^2
\end{equation}
for some $\mu_0>0$ independent of $\varphi$. For sufficiently small $\varepsilon>0$ and $\gamma_0 >0 $ this yields

\begin{equation}\label{no virtual level H_0(Z2) and error}
	\langle H(Z) \varphi u_{Z}, \varphi u_{Z}\rangle - \gamma_0\left\Vert\nabla_{q(Z)} (\varphi u_{Z})\right\Vert^2 -\varepsilon\left\Vert \vert q\left(Z\right)\vert_m^{-1}\varphi u_{Z}\right\Vert^2 
	\ge \frac{\mu_0}{2}\Vert \nabla_{q(Z)}\left(\varphi u_{Z}\right)\Vert^2.
\end{equation}
Therefore, we arrive at 
\begin{align}\label{1122}
\begin{split}
	L_1[\varphi u_{Z}] &\ge \frac{\mu_0}{2} \Vert \nabla_{q(Z)} (\varphi u_{Z})\Vert^2 + (1-\gamma_0)\left\Vert\nabla_{\xi(Z)} (\varphi u_{Z})\right\Vert^2\\
	&\ \qquad \qquad \ \ \ \ \ \ \ \quad +\langle I(Z) \varphi u_{Z},  \varphi u_{Z}\rangle  - \Vert \alpha_0 \vert x\vert_m^{-1} \varphi u_{Z}\Vert^2.
	\end{split}
\end{align}
On the support of $u_{Z}$ we have $\vert q(Z)\vert_m \le \kappa \vert \xi (Z)\vert_m$, which by Hardy inequality implies
\begin{equation}\label{friedrich} 
	\frac{\mu_0}{2}\Vert \nabla_{q\left(Z\right)}\left(\varphi u_{Z}\right)\Vert^2 \ge \frac{\mu_0}{8\kappa^2}\left\Vert \vert \xi(Z)\vert_m^{-1}\varphi u_{Z}\right\Vert^2.
	\end{equation}
Let $B(R)=\{x\in X_0: |x|_m \leq R\}$. Since $\mathrm{supp}\left(\varphi u_{Z}\right) \subset K\left(Z,\kappa\right)\setminus B(R)$ it holds $\vert x_{ij}\vert \ge C\vert \xi\left(Z\right)\vert_m$ for $i\in C_1,\ j\in C_2$ and some $C>0$. Therefore, by $V_{ij}\geq V_{ij}^{(1)}$ and $\vert V_{ij}^{(1)}(x_{ij})\vert\le C\vert\xi\left(Z\right)\vert_m^{-2-\nu}$ we can estimate from below the r.h.s. of \eqref{1122} as
	\begin{equation}
	\frac{\mu_0}{8\kappa^2} \Vert |\xi(Z)|_m^{-1}\varphi u_{Z} \Vert^2 - C\Vert |\xi(Z)|_m^{-1} \varphi u_{Z} \Vert^2 - \alpha_0^2\Vert |\xi(Z)|_m^{-1}\varphi u_{Z} \Vert^2 \geq 0
	\end{equation}
	for sufficiently small $\kappa>0$. Now to prove part $\textbf{(i)}$ and part $\textbf{(ii)}$ of the theorem in the case of $N=3$ it suffices to show $L_2[\mathcal{V}\varphi]\geq 0$. Note that on the support of $\mathcal{V}$ all the distances between the particles are large. Since $V_{ij}\geq V_{ij}^{(1)}$ and due to the support of $\mathcal{V}\varphi$ we have
\begin{equation}
	\vert V_{ij}^{(1)}(x_{ij})\vert \le C\vert x\vert_m^{-2-\nu}\le \varepsilon\vert x\vert_m^{-2}, \qquad i,j = 1,2,3, \ i\neq j,
\end{equation}
where $\varepsilon>0$ can be chosen arbitrarily small by choosing $R>0$ sufficiently large. This yields
\begin{equation}
	L_2[\mathcal{V}\varphi] \ge (1-\gamma_0)\Vert \nabla_0 \left(\mathcal{V}\varphi\right)\Vert^2-\left(\alpha_0^2-2\varepsilon\right)\Vert \vert x\vert_m^{-1}\varphi\mathcal{V}\Vert^2.
\end{equation}
Since $\dim X_0=6$, Hardy's inequality implies
\begin{equation}\label{3: gamma 0}
\Vert \nabla_0 (\mathcal{V}\varphi) \Vert^2\geq 4 \Vert |x|_m^{-1}\mathcal{V}\varphi\Vert^2.
\end{equation}
For $\alpha_0<2$ we can choose $0<\varepsilon<\frac{4-\alpha_0^2}{2}$ and $\gamma_0>0$ sufficiently small, such that $L_2[\varphi\mathcal{V}]\ge 0$, which completes the proof of statement \textbf{(i)} and \textbf{(ii)} for $d=3$ and $N=3$. 
\\In order to prove statement \textbf{(iii)} it suffices to note that for $\beta\in(0,2)$ and $\alpha_{ij}>0$ we have $\sum_{i,j} V_{ij}^{(2)}(x_{ij}) \geq C|x|_m^{-\beta}$. Applying statement \textbf{(iii)} of Theorem \ref{thm: Theorem 1} completes the proof for $N=3$.
\end{proof}
Now we prove the theorem for $n=3$ and $N\geq 4$.
\begin{proof}[Proof of statement \textbf{(i)} and \textbf{(ii)} of Theorem \ref{Main theorem} for $n=3$ and $N\ge 4$] 
In this part of the proof we can assume that $V_{ij}^{(2)}\equiv 0$ holds for $i,j=1,\ldots,N, \ i\not = j$.
\\Let $L[\cdot]$ be the functional defined in \eqref{Quadratic form to be estimated}. In the following we will show that $L[\varphi]\ge 0$ holds for every $0\le\alpha_0 < \frac{3N-5}{2}$ and every $\varphi \in H^1(X_0) $ with $ \mathrm{supp}(\varphi) \subset X_0\setminus B(R)$, where $B(R)=\{x\in X: |x|_m\leq R\}$ with $R>0$ sufficiently large. Analogously to the case $N=3$ we get
  \begin{equation}
 	L[\varphi] \ge \sum\limits_{Z,|Z|=2} L_1\left[\varphi u_{Z}\right]+L_2\left[\varphi \mathcal{V}_2\right],
 \end{equation}
where the functionals $L_1, L_2$ are defined in \eqref{Definition functionals} and $\mathcal{V}_2 = \sqrt{1-\sum_{Z,|Z|=2} u^2_{Z}}$. By repeating the same arguments as in the case $N=3$, one can easily show that $L_1[\varphi u_{Z}]\ge 0$ holds for all two-cluster decompositions $Z$. We only need to prove $L_2[\mathcal{V}_2\varphi]\ge 0$. 
\\Due to Theorem \ref{Lemma Intersection cones} we can find $\kappa(3)>0$, such that on the support of $\mathcal{V}_2\varphi$ the cones $K\left(Z,\kappa(3)\right)$ and $ K\left(Z',\kappa(3)\right)$ do not overlap for partitions $Z_3,Z_3'$ with $|Z|=|Z'|=3$ and $Z\neq Z'$. Applying Lemma \ref{Lemma localization error} yields
\begin{equation}
	L_2\left[\mathcal{V}_2\varphi\right]\ge \sum\limits_{Z, |Z|=3} L_1'[u_{Z}\mathcal{V}_2\varphi ] + L_2'[\mathcal{V}_3 \mathcal{V}_2\varphi],
\end{equation} 
where $\mathcal{V}_3 = \sqrt{1-\sum_{Z,|Z|=3}u_{Z}^2}$ on the support of $\mathcal{V}_2\varphi$ and 
\begin{align}
\begin{split}
		L'_1[\psi] &= \left\langle H(Z) \psi, \psi\right\rangle-\gamma_0 \left\Vert \nabla_{q(Z)}\psi\right\Vert^2 +(1-\gamma_0) \left\Vert\nabla_{\xi(Z)} \psi\right\Vert^2+\langle I(Z)\psi, \psi\rangle  \\
		&\qquad \qquad \qquad\qquad \qquad \qquad\qquad  -\left(\alpha_0^2+\varepsilon\right)\Vert \vert x\vert_m^{-1} \psi\Vert^2 - \varepsilon\left\Vert \psi \vert q\left(Z\right)\vert_m^{-1}\right\Vert^2,\\
		L'_2[\psi] &= (1-\gamma_0)\left\Vert\nabla_0 \psi\right\Vert^2+ \left\langle V\psi,\psi \right\rangle - \Vert \alpha_0 \vert x\vert_m^{-1} \psi\Vert^2 - 2\varepsilon\left\Vert \vert x\vert^{-1}_m\psi \right\Vert^2. 
\end{split}
\end{align}
Since for each cluster $C_j$ with $|C_j|>1$ in the partition $Z=(C_1,C_2,C_3)$ we have \eqref{ineq for ess}, it holds 
\begin{equation}
	\langle H(Z) \psi, \psi \rangle \ge \mu_0 \Vert \nabla_{q(Z)}\psi \Vert ^2
\end{equation}
for some $\mu_0>0$, independent of $\psi$. In addition, on the support of $u_{Z}\mathcal{V}_2$ we can estimate $\vert V_{ij}\left(x_{ij}\right)\vert \le c \vert \xi(Z)\vert_m^{-2-\nu}$ for $i,j$ belonging to different clusters in $Z$. Consequently, by the same arguments as in the estimate of $L_1[u_{Z}\varphi]$ with $|Z|=2$ we get $L_1'[u_{Z}\mathcal{V}_2\varphi]\ge 0$ with $|Z|=3$. Repeating this process, we see that to prove the theorem it suffices to show
\begin{equation}\label{remaining term N ge 4}
	L_3[\psi] := (1-\gamma_0)\left\Vert\nabla_0 \psi\right\Vert^2+ \left\langle V\psi,\psi \right\rangle - \Vert \alpha_0 \vert x\vert_m^{-1} \psi\Vert^2 - \varepsilon\left\Vert \psi \vert x\vert_m^{-1}\right\Vert^2 \ge 0
\end{equation}
for small $\varepsilon,\gamma_0>0$ and for functions $\psi\in H^1(X_0)$, which are supported outside the ball of the radius $R$ in $X_0$ in the region, where for any pair of particles $i,j$ it holds $|x_{ij}|\geq \tilde{c}|x|_m$ for some constant $\tilde{c}>0$. In this region it holds
\begin{equation}
	\vert V_{ij}\left(x_{ij}\right)\vert\le c \vert x\vert_m^{-2-\nu}.
\end{equation}
We choose $0<\varepsilon<\frac{(3(N-1)-2)^2}{4}-\alpha_0^2$, such that by Hardy's inequality in dimension $3(N-1)$ \eqref{remaining term N ge 4} holds. Now we can apply Theorem \ref{thm: Theorem 1} and conclude that zero is a simple eigenvalue of $H$ and the corresponding eigenfunction $\varphi_0$ satisfies
\begin{equation}\label{eq: decay rate}
\nabla_0\left(|x|_m^{\alpha_0}\varphi_0 \right) \in L^2(X_0) \qquad \text{and} \qquad (1+|x|_m)^{\alpha_0-1} \varphi_0 \in L^2(X_0)
\end{equation}
for every $\alpha_0 < \frac{3N-5}{2}$. This completes the proof of statement \textbf{(i)} and \textbf{(ii)} of Theorem \ref{Main theorem} in the case $n=3$ and $N\geq 4$. Finally, since Hardy's inequality holds for every $n\ge 3$, the proof of the theorem can be adapted to the case $n\ge 4$ by replacing the Hardy constant in the corresponding dimension. Statement \textbf{(iii)} of the theorem follows from statement \textbf{(iii)} of Theorem \ref{thm: Theorem 1} similar to the case of $N=3$.
\end{proof}

Theorem \ref{1: thm with symmetry} can be proved analogously to Theorem \ref{Main theorem}  by applying Theorem \ref{thm: Theorem 1'} instead of Theorem \ref{thm: Theorem 1}.

\section{Absence of the Efimov effect in $N$-particle systems with $N\ge 4$}
In this section we prove that the Efimov effect does not occur in the case of more than three particles in any dimension $n\geq3$. The main reason for this is that for such systems the virtual level is always an eigenvalue, see Theorem \ref{Main theorem}. Our proof is based on the ideas of \cite{Semjon2}, where it was shown that in case of three particles, restricted to certain symmetries, the Efimov effect does not occur as well. We will adapt this technique to arbitrary $N$-body systems.

\begin{thm}\label{No Efimov}
Let $H$ be the operator defined in \eqref{H_0[C]} with $n\geq 3$ and $N\geq 4$. Let the potentials $V_{ij}$ satisfy \eqref{2: assumptions on potential} and \eqref{2: assumptions on potential part 2}. If $n=3$ we will assume in addition that $V_{ij}\in L_{\mathrm{loc}}^2(\mathbb{R}^3)$. Further, assume that for any cluster $C$ with $|C|=N-1$ we have $H[C]\geq0$ and for any $\varepsilon\in(0,1)$
\begin{equation}\label{hhmm}
\mathcal{S}_{\mathrm{ess}}\left(-(1-\varepsilon)\Delta_0[C] + V[C]\right)=[0,\infty).
\end{equation}
Then the discrete spectrum of $H$ is finite.
\end{thm}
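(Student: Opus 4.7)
The plan is to argue by contradiction along the lines of \cite{Semjon2}: I would assume that $H$ has infinitely many negative eigenvalues. The HVZ theorem together with the assumption $H[C] \geq 0$ and \eqref{hhmm} for every $|C| = N-1$ forces every subsystem Hamiltonian with fewer particles to have non-negative spectrum with a gap that survives the perturbation $\varepsilon \Delta_0$; hence $\inf \mathcal{S}_{\mathrm{ess}}(H) = 0$ and the hypothetical negative eigenvalues can accumulate only at zero. The task is to rule this accumulation out.

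Following the strategy from the proof of Theorem \ref{Main theorem}, I localize using the cone partition of unity of Lemma \ref{Lemma localization error}. For every partition $Z$ with $|Z|\geq 3$, assumption \eqref{hhmm} together with HVZ yields a uniform spectral gap for $H(Z)$ above zero, so the localized pieces in $K(Z,\kappa)$ inherit a positive energy lower bound. In the central region, where all inter-particle distances are comparable to $|x|_m$, Hardy's inequality in dimension $n(N-1) \geq 9$ together with the short-range bound $|V_{ij}^{(1)}(x_{ij})| \leq C|x_{ij}|^{-2-\nu}$ and the vanishing at infinity of $V_{ij}^{(2)}$ produces once again a uniform positive lower bound on the energy. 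These regions can support only finitely many bound states. The danger therefore reduces to the two-cluster cones $K(Z,\kappa)$ with $Z = (C_1, C_2)$, $|C_1| = N-1$, $|C_2| = 1$, and only when $H[C_1]$ actually carries a zero-energy virtual level.

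Theorem \ref{Main theorem} is decisive here: since $|C_1| = N-1 \geq 3$ and $n\geq 3$, any virtual level of $H[C_1]$ is in fact an $L^2$-eigenvalue with eigenfunction $\varphi_0^{C_1}$ satisfying $(1+|q|_m)^{\alpha_0-1}\varphi_0^{C_1} \in L^2(X_0[C_1])$ for every $\alpha_0 < \frac{n(N-2)-2}{2}$, and statement \textbf{(ii)} provides a spectral gap of $H[C_1]$ on the subspace gradient-orthogonal to $\varphi_0^{C_1}$. Because $n(N-2) \geq 6$ for $n\geq 3$, $N\geq 4$, I may choose $\alpha_0 > 1$. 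I then split the localized eigenfunction as $\psi u_Z(q,\xi) = \varphi_0^{C_1}(q) f(\xi) + \rho(q,\xi)$, with $\rho(\cdot,\xi)$ pointwise orthogonal (in $q$) to $\varphi_0^{C_1}$ for every fixed $\xi=\xi(Z)$. The gap of statement \textbf{(ii)} absorbs $\rho$ into the kinetic energy up to an arbitrarily small loss, reducing the spectral problem on this cone to the effective one-particle Schr\"odinger operator $-\Delta_\xi + W(\xi)$ on $L^2(X_c(Z))$, where $W(\xi) = \int |\varphi_0^{C_1}(q)|^2 I(Z)(q,\xi)\,\mathrm{d}q$.

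The polynomial decay of $\varphi_0^{C_1}$ with $\alpha_0 > 1$ combined with the short-range decay of the inter-cluster interaction $I(Z)$ implies that $W$ decays faster than $|\xi|^{-2}$ at infinity and is short-range on $\mathbb{R}^n$ with $n\geq 3$. Standard one-body results (Birman-Schwinger, or Cwikel-Lieb-Rozenblum) then yield that $-\Delta_\xi + W$ has only finitely many negative eigenvalues, and summing over the finitely many admissible two-cluster partitions produces the desired contradiction. The main obstacle is the rigorous quantitative reduction to the effective one-particle problem, and in particular the estimate on $W$: this is precisely where $\alpha_0 > 1$ is indispensable, the exact threshold that fails in the three-body Efimov regime (where the $(N-1)$-subsystem is two-body and Theorem \ref{Main theorem} does not apply) and is comfortably satisfied as soon as $N \geq 4$ and $n \geq 3$.
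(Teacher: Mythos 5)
Your overall architecture (cone localization, only the two-cluster partitions with an $(N-1)$-cluster being dangerous, Theorem \ref{Main theorem} turning the virtual level into an $L^2$-eigenfunction with a spectral gap on the gradient-orthogonal complement, the splitting $\varphi u_Z=\varphi_0 f+g$) matches the paper up to the last step, where you diverge: you reduce the dangerous cone to an effective one-body operator $-\Delta_\xi+W$ with $W(\xi)=\int|\varphi_0(q)|^2I(Z)(q,\xi)\,\mathrm{d}q$ and invoke Birman--Schwinger/CLR, whereas the paper never forms an effective potential. The gap lies in your decisive estimate $|W(\xi)|\lesssim|\xi|^{-2-\delta}$. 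The inter-cluster interaction is small only where the distances $|x_{ij}|$, $i\in C_1$, $j\in C_2$, are comparable to $|\xi(Z)|_m$; configurations in which a particle of $C_1$ strays far from its cluster's center of mass and comes close to the separated particle contribute to $W$ through the region $|q(Z)|_m\gtrsim|\xi(Z)|_m$, where $I(Z)$ is not small. The only decay available there from your ingredients is the weighted $L^2$-bound $(1+|q|_m)^{\alpha_0-1}\varphi_0\in L^2$, which gives $\int_{\{|q|_m\gtrsim|\xi|_m\}}|\varphi_0|^2\lesssim|\xi|_m^{-2(\alpha_0-1)}$; since Theorem \ref{Main theorem} only allows $\alpha_0<\frac{n(N-2)-2}{2}$ --- equal to $2$ in the borderline case $n=3$, $N=4$ --- the exponent $2(\alpha_0-1)$ is strictly below $2$, and the claim that $W$ decays faster than $|\xi|^{-2}$ does not follow. (It may be recoverable by a H\"older/Sobolev argument in the $x_{ij}$-variable using $\nabla(|q|_m^{\alpha_0}\varphi_0)\in L^2$, but that is precisely the hard quantitative work, and $V_{ij}\in L_{\mathrm{loc}}^{n/2}$ means even defining $W$ pointwise needs care.)

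The paper sidesteps this entirely: it keeps the cutoff $u_Z$ in place and estimates $\langle I(Z)\varphi u_Z,\varphi u_Z\rangle\geq-\varepsilon_2\Vert\nabla_{\xi(Z)}(\varphi u_Z)\Vert^2$ directly from the support property $|x_{ij}|\geq C|\xi(Z)|_m$ on $K(Z,\kappa)$, as in \eqref{I phif small}, so $|\varphi_0|^2V_{ij}$ is never integrated over all of $q$-space. The genuinely delicate terms --- which your sketch does not address --- are instead the localization error $\varepsilon_1\Vert|q(Z)|_m^{-1}\varphi_0f\Vert^2_{\Omega(Z)}$ on the transition annulus, controlled through the uniform smallness of $\Phi(\xi)=\int_{\tilde{\Omega}(Z,\xi)}|\varphi_0|^2\,\mathrm{d}q$ (only $\varphi_0\in L^2$ is needed, i.e. eigenvalue rather than resonance), and the cross term between $\varphi_0f$ and $g$ in $\Vert\nabla_{\xi(Z)}(\varphi u_Z)\Vert^2$, handled via \eqref{controlled} using Lemma 5.3 of \cite{Semjon2} and $\varphi_0\in H^2$. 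Finally, your concluding step ``each region supports finitely many bound states, sum over regions'' must be packaged as the paper does: one proves $\langle H\varphi,\varphi\rangle-\varepsilon\Vert|x|_m^{-1}\varphi\Vert^2\geq0$ for all $\varphi$ supported outside a large ball and invokes Lemma \ref{A3}; a CLR count on each cone separately does not by itself bound the number of eigenvalues of $H$.
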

\begin{rem}
\begin{enumerate}
\item[\textbf{(i)}] We emphasize that in Theorem \ref{No Efimov} the cluster Hamiltonian $H[C]$ with $|C|=N-1$ may have a virtual level, which according to Theorem \ref{Main theorem} is an eigenvalue at zero. On the other hand cluster Hamiltonians $H[C']$ for clusters $C'$ with $|C'|<N-1$ are not allowed to have virtual levels due to the condition \eqref{hhmm} and the HVZ-Theorem.
\item[\textbf{(ii)}] Theorem \ref{No Efimov} can be easily generalized to the case when one of the particles has infinite mass.
\item[\textbf{(iii)}] The results of Theorem \ref{No Efimov} can be generalized to the case when the operator $H$ is considered on a subspace of functions with fixed permutational symmetry. Namely, the following theorem holds.
\end{enumerate}
\end{rem}
\begin{thm}\label{No Efimov for symmetries}
Consider the operator $H^{\sigma}=P^\sigma H$ with $n\geq3$ and $N\geq 4$, where the potentials $V_{ij}$ satisfy \eqref{2: assumptions on potential} and \eqref{2: assumptions on potential part 2}. If $n=3$, we will assume in addition that $V_{ij}\in L_{\mathrm{loc}}^2(\mathbb{R}^3)$. Let the operators $H(Z)$, the group $S(Z)$ and the inducing of the symmetry $\sigma'(Z)\prec \sigma$ be defined as in section \ref{sec}. 
\\Assume there exists $\varepsilon>0$, such that for all partitions $Z=(C_1,C_2)$ into two clusters $C_1$ and $C_2$ with $|C_1|=N-1$ or $|C_2|=N-1$ it holds
\begin{equation}
P^{\sigma'(Z)}H(Z) \geq 0 \qquad \text{and} \qquad \mathcal{S}_{\mathrm{ess}}\left( P^{\sigma'(Z)}\left( H(Z)+\varepsilon\Delta_{q(Z)} \right) \right) =[0,\infty)
\end{equation}
for all $\sigma'(Z)\prec \sigma$. Moreover, we assume that for all partitions $Z$ into two clusters $C_1$ and $C_2$ with $|C_1| \not = N-1$ and $|C_2|\not = N-1$ it holds
\begin{equation}
\mathcal{S}\left( P^{\sigma'(Z)}\left( H(Z)+\varepsilon\Delta_{q(Z)} \right) \right) =[0,\infty)
\end{equation}
for all $\sigma'(Z)\prec \sigma$. Then the discrete spectrum of $H^\sigma$ is finite. 
\end{thm}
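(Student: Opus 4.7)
The plan is to mirror the proof of Theorem \ref{No Efimov} while systematically projecting onto the symmetry subspace and invoking Theorem \ref{1: thm with symmetry} in place of Theorem \ref{Main theorem}. It suffices to exhibit a finite-dimensional subspace $\mathcal{F}\subset P^\sigma H^1(X_0)$ and a constant $c>0$ such that $\langle H^\sigma \psi,\psi\rangle \geq c\Vert \psi \Vert^2$ for every $\psi \in P^\sigma H^1(X_0)$ with $\psi \perp \mathcal{F}$; together with the compactness of $H^1(B(R))\hookrightarrow L^2(B(R))$ on a fixed ball, the min-max principle then yields the finiteness of $\mathcal{S}_{\mathrm{disc}}(H^\sigma)$.

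First I would localize the quadratic form using an IMS partition of unity $\{u_Z\}_{|Z|=2}\cup\{\mathcal{V}\}$ subordinate to the cones $K(Z,\kappa)$, exactly as in the proof of Theorem \ref{Main theorem}. By Theorem \ref{Lemma Intersection cones} and Lemma \ref{Lemma localization error}, for $\kappa$ small and $R$ large these cones are pairwise disjoint outside $B(R)$ and the localization error is dominated by $\varepsilon|x|_m^{-2}$. Since the cones $K(Z,\kappa)$ are invariant under $S(Z)$, the $u_Z$ can be chosen symmetrically so that $P^\sigma$ commutes with them. For partitions $Z=(C_1,C_2)$ with $|C_1|,|C_2|\neq N-1$, the hypothesis $\mathcal{S}(P^{\sigma'(Z)}(H(Z)+\varepsilon\Delta_{q(Z)}))=[0,\infty)$ for every $\sigma'(Z)\prec\sigma$ together with the decay $|I(Z)|\leq C|\xi(Z)|_m^{-2-\nu}$ on $\mathrm{supp}(u_Z)$ and Hardy's inequality in the $\xi(Z)$ direction gives coercivity of the corresponding piece without subtracting any finite-dimensional subspace. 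The contribution of $\mathcal{V}$ is treated analogously since on $\mathrm{supp}(\mathcal{V})$ all interparticle distances are comparable to $|x|_m$.

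The essential case is a two-cluster partition $Z=(C_1,C_2)$ with $|C_1|=N-1$ and $|C_2|=1$, where $P^{\sigma'(Z)}H[C_1]$ may possess a zero eigenvalue. Theorem \ref{1: thm with symmetry}(i), applied to the $(N-1)$-particle subsystem, provides a finite-dimensional eigenspace $\mathcal{W}_0(Z)\subset L^2(X_0[C_1])$ and, by (ii), a gap $\delta_0>0$ on its orthogonal complement. Let $\Pi_Z$ denote the orthogonal projection onto $\mathcal{W}_0(Z)$ in the $q(Z)$ variable; it commutes with $P^\sigma$ since $\mathcal{W}_0(Z)$ decomposes according to the representations $\sigma'(Z)\prec\sigma$. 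Decomposing $u_Z\psi = \Pi_Z u_Z\psi + (I-\Pi_Z)u_Z\psi$, the orthogonal piece is controlled by the gap $\delta_0$ together with the intercluster estimate above. On the range of $\Pi_Z$, expanding in an orthonormal basis $\{\varphi_k\}$ of $\mathcal{W}_0(Z)$ and integrating out the $q(Z)$-variable reduces the quadratic form to that of a finite matrix Schr\"odinger operator on $L^2(X_c(Z))\simeq L^2(\mathbb{R}^n)$ with effective potential
\begin{equation*}
W_{k\ell}(\xi)=\int_{X_0[C_1]} I(Z)(q,\xi)\,\overline{\varphi_k(q)}\varphi_\ell(q)\, \mathrm{d}q.
\end{equation*}
The polynomial decay $(1+|q|_m)^{\alpha-1}\varphi_k\in L^2$ from Theorem \ref{1: thm with symmetry}(i) with any $\alpha<\frac{n(N-2)-2}{2}$, which strictly exceeds $1$ precisely because $N\geq 4$ and $n\geq 3$, together with $|V_{ij}^{(1)}(x)|\leq C|x|^{-2-\nu}$ yields $W_{k\ell}\in L^{n/2}(\mathbb{R}^n)$ with decay $|\xi|^{-2-\mu}$ for some $\mu>0$. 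By the Cwikel–Lieb–Rozenblum bound, this matrix Schr\"odinger operator has finitely many negative eigenvalues, so the subspace $\mathcal{F}$ to be subtracted on this piece is finite-dimensional.

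The main obstacle is the derivation of the $|\xi|^{-2-\mu}$-decay of the matrix-valued potential $W_{k\ell}(\xi)$: one must handle the nondecay of $V_{ij}(x_i-x_j)$ in the $q$-directions by exploiting the polynomial weight on $\varphi_k$. This is precisely the step that fails for $N=3$, $n=3$, where the corresponding decay exponent is at most $1/2$ and the Efimov effect emerges. The bookkeeping of induced representations $\sigma'(Z)\prec\sigma$ is delicate but routine, since $\mathcal{W}_0(Z)$ splits into isotypic components and only those contained in the restriction of $\sigma$ contribute. Once the decay of $W_{k\ell}$ is established, summing the finite-dimensional contributions from all two-cluster partitions with $|C_1|=N-1$ and combining with the coercivity on the remaining pieces completes the construction of $\mathcal{F}$ and hence the proof of finiteness of $\mathcal{S}_{\mathrm{disc}}(H^\sigma)$.
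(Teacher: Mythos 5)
Your overall skeleton (IMS localization into the cones $K(Z,\kappa)$, coercivity for partitions without virtual levels, and a special treatment of the two-cluster partitions with $|C_1|=N-1$ based on Theorem \ref{1: thm with symmetry}) matches the paper's intended proof, which is to run the proof of Theorem \ref{No Efimov} with $P^\sigma$ inserted and Theorem \ref{1: thm with symmetry} replacing Theorem \ref{Main theorem}. However, your treatment of the dangerous partitions contains a genuine gap. The decay $(1+|q|_m)^{\alpha_0-1}\varphi_k\in L^2$ with $\alpha_0<\frac{n(N-2)-2}{2}$ gives, in the borderline case $N=4$, $n=3$, only $\alpha_0-1<1$; feeding this into the tail of $W_{k\ell}(\xi)=\int I(Z)\overline{\varphi_k}\varphi_\ell\,\mathrm{d}q$ over the region $|q|_m\gtrsim|\xi|_m$, where $V_{ij}(x_i-x_j)$ need not be small, yields only a bound of order $|\xi|_m^{-2(\alpha_0-1)}$ with $2(\alpha_0-1)<2$. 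Such a potential is not in $L^{n/2}$ near infinity and the Cwikel--Lieb--Rozenblum bound does not apply, so the claimed $|\xi|^{-2-\mu}$ decay is unjustified and the step fails precisely in the first nontrivial case. What actually saves the argument is that the decomposition is applied to $u_Z\psi$, supported in $K(Z,\kappa)$ where $|q(Z)|_m\leq\kappa|\xi(Z)|_m$ and hence $|x_{ij}|\geq C|\xi(Z)|_m$ for $i,j$ in different clusters; there the intercluster potential is uniformly $O(|\xi(Z)|_m^{-2-\nu})$ and no weight on $\varphi_k$ is needed. The ``main obstacle'' you identify is therefore not the real one.

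The real crux, which your proposal omits entirely, is the localization error produced by Lemma \ref{Lemma localization error}: the term $\varepsilon_1\Vert\,|q(Z)|_m^{-1}\varphi u_{Z}\Vert^2_{\Omega(Z)}$ supported on the transition shell $\kappa'|\xi(Z)|_m\leq|q(Z)|_m\leq\kappa|\xi(Z)|_m$. Its component along $\varphi_0 f$ cannot be absorbed by the gap $\delta_0$, and since $|q(Z)|_m\geq\kappa'|\xi(Z)|_m$ there with $(\kappa')^{-2}$ large, it is a priori an inverse-square perturbation in $\xi$ with a large coefficient. The paper controls it by \eqref{inequality error q}: because $\varphi_0\in L^2(X_0[C_1])$ (eigenvalue, not resonance --- this is where Theorem \ref{1: thm with symmetry} is really used), the mass $\Phi(\xi)=\int_{\tilde{\Omega}(Z,\xi)}|\varphi_0|^2\,\mathrm{d}q$ tends to zero as $|\xi(Z)|_m\to\infty$, so the coefficient in front of $\Vert\,|\xi(Z)|_m^{-1}f\Vert^2$ becomes arbitrarily small and is beaten by Hardy's inequality in the $\xi$ variable. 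A second omitted point: the paper's $g$ is gradient-orthogonal, not $L^2$-orthogonal, to $\varphi_0$, so the cross term $\langle\varphi_0 f|\xi(Z)|_m^{-1},g|\xi(Z)|_m^{-1}\rangle$ does not vanish and must be controlled via Lemma 5.3 of \cite{Semjon2}; if you instead use the $L^2$-orthogonal projection $\Pi_Z$, the gap estimate of Theorem \ref{1: thm with symmetry}\textbf{(ii)} no longer applies verbatim to $(I-\Pi_Z)u_Z\psi$, and this needs to be repaired. Finally, the remainder $\mathcal{V}$ cannot be dispatched by claiming all interparticle distances are comparable to $|x|_m$ on its support (a configuration with one close pair and all remaining particles mutually far apart lies in no two-cluster cone); one must iterate the localization over partitions of order $3,\dots,N-1$ as in the proofs of Theorems \ref{Main theorem} and \ref{No Efimov}.
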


\subsection*{Proof of Theorem \ref{No Efimov}}
Consider the functional $L_1: H^1(X_0) \rightarrow \mathbb{R}$, defined by 
\begin{equation}
L_1[\varphi]:=\langle H\varphi,\varphi\rangle-\varepsilon\Vert |x|_m^{-1} \varphi \Vert^2.
\end{equation}
Due to Lemma \ref{A3} (see in Appendix) to prove the theorem it suffices to show that there exist constants $\varepsilon>0$ and $b>0$, such that $L_1[\varphi]\geq 0$ holds for all functions $\varphi \in H^1(X_0)$ with $\supp \varphi \subset \{x\in X_0 , \ |x|_m\geq b\}$. Applying Lemma \ref{Lemma localization error} yields for partitions $Z=(C_1,C_2)$ into two clusters $C_1,C_2$
\begin{align}
	L_1[\varphi] \ge \sum\limits_{Z,|Z|=2} L_2[\varphi u_{Z}] + L_3[\varphi \mathcal{V}],
\end{align}
where $\mathcal{V}=\sqrt{1-\sum_{Z,|Z|=2} u^2_{Z}}$ and the functionals $L_2,L_3:H^1(X_0)\rightarrow \mathbb{R}$ are defined by
\begin{align}
		L_2[\psi] &:= \langle H\psi,\psi\rangle - \varepsilon \Vert \vert x\vert_m^{-1} \psi\Vert^2- \varepsilon_1 \Vert \vert q(Z)\vert_m^{-1} \psi \Vert^2_{\Omega(Z)},\\
		L_3 [\psi]&:=\langle H\psi, \psi\rangle - (\varepsilon+\varepsilon_1) \Vert \vert x\vert_m^{-1}\psi\Vert^2,
\end{align}
where 
\begin{equation}
\Omega(Z_2)\subset\{x\in X_0: |x|_m\geq b, \ \kappa'|\xi(Z)|_m\leq |q(Z)|_m \leq \kappa |\xi(Z)|_m \}.
\end{equation}
The constants $\varepsilon_1>0$ and $\kappa>0$ can be chosen arbitrarily small and $\kappa'>0$ depends on $\varepsilon_1$ and $\kappa$. At first we prove that $L_2[\varphi u_{Z}]\geq 0$. We need to distinguish between two different types of partitions $Z=(C_1,C_2)$: 
\begin{enumerate}
\item[\textbf{(i)}] $|C_1|<N-1$ and $|C_2|<N-1$,
\item[\textbf{(ii)}] $|C_1|=N-1$ or $|C_2|=N-1$.
\end{enumerate}
As it was mentioned in the remark after Theorem \ref{No Efimov} in case \textbf{(i)} the operators $H[C_1]$ and $H[C_2]$ do not have virtual levels, i.e. there exists a constant $\mu_0>0$, such that
\begin{equation}
\langle H(Z)\varphi u_{Z} ,\varphi u_{Z}\rangle \geq \mu_0 \Vert \nabla_{q(Z)} (\varphi u_{Z}) \Vert^2 
\end{equation}
holds for any $\varphi \in H^1(X_0)$. In this case analogously to the proof of Theorem \ref{Main theorem} we conclude that $L_2[\varphi  u_{Z}]\geq 0$.
\\We turn to case \textbf{(ii)}, where the cluster Hamiltonians may have virtual levels. Suppose that $|C_1|=N-1$ and that $H[C_1]$ has a virtual level. Due to Theorem \ref{Main theorem} the operator $H[C_1]$ has an eigenvalue at zero. Let $\varphi_0$ be the corresponding eigenfunction with $\Vert \varphi_0 \Vert = 1$. We estimate $L_2[\varphi u_{Z}]$ by adapting the strategy of \cite{Semjon2}. We write
\begin{equation}\label{decomposition in eigenfunction}
	\varphi u_{Z}(q(Z), \xi(Z)) = \varphi_0\big(q(Z)\big) f\big(\xi(Z\big)\big) + g\big(q(Z),\xi(Z)\big), 
\end{equation} 
where
\begin{equation}
f(\xi(Z)) = \Vert \nabla_{q(Z)}\varphi_0 \Vert^{-2} \langle \nabla_{q(Z)}(\varphi u_Z),\nabla_{q(Z)}\varphi_0 \rangle_{q(Z)}
\end{equation}
and 
\begin{equation}\label{orthogonality of phi 0 f and g}
\langle \nabla_{q(Z)} g(\cdot,\xi({Z})), \nabla_{q(Z)}\varphi_0\rangle = 0
\end{equation}
holds for almost every $\xi(Z)$. Then we have
\begin{align}\label{5.16}
\begin{split}
	L_2[\varphi u_{Z}] &= \langle H[C_1]\ g,g\rangle  +\langle H[C_1]\ \varphi_0f,\varphi_0f\rangle + 2\Re\langle H[C_1]\ g,\varphi_0f\rangle\\
	&\qquad \qquad \quad\ \ \ \ \ \ \ \ + \Vert \nabla_{\xi({Z})} (\varphi u_{Z})\Vert^2+\langle I(Z)\varphi  u_{Z},\varphi  u_{Z}\rangle\\
	&\ \qquad \qquad \ \ \ \ \ \ \ \qquad- \varepsilon \Vert \vert x\vert_m^{-1} \varphi  u_{Z}\Vert^2- \varepsilon_1 \Vert \vert q(Z)\vert_m^{-1} \varphi  u_{Z}\Vert^2_{\Omega(Z)}.
\end{split}
\end{align}
Since $H[C_1]\varphi_0=0$ the second term and the third term on the r.h.s. of \eqref{5.16} are zero. Due to the orthogonality condition \eqref{orthogonality of phi 0 f and g}, Theorem \ref{Main theorem} yields
\begin{equation}
 	 \langle H[C_1] g,g\rangle \ge \delta_0 \Vert\nabla_{q(Z)} g\Vert^2
 \end{equation}
for some $\delta_0>0$. We arrive at
\begin{align}
\begin{split}
	L_2[\varphi u_{Z}] \geq &\delta_0 \Vert\nabla_{q(Z)} g\Vert^2  + \Vert \nabla_{\xi({Z})} (\varphi u_{Z})\Vert^2+\langle I(Z)\varphi  u_{Z},\varphi  u_{Z}\rangle\\
	&\ \qquad \qquad \ \ \ \ \ \ \qquad- \varepsilon \Vert \vert x\vert_m^{-1} \varphi  u_{Z}\Vert^2- \varepsilon_1 \Vert \vert q(Z)\vert_m^{-1} \varphi  u_{Z}\Vert^2_{\Omega(Z)}.
\end{split}
\end{align}
Now since $V_{ij}\geq V_{ij}^{(1)}$, we have
\begin{align}\label{I phif small}
	\langle I(Z)\varphi u_{Z},\varphi u_{Z}\rangle &\geq\sum_{i\in C_1,j\in C_2}\langle V_{ij}^{(1)}\varphi  u_{Z},\varphi  u_{Z}\rangle\geq -\sum_{i\in C_1,j\in C_2}\langle |V_{ij}^{(1)}|\varphi  u_{Z},\varphi  u_{Z}\rangle \notag
	\\&\geq -C\Vert \vert\xi(Z)\vert_m^{-1-\frac{\nu}{2}} \varphi  u_{Z}\Vert^2	\geq -\varepsilon_2 \Vert |\nabla_{\xi(Z)} \varphi  u_{Z} \Vert^2,
\end{align}
where $\varepsilon_2>0$ can be chosen arbitrarily small by choosing $b>0$ sufficiently large. Here we used that on $\supp (\varphi  u_{Z})$ we have $\vert V_{ij}^{(1)}(x_{ij})\vert\le C \vert \xi(Z)\vert_m^{-2-\nu}\leq \frac{\varepsilon_2}{4} \vert \xi(Z)\vert_m^{-2}$ for $i,j$ belonging to different clusters. This implies
\begin{align}
\begin{split}
	L_2[\varphi u_{Z}] \geq \delta_0 \Vert\nabla_{q(Z)} g\Vert^2  +&(1-\varepsilon_2) \Vert \nabla_{\xi({Z})} (\varphi u_{Z})\Vert^2
	\\ &- \varepsilon \Vert \vert x\vert_m^{-1} \varphi  u_{Z}\Vert^2- \varepsilon_1 \Vert \vert q(Z)\vert_m^{-1} \varphi  u_{Z}\Vert^2_{\Omega(Z)}.
	\end{split}
\end{align}
Since $|x|_m^{-1} \leq |\xi(Z)|_m^{-1}$, applying Hardy's inequality yields
\begin{equation}
(1-\varepsilon_2) \Vert \nabla_{\xi(Z)} (\varphi u_{Z}) \Vert^2 - \varepsilon\Vert |x|_m^{-1}\varphi u_{Z} \Vert^2 \geq (1-\varepsilon_3)\Vert \nabla_{\xi(Z)} (\varphi u_{Z}) \Vert^2, 
\end{equation}
where $\varepsilon_3=\varepsilon_2+4\varepsilon$. This implies
\begin{equation}\label{519}
	L_2[\varphi u_{Z}] \ge \delta_0 \Vert \nabla_{q(Z)}g \Vert^2+(1-\varepsilon_3)\Vert \nabla_{\xi({Z})}(\varphi u_{Z})\Vert^2 - \varepsilon_1 \Vert \vert q(Z)\vert_m^{-1} \varphi  u_{Z}\Vert^2_{\Omega(Z)}.
\end{equation}
Let us estimate the last term on the r.h.s. of \eqref{519}. Note that
\begin{align}
	\Vert \vert q(Z)\vert_m^{-1} \varphi u_{Z} \Vert^2_{ \Omega(Z)} &\leq 2\Vert \vert q(Z)\vert_m^{-1}\varphi_0 f \Vert^2_{ \Omega(Z)}+2\Vert  \vert q(Z)\vert_m^{-1} g \Vert^2_{ \Omega(Z)}.
\end{align}
By combining the terms $\delta_0 \Vert \nabla_{q(Z)}g \Vert^2$ and $2\varepsilon_1 \Vert  \vert q(Z)\vert_m^{-1} g \Vert^2_{ \Omega(Z)}$ and applying Hardy's inequality we get for small $\varepsilon_1>0$
\begin{equation}\label{5212}
	L_2[\varphi  u_{Z}] \geq (1-\varepsilon_3)\Vert \nabla_{\xi({Z})} (\varphi u_{Z})\Vert^2 - 2\varepsilon_1 \Vert \vert q(Z)\vert_m^{-1} \varphi_0 f\Vert^2_{\Omega(Z)}.
\end{equation}
Now we estimate the last term on the r.h.s. of \eqref{5212}. Note that for $\kappa>0$ sufficiently small and $x\in \Omega(Z)$ it holds $|\xi(Z)|_m\geq \frac{b}{2}$ and
\begin{align}
\begin{split}
	\Vert \vert q(Z)\vert_m^{-1} \varphi_0f \Vert^2_{ \Omega(Z)} &\le  \int\limits_{\{\vert \xi(Z)\vert_m\ge \frac{b}{2}\}} \; \vert f\vert^2 \d \xi(Z)  \int\limits_{\tilde{\Omega}(Z,\xi(Z))} {\vert\varphi_0\vert^2}{\vert q(Z)\vert_m^{-2}} \; \d q(Z)\\
	&\le (\kappa')^{-2}\int\limits_{\{\vert \xi(Z)\vert_m\ge \frac{b}{2}\}}  \Phi{\vert f\vert^2}{\vert \xi(Z)\vert_m^{-2}}\; \d \xi(Z),
\end{split}
\end{align}
where $\tilde{\Omega}(Z,\xi(Z)) = \left\{q( Z): \ \kappa' \vert \xi(Z)\vert_m \le \vert q(Z)\vert_m\le \kappa \vert \xi(Z)\vert_m\right\}$ and 
\begin{equation}
	\Phi\left( \xi(Z)\right)=\int_{\tilde{\Omega}(Z,\xi(Z))} {\vert\varphi_0(q(Z))\vert^2} \; \d q(Z) .
\end{equation}
Since $\varphi_0$ is square-integrable in $q(Z)$, for fixed $\kappa'>0$ and any $\delta >0$ one can find $b>0$, such that $\Phi\left(\xi(Z)\right)<\delta$ holds uniformly in $\vert\xi(Z)\vert_m\ge \frac{b}{2}$. Hence, for any fixed $\kappa'>0$ and $\varepsilon_4>0$ we can choose $b>0$ sufficiently large, such that
\begin{equation}\label{inequality error q}
	\Vert \vert q(Z)\vert_m^{-1}  \varphi_0f \Vert^2_{\Omega(Z)}\le \varepsilon_4\int |\xi(Z)|_m^{-2} |f(\xi(Z))|^2 \, \mathrm{d}\xi(Z).
\end{equation}
This, together with \eqref{5212} yields
\begin{equation}\label{last}
L_2[\varphi  u_{Z}] \geq (1-\varepsilon_3)\Vert \nabla_{\xi({Z})} (\varphi u_{Z})\Vert^2 - 2\varepsilon_1 \varepsilon_4 \Vert |\xi(Z)|_m^{-1}f \Vert^2.
\end{equation}
In the following we will estimate the first term on the r.h.s. of \eqref{last}. By Hardy's inequality we have
\begin{align}\label{14141}
\Vert \nabla_{\xi({Z})} (\varphi u_{Z})\Vert^2&\geq \frac{1}{4}\Vert \varphi u_{Z} |\xi(Z)|_m^{-1}\Vert^2 = \frac{1}{4}\Vert \varphi_0f|\xi(Z)|_m^{-1}+g|\xi(Z)|_m^{-1} \Vert^2
\\ &\geq \frac{1}{4}\left(\Vert \varphi_0f|\xi(Z)|_m^{-1} \Vert^2 + \Vert g|\xi(Z)|_m^{-1} \Vert^2 - 2 \left| \langle \varphi_0 f|\xi(Z)|_m^{-1},g|\xi(Z)|_m^{-1} \rangle \right|\right).\notag
\end{align}
Note that functions $f$ and $g$ are supported in the region $|\xi(Z)|_m \geq (1+\kappa^2)^{-\frac{1}{2}}|x|_m$, where $|x|_m\geq b>0$. Hence, $f|\xi(Z)|_m^{-1} \in L^2(X_c(Z))$ and $g |\xi(Z)|_m^{-1} \in L^2(X_0)$. Under the assumptions on the potentials the domain of the operator $H(Z)$ is given by $H^2(X_0(Z))$, see \cite{reed}. Hence, due to $\varphi_0 \in H^2(X_0(Z))$ and $\langle \nabla_{q(Z)}\varphi_0, \nabla_{q(Z)}g |\xi(Z)|^{-1}\rangle = 0$, Lemma 5.3 in \cite{Semjon2} yields
\begin{equation}\label{controlled}
\left| \langle \varphi_0 f|\xi(Z)|_m^{-1},g|\xi(Z)|_m^{-1} \rangle\right| \leq 2^{-1}(1-\omega)\left(\Vert \varphi_0f|\xi(Z)|_m^{-1} \Vert^2+ \Vert g|\xi(Z)|_m^{-1} \Vert^2 \right),
\end{equation}
where $\omega>0$ depends on $\Vert \varphi_0\Vert, \Vert \nabla_{q(Z)}\varphi_0\Vert$ and $\Vert \Delta_{q(Z)} \varphi_0\Vert$ only. Combining \eqref{controlled} and \eqref{14141} we get
\begin{equation}
\Vert \nabla_{\xi(Z)} (\varphi u_{Z}) \Vert^2 \geq \frac{\omega}{2} \left(\Vert \varphi_0f|\xi(Z)|_m^{-1} \Vert^2+ \Vert g|\xi(Z)|_m^{-1} \Vert^2 \right)\geq \frac{\omega}{2} \Vert f|\xi(Z)|^{-1} \Vert^2.
\end{equation}
This, together with \eqref{last} implies $L_2[\varphi  u_{Z}]\geq 0$.
\\Thus, it remains to prove that $L_3[\varphi \mathcal{V}]\ge 0$ holds for every function $\varphi \in H^1(X_0)$ satisfying $\supp \varphi \subset \{x\in X_0, \ |x|_m\geq b\}$. For any partition $Z=(C_1,\dots, C_p)$ with $p\ge 3$ the corresponding cluster Hamiltonians $H[C_i]$ do not have virtual levels. Therefore, we can estimate the functional $L_3[\varphi \mathcal{V}]
$ in cones corresponding to partitions $Z$ into $3\le p\le N-1$ clusters, similarly to the proof of Theorem \ref{Main theorem}. In the region, which remains after the separation of cones corresponding to all $Z$ with $p\leq N-1$ it holds $\vert V_{ij}^{(1)}(x_{ij})\vert \le c\vert x\vert_m^{-2-\nu} $ for all $i\neq j$. Applying Hardy's inequality completes the proof. 
\qed	
\section{Systems of $N\geq 4$ fermions in dimension $n=1$ or $n=2$}
We consider a system of $N\ge 3$ one- or two-dimensional particles and the corresponding Hamiltonian given in \eqref{Definition Hamiltonian}, where the potentials $V_{ij}$ satisfy 
\begin{equation}\label{2: assumptions on potential in d=1,2} 
V_{ij} \in L_{\mathrm{loc}}^2(\mathbb{R}^n) \qquad \text{and} \qquad |V_{ij}(x)| \leq C|x|^{-2-\nu}, \ \text{if}\ |x|\geq A
\end{equation}
for some constants $A>0$ and $\nu>0$. Further, we assume that all particles are identical, i.e. $m_i = m_j$ for $1\le i,j\le N$ and for $i\neq j, \ k\neq l$ we have
\begin{equation}\label{2: assumption part 3 in d=1,2}
	V_{ij}(x)=V_{ij}(-x), \qquad V_{ij}(x) = V_{kl}(x), \quad x\in \R^n, \ n=1,2.
\end{equation} 
We define the space $X_0$ and the operator $H$ according to \eqref{R_0[C]} and \eqref{H_0[C]}, respectively.  Since the particles are identical, the operator $H$ is invariant under action of the group $S_N$ of permutation of particles. Let $\sigma_{\mathrm{as}}$ be the irreducible representation of $S_N$, antisymmetric with respect to permutation of each pair of particles. Let $P^{\sigma_{\mathrm{as}}}$ be the corresponding projection in $X_0$ onto the subspace of the $\sigma_{\mathrm{as}}$. We will consider the operator $H$ on the subspace $P^{\sigma_{\mathrm{as}}} L^2\left(X_0\right)$ and define $H^{\sigma_{\mathrm{as}}} = P^{\sigma_{\mathrm{as}}} H$.
Given a cluster $C$, let $S[C]$ be the subgroup of $S_N$ corresponding to permutations of particles within the cluster $C$. We denote by $\sigma_{\mathrm{as}}[C]$ the irreducible representation of $S[C]$, antisymmetric with respect to permutation of each pair of particles in $C$. Let $H^{\sigma_{\mathrm{as}}}[C] = P^{\sigma_{\mathrm{as}}[C]}H[C]$. 
\begin{df}
For an arbitrary cluster $C$ we say that the corresponding operator $H^{\sigma_{\mathrm{as}}}[C]$ has a virtual level at zero, if $H^{\sigma_{\mathrm{as}}}[C] \ge 0$ and for any $\varepsilon>0$ sufficiently small it holds
\begin{equation}
\ \ \ \ \, \mathcal{S}_\mathrm{ess}\left(P^{{\sigma_{\mathrm{as}}}[C]}\big[-(1-\varepsilon)\Delta_0[C] + V[C]\big]\right)=[0,\infty)
\end{equation}
 and
\begin{equation}
\mathcal{S}_{\mathrm{disc}}\left(P^{{\sigma_{\mathrm{as}}}[C]}\big[-(1-\varepsilon)\Delta_0[C] + V[C]\big]\right)\neq \emptyset.
\end{equation}
\end{df}
We are now ready to state the main theorem of this section.
\begin{thm}\label{111}
Consider a system of $N\geq 3$ particles in dimension $n=1$ or $n=2$. Assume that the potentials $V_{ij}$ satisfy \eqref{2: assumptions on potential in d=1,2} and \eqref{2: assumption part 3 in d=1,2}. Further, assume that $H^{\sigma_{\mathrm{as}}}$ has a virtual level at zero and for each cluster $C$ with $|C|>1$ and sufficiently small $\varepsilon>0$ it holds
	\begin{equation}
	\mathcal{S}\left(P^{{\sigma_{\mathrm{as}}}[C]}\big[-(1-\varepsilon)\Delta_0[C] + V[C]\big]\right)=[0,\infty).
	\end{equation}
Then, zero is an eigenvalue of $H^{\sigma_{\mathrm{as}}}$. 
\end{thm}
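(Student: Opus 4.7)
The plan is to reduce Theorem \ref{111} to an application of Theorem \ref{thm: Theorem 1'} part \textbf{(i)} on the antisymmetric subspace $P^{\sigma_{\mathrm{as}}} \dot{H}^1(X_0)$. According to the remark following Theorem \ref{thm: Theorem 1'}, the conclusion of that theorem remains valid in dimensions $n=1,2$ provided a Hardy-type inequality $\Vert \nabla_0 \psi \Vert^2 \geq C \Vert \psi \,|x|_m^{-1} \Vert^2$ holds on the symmetry subspace with $C>1$ sufficiently large, so that the condition \eqref{1': assumption for agmon} can be verified with some $\alpha_0>1$.

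The first step is to establish such an enhanced Hardy inequality on $P^{\sigma_{\mathrm{as}}} \dot{H}^1(X_0)$. Because antisymmetric functions of identical fermions must vanish on every coincidence set $\{x_i=x_j\}$, the effective Hardy constant in the configuration space is strictly larger than the one predicted by the dimension $n(N-1)$ alone. Concretely, in $n=1$ the coincidence hyperplanes partition $X_0$ into simplicial sectors on which Dirichlet boundary conditions hold, and a standard change-of-variables argument gives a Hardy inequality with constant that grows with $N$; in $n=2$ an analogous argument (using that antisymmetric functions vanish on codimension-two loci to the appropriate order) yields the same conclusion. I would collect such an estimate into a separate lemma, phrased as: there exists $C_H > 1$ depending only on $N$ and $n$ such that $\Vert \nabla_0 \psi \Vert^2 \geq C_H \Vert \psi\,|x|_m^{-1}\Vert^2$ for all $\psi \in P^{\sigma_{\mathrm{as}}} \dot{H}^1(X_0)$, along with the analogous inequality with $|x|_m$ replaced by $|q(Z)|_m$ on the antisymmetric subspace for any partition $Z$.

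With this in hand, the second step is to verify the Agmon-type estimate \eqref{1': assumption for agmon} on $P^{\sigma_{\mathrm{as}}} H^1(X_0)$ with some $\alpha_0 > 1$ and some $\gamma_0 \in (0,1)$, for all $\psi$ supported in $\{|x|_m\geq b\}$ with $b$ large. The argument follows the template of the proof of Theorem \ref{Main theorem}: partition the support using the cone localization Lemma \ref{Lemma localization error} successively over partitions $Z$ with $|Z|=2,3,\ldots,N-1$. On each cone $K(Z,\kappa)$, the hypothesis $\mathcal{S}(P^{\sigma_{\mathrm{as}}[C]}[-(1-\varepsilon)\Delta_0[C] + V[C]])=[0,\infty)$ for each cluster with $|C|>1$ forces a strict gap $\langle H(Z)\psi u_Z, \psi u_Z\rangle \geq \mu_0 \Vert \nabla_{q(Z)}(\psi u_Z)\Vert^2$, which absorbs both the localization error $\varepsilon \Vert |q(Z)|_m^{-1}\psi u_Z\Vert^2$ and the intercluster interaction (the latter bounded by $C|\xi(Z)|_m^{-2-\nu}$ via \eqref{2: assumptions on potential in d=1,2}), leaving a positive remainder that controls $\alpha_0^2 \Vert |x|_m^{-1} \psi u_Z\Vert^2$ using Hardy in the $\xi(Z)$-direction. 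In the innermost region where all interparticle distances are comparable to $|x|_m$, the potentials are bounded by $C|x|_m^{-2-\nu}$ and one closes the estimate by invoking the enhanced Hardy inequality from the first step, whose constant $C_H$ is large enough to leave room for some $\alpha_0>1$.

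Having established both ingredients, Theorem \ref{thm: Theorem 1'}\,\textbf{(i)} (in the form valid for $n=1,2$ on symmetry subspaces with Hardy inequality) applies to $H^{\sigma_{\mathrm{as}}}$, yielding that zero is an eigenvalue of $H^{\sigma_{\mathrm{as}}}$. The main obstacle I expect is the first step: quantifying the antisymmetry-enhanced Hardy constant in both dimensions uniformly in $N \geq 3$, and verifying that it exceeds the threshold needed to accommodate the losses $\gamma_0$ and the contributions of the intercluster interactions so that $\alpha_0$ can be taken strictly greater than $1$. The remaining cone-decomposition bookkeeping is entirely parallel to the $n\geq 3$ case already carried out for Theorem \ref{Main theorem}, with the sole modification that each appearance of the standard Hardy inequality in the ambient $X_0$ must be replaced by its antisymmetric counterpart.
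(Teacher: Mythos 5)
Your overall architecture is the same as the paper's: reduce to Theorem \ref{thm: Theorem 1'}\,\textbf{(i)} in the form valid on symmetry subspaces where a Hardy inequality holds, verify the Agmon condition \eqref{1': assumption for agmon} with some $\alpha_0>1$ by the cone decomposition of Theorem \ref{Main theorem}, use the cluster spectral gap plus a cluster-antisymmetric Hardy inequality in the $q(Z)$-variable inside each cone, and extract the needed improvement of the Hardy constant from antisymmetry in the remaining ``free'' region. The gap is in the one place you defer to a separate lemma, namely the justification of the enhanced Hardy inequality, and for $n=2$ your proposed mechanism fails. The coincidence sets $\{x_i=x_j\}$ have codimension two in $X_0$ when $n=2$; they do not disconnect $X_0$ into sectors, they have zero $H^1$-capacity, and simple vanishing of an $H^1$-function on such a set imposes no constraint at all on the quadratic form -- so ``vanishing on codimension-two loci'' cannot raise the Hardy constant, and antisymmetry does not force any higher-order vanishing there. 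The correct mechanism, and the one the paper uses, is representation-theoretic rather than geometric: since radial functions are invariant under the (orthogonal, equal-mass) permutation action, every antisymmetric function is orthogonal to all functions of $|x|_m$ alone, so its spherical-harmonic expansion starts at angular momentum $l\geq 1$ and one gets the improved bound \eqref{Hardy-type inequality}, $\Vert\nabla_0(\varphi\mathcal V)\Vert^2\geq L(L+1)\Vert\,|x|_m^{-1}\varphi\mathcal V\Vert^2$ with $L=1+\tfrac12(\dim X_0-3)$. This exceeds $1$ exactly in the cases that need it ($n=2$, $N=3$ and $n=1$, $N=4,5$), while for $n=2$, $N\geq4$ and $n=1$, $N\geq6$ no enhancement is needed because $\dim X_0=n(N-1)\geq 5$ and the ordinary Hardy constant $\tfrac{(\dim X_0-2)^2}{4}$ already allows $\alpha_0>1$ -- a simplification your proposal misses.

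Your sector picture is essentially correct only for $n=1$, and there it is needed only in the single case $n=1$, $N=3$, where $\dim X_0=2$ and the orthogonality-to-radial-functions bound gives merely $L(L+1)=\tfrac34<1$. The paper handles this case by a separate explicit computation (Lemma \ref{Hardy dimension 1}): the three lines $x_i=x_j$ cut the plane $X_0$ into six sectors of angle $\pi/3$, antisymmetry forces the angular Fourier series to start at $\sin(3\theta)$, and one obtains the constant $9$. Trying instead, as you propose, to quantify the first Dirichlet eigenvalue of the spherical simplex ``uniformly in $N\geq3$'' in both dimensions is at once unnecessary and, for $n=2$, unavailable. To repair your proof, replace the capacity-free vanishing argument by the orthogonality-to-radial-functions argument for all cases except $n=1$, $N=3$, and treat that last case by the explicit two-dimensional angular expansion.
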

\begin{proof}
	According to Theorem \ref{thm: Theorem 1'}, it suffices to show that there exist $R>0$, $\gamma_0>0$ and $\alpha_0>1$, such that for any function $\varphi \in P^{\sigma_{\mathrm{as}}}H^1\left(X_0\right)$ with $\mathrm{supp}(\varphi) \subset X_0\setminus B(R)$ and $B(R)=\{x\in X_0: |x|_m \leq R\}$ we have
	\begin{equation}
		L[\varphi]:= \langle H^{\sigma_{\mathrm{as}}} \varphi,\varphi\rangle - \gamma_0 \Vert \nabla_0\varphi \Vert^2 - \alpha_0 \Vert \vert x\vert_m^{-1}\varphi \Vert^2\ge 0.
	\end{equation}
Note that in dimensions $n=1$ and $n=2$ Hardy's inequality holds for antisymmetric functions \cite{birman}. If $n=2$ and $N\ge 4$ or $n=1$ and $N\ge 6$ we can repeat the same arguments as in Theorem \ref{Main theorem} with $0\leq\alpha_0 < \frac{n(N-1)-2}{2}$. Hence, we only need to consider the cases $n=2, \ N=3$ and $n=1, N=3,4,5$.
\\We start with the case $n=2, \ N=3$. Since Hardy's inequality on the space of antisymmetric functions holds, by the same arguments as in the proof of Theorem \ref{Main theorem}, it suffices to show that $L_2[\varphi \mathcal{V}]\ge 0$ holds for $\varphi \in P^{\sigma_{\mathrm{as}}} H^1\left(X_0\right)$, where $L_2[\varphi\mathcal{V}]$ and $\mathcal{V}$ are defined in \eqref{Definition functionals}. Multiplication with $\mathcal{V}$ does not change the symmetry property of $\varphi$, the function $\varphi\mathcal{V}$ is antisymmetric with respect to permutations of particles. Hence, it is orthogonal to all functions depending on $\vert x\vert_m$ only. Therefore, for $\varphi\mathcal{V}$ we have (see for example in \cite{klaus}, p. 254)
	\begin{equation}\label{Hardy-type inequality}
		\Vert \nabla_0\left(\varphi\mathcal{V}\right)\Vert^2 \ge {L(L+1)}\Vert\vert x\vert_m^{-1} \varphi\mathcal{V}\Vert^2, \quad L=l+\frac{1}{2}\left(\dim X_0-3\right)
	\end{equation}
with $l=1$ and $\mathrm{dim}X_0=4$. Substituting this inequality in the formula for $L_2[\varphi\mathcal{V}]$ gives the desired estimate for $n=2$ and $N\ge 3$.\\
Now we turn to the case $n=1$ and $N=3,4,5$. Let $N=4$ or $N=5$. In this case we have $\mathrm{dim}X_0=3$ or $\mathrm{dim}X_0=4$, respectively. By the same argument as in the case of $n=2, \ N=3$, we only have to consider the functional $L_2[\varphi\mathcal{V}]$. Since $\varphi\mathcal{V}$ is orthogonal to all functions depending on $|x|_m$ only, applying the Hardy-type inequality \eqref{Hardy-type inequality} with $l=1$ and $\dim X_0=3$ or $\dim X_0=4$, respectively, yields the result for $n=1$ and $N=4, N=5$. To complete the proof it remains to consider the case $n=1$ and $N=3$. For this case we will prove the following
\begin{lem}\label{Hardy dimension 1}
Let $X_0$ be the space defined in \eqref{R_0[C]} with $n=1$ and $N=3$ and let $\psi \in C_0^1\left(X_0\right)$ be antisymmetric with respect to exchange of each pair of coordinates $(x_i,x_j)$. Then we have
\begin{equation}\label{aaa}
	\Vert \nabla_0\psi\Vert^2 \ge 9 \Vert \psi \vert x\vert_m^{-1}\Vert^2.
\end{equation}
\end{lem}
\begin{rem}
Combining the arguments of the proof of Theorem \ref{thm: Theorem 1} with the estimate \eqref{aaa} one can easily obtain an estimate on the rate of decay of virtual levels in this system. In particular, it is easy to see that a zero-energy eigenfunction $\varphi_0$ for a system of three one-dimensional fermions on the subspace of functions antisymmetric with respect to permutations of coordinates of particles satisfies $(1+|x|_m)^{2-\varepsilon}\varphi_0 \in L^2(X_0)$ for any $\varepsilon>0$.
\end{rem} \ \\
\textit{Proof of Lemma \ref{Hardy dimension 1}}.
Note that for $n=1$ and $N=3$ we have $\dim X =2$. On the plane $X_0$ we introduce the polar coordinates $\psi =\psi(\rho, \theta)$, where $\rho=\sqrt{\sum_{i=1}^3 x_i^2}$ and $\theta$ is the angle between $x$ and $\frac{1}{\sqrt 2}(1,-1,0)$. Obviously, the lines $x_1=x_2, \ x_2=x_3, \ x_1=x_3$ cut $X_0$ into six sectors, each sector having angle $\frac{\pi}{3}$. Since $\psi $ is antisymmetric with respect to reflection on these symmetry axes, we conclude that $\psi$ is a periodic function in the variable $\theta$ with period $\frac{\pi}{3}$ and $\psi (\rho,0)=0$. We represent $\psi$ as a Fourier series, i.e. we write for almost all $\rho$
 \begin{equation}\label{Fourier series}
 	\psi(\rho,\theta) = \sum\limits_{n=1}^\infty a_n(\rho) \sin (3n\theta).
 \end{equation}
Differentiating \eqref{Fourier series} we get 
\begin{equation}
	\Vert \nabla_0 \psi\Vert^2 \ge \left\Vert {\frac{1}{\rho}\frac{\partial}{\partial \theta}}\psi \right\Vert^2\ge 9 \Vert \psi \rho^{-1}\Vert^2.
\end{equation} 
This completes the proof.
\end{proof}
For the absence of the Efimov effect in systems of $N\geq 4$ one- or two-dimensional particles we get now the following result.
\begin{thm}\label{No Efimov fermions}
	Let $n=1$ or $n=2$ and consider a system of $N\ge 4$ particles. Assume that the potentials $V_{ij}$ satisfy \eqref{2: assumptions on potential in d=1,2} and \eqref{2: assumption part 3 in d=1,2}. Further, assume that for each cluster $C$ we have $H^{\sigma_{\mathrm{as}}}[C]\ge 0$ and if $1<\vert C\vert < N-1$ the operator $H^{\sigma_{\mathrm{as}}}[C]$ does not have a virtual level at zero. Then the discrete spectrum of $H^{\sigma_{\mathrm{as}}}$ is finite. 
\end{thm}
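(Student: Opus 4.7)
The plan is to adapt the proof of Theorem \ref{No Efimov} to the antisymmetric subspace in dimension $n\in\{1,2\}$, exploiting Theorem \ref{111} in place of Theorem \ref{Main theorem} whenever a cluster of size $N-1$ carries a virtual level. By Lemma \ref{A3} in the Appendix, it is enough to produce $\varepsilon>0$ and $b>0$ such that
\begin{equation*}
L_1[\varphi]:=\langle H^{\sigma_{\mathrm{as}}}\varphi,\varphi\rangle-\varepsilon\Vert|x|_m^{-1}\varphi\Vert^2\geq 0
\end{equation*}
holds for every $\varphi\in P^{\sigma_{\mathrm{as}}} H^1(X_0)$ with $\supp\varphi\subset\{x\in X_0:|x|_m\geq b\}$. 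First I would apply Lemma \ref{Lemma localization error} to the family of two-cluster partitions $Z=(C_1,C_2)$, obtaining
\begin{equation*}
L_1[\varphi]\geq\sum_{|Z|=2}L_2[\varphi u_Z]+L_3[\varphi\mathcal{V}],
\end{equation*}
with $\mathcal{V}=\sqrt{1-\sum_{|Z|=2}u_Z^2}$ and $L_2,L_3$ the functionals introduced in the proof of Theorem \ref{No Efimov}.

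For each two-cluster partition $Z=(C_1,C_2)$ I would split into two sub-cases. When $|C_1|,|C_2|<N-1$ (which requires $N\geq 4$), the hypothesis guarantees that neither $H^{\sigma_{\mathrm{as}}}[C_j]$ has a virtual level, so $\langle H(Z)\varphi u_Z,\varphi u_Z\rangle\geq\mu_0\Vert\nabla_{q(Z)}(\varphi u_Z)\Vert^2$ for some $\mu_0>0$; combined with the estimate $|V_{ij}^{(1)}|\leq C|\xi(Z)|_m^{-2-\nu}$ on $\supp u_Z$ and Hardy's inequality within the cone (valid because $|q(Z)|_m\leq\kappa|\xi(Z)|_m$), this yields $L_2[\varphi u_Z]\geq 0$ exactly as in Theorem \ref{No Efimov}. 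When $|C_j|=N-1$ for some $j$ and the corresponding cluster Hamiltonian possesses a virtual level, Theorem \ref{111} applied to this $(N-1)$-particle fermionic subsystem delivers a zero-energy eigenfunction $\varphi_0\in P^{\sigma_{\mathrm{as}}[C_j]}H^1(X_0[C_j])$ with the spectral gap property of Theorem \ref{thm: Theorem 1'}(i). I then write $\varphi u_Z(q(Z),\xi(Z))=\varphi_0(q(Z))f(\xi(Z))+g(q(Z),\xi(Z))$ with $\langle\nabla_{q(Z)}g(\cdot,\xi(Z)),\nabla_{q(Z)}\varphi_0\rangle=0$ and repeat the computation from Theorem \ref{No Efimov} verbatim, including the application of \cite[Lemma 5.3]{Semjon2} to dominate the cross-term between $\varphi_0 f|\xi(Z)|_m^{-1}$ and $g|\xi(Z)|_m^{-1}$.

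The residual contribution $L_3[\varphi\mathcal{V}]$ is treated by further iterated localizations into cones corresponding to partitions of order $|Z|\geq 3$. At each step the hypothesis that the smaller cluster Hamiltonians have no virtual levels produces a positive margin, and after finitely many iterations one reaches the innermost region where every pair distance is comparable to $|x|_m$ and the total potential is dominated by $C|x|_m^{-2-\nu}$. In this region the remaining estimate reduces to a Hardy-type inequality $\Vert\nabla_0\psi\Vert^2\geq c_H\Vert|x|_m^{-1}\psi\Vert^2$. Because multiplication by the radial cutoffs $u_Z$ and $\mathcal{V}$ preserves the antisymmetry inherited from $\varphi$, the required inequality is available on $P^{\sigma_{\mathrm{as}}}H^1(X_0)$: for the effective dimensions with $\dim X_0\geq 3$ (i.e.\ $n=2,N\geq 4$ or $n=1,N\geq 5$) the fermionic Hardy inequality from \cite{birman} applies directly, while the borderline cases $n=1,N=4$ and $n=2,N=3$ within the iteration are handled by the spherical-harmonic/Fourier expansion trick of Lemma \ref{Hardy dimension 1} and \eqref{Hardy-type inequality} used in the proof of Theorem \ref{111}.

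The main obstacle will be bookkeeping the antisymmetry through the iterative localization so that an appropriate fermionic Hardy inequality is available at every stage with a constant large enough to absorb the localization errors and the negative terms $-\varepsilon\Vert|x|_m^{-1}\varphi\Vert^2$. A subsidiary technical issue is the pointwise-in-$\xi(Z)$ meaning of the orthogonality relation $\langle\nabla_{q(Z)}g(\cdot,\xi(Z)),\nabla_{q(Z)}\varphi_0\rangle=0$, which requires $\varphi_0\in H^2(X_0[C_j])$; this is supplied by the assumption $V_{ij}\in L^2_{\mathrm{loc}}(\mathbb{R}^n)$ together with standard elliptic regularity, exactly as in the proof of Theorem \ref{No Efimov}.
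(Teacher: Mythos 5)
Your proposal follows essentially the same route as the paper: the paper's proof of Theorem \ref{No Efimov fermions} simply states that one repeats the argument of Theorem \ref{No Efimov}, using Theorem \ref{111} for the $(N-1)$-particle cluster and the fermionic Hardy inequalities in the residual region, exactly as you outline. The one point the paper explicitly singles out as the \emph{only} new ingredient is the one your write-up glosses over: on the antisymmetric subspace the zero eigenvalue of $H^{\sigma_{\mathrm{as}}}[C]$ with $|C|=N-1$ may be degenerate (of finite multiplicity, by Theorem \ref{thm: Theorem 1'}), so the decomposition $\varphi u_Z=\varphi_0 f+g$ must be replaced by a sum over an orthonormal basis of the eigenspace $\mathcal{W}_0$, with $g$ orthogonal (in the $\nabla_{q(Z)}$ inner product) to \emph{all} of $\mathcal{W}_0$; orthogonality to a single eigenfunction would not yield the spectral gap $\langle H[C]g,g\rangle\geq\delta_0\Vert\nabla_{q(Z)}g\Vert^2$ that your argument relies on. With that modification (and the corresponding finite sum in the cross-term estimate via \cite[Lemma 5.3]{Semjon2}), your argument matches the paper's. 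A cosmetic point: the borderline case $n=2$, $N=3$ you mention does not occur here since the theorem assumes $N\geq 4$.
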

\begin{proof}
	The proof of Theorem \ref{No Efimov fermions} goes along the same line as that of Theorem \ref{No Efimov}. The only difference is that if for a cluster $C$ with $\vert C\vert=N-1$ the operator $H^{\sigma_{\mathrm{as}}}[C]$ has a virtual level, zero might be a degenerate eigenvalue of finite mulitplicity. However, in this case we can find a decompostion similar to that in \eqref{decomposition in eigenfunction} with a function $g$ which is orthogonal to the corresponding eigenspace. Repeating the arguments of the proof of Theorem \ref{No Efimov} proves Theorem \ref{No Efimov fermions}.	
\end{proof}

\begin{appendix}
\section{}
\begin{proof}[Proof of Lemma \ref{eq: cmp1}]
Let $\varepsilon>0$ and $b>0$ be fixed. Let $\tilde{b}>b$ and $u \in C^1(\mathbb{R}_+)$, such that $u(t)=1$ if $ t\leq {b}$ and $u$ is non-increasing on $[b,\infty)$. Moreover, for $t\rightarrow  {b}$ let $u'(t)\left(1-u^2(t)\right)^{-\frac{1}{2}}\rightarrow 0$. We define $v:=\sqrt{1-u^2}$,
\begin{equation}
\chi_1(x):=u\left( |x| \right) \qquad \text{and} \qquad \chi_2(x):=v\left( |x| \right).
\end{equation}
Then, since $\chi_1^2+\chi_2^2=1$ holds we have
\begin{equation}\label{eq: |x|^4}
|\nabla \chi_1|^2 +|\nabla \chi_2|^2 =  \frac{|\nabla\chi_1|^2}{\left(1-\chi_1^2 \right)} = \frac{u'(|x|)^2}{1-u(|x|)^2}.
\end{equation}
Now since $u'(|x|)\left(1-u^2(|x|)\right)^{-\frac{1}{2}}\rightarrow 0$ as $|x|\rightarrow {b}$, we can take ${b'}>b$ so close to ${b}$ that
\begin{equation}
\frac{u'(|x|)^2}{1-u(|x|)^2}\leq \varepsilon \vert x\vert^{-2},\quad \vert x\vert \in [b,b'].
\end{equation}
This together with \eqref{eq: |x|^4} implies
\begin{equation}
\left(|\nabla \chi_1|^2 +|\nabla \chi_2|^2\right) \leq \varepsilon |x|^{-2}, \quad \vert x\vert \in [b,b'].
\end{equation}
Now we define the function $u$ for $t \ge b'$ as
\begin{equation}
    u(t) = u(b') \ln\left(\frac{t}{\tilde{b}}\right)\left(\ln\left(\frac{b'}{\tilde{b}}\right)\right)^{-1}, \quad t \in [b',\tilde{b}]\quad \text{and} \quad u(t)=0, \quad t\ge \tilde{b}.
\end{equation}
Note that $u(b')$ is close to $1$, but it is strictly less than $1$. As before we set
\begin{equation}
    \chi_1(x) = u(\vert x\vert), \qquad \chi_2(x) = v(\vert x\vert), \quad \vert x\vert \ge b'.
\end{equation}
We have for $|x|\geq b'$ a.e.
\begin{equation}\label{a9}
 |\nabla \chi_1|^2+|\nabla \chi_2|^2 \leq \frac{ u^2(b')}{1-u^2(b')}\left(\ln\left( \frac{b'}{\tilde{b}} \right)\right)^{-2}|x|^{-2}.
\end{equation}
For fixed $b'$ we can choose $\tilde{b}$ so large that the r.h.s. of  \eqref{a9} can be estimated as $\varepsilon|x|^{-2}$.
\end{proof}

\begin{proof}[Proof of Lemma \ref{Lemma localization error}]
Let $\kappa>0$ and let $Z=(C_1,\ldots,C_p)$ be an arbitrary partition into $p$ clusters. For the sake of brevity we write $q$ and $\xi$ instead of $q(Z)$ and $\xi(Z)$, respectively. \\
    Let $v_1\in C^1\big(\mathbb{R}_+\big)$, such that $ v_1(t) =  1, $ if $t \ge \kappa$ and $v_1$  is non-decreasing on $[0,\kappa]$. We assume $v_1'(t)\left(1-v_1^2(t)\right)^{-\frac{1}{2}}\rightarrow 0$ as $t\rightarrow \kappa$ and define $u_1(t):=(1-v_1^2(t))^{\frac{1}{2}}$. 
\\For $0<\kappa''< \kappa$ and $x=\left(q,\xi\right)\in K(Z,\kappa)\backslash K(Z,\kappa'')$ let
    \begin{equation}
        u(x)=u_1\left( \frac{|q|_m}{|\xi|_m} \right), \qquad v(x)=v_1\left( \frac{|q|_m}{|\xi|_m} \right).
    \end{equation}
 Then we have
    \begin{equation}\label{eq: Kettenregel}
        \vert \nabla_0 u\vert^2+\vert \nabla_0 v\vert^2= \left(1-v_1^2(t)\right)^{-1}\left(v_1'\left(t\right)\right)^2\left(1+\vert q\vert_m^2\vert \xi\vert_m^{-2}\right)\vert\xi\vert_m^{-2},
    \end{equation}
    where $t=|q|_m|\xi|_m^{-1}$. Since $\kappa''\le \vert q\vert_m\vert \xi\vert_m^{-1}\le \kappa$ and $\vert x\vert_m^2 = \vert q\vert_m^2+\vert\xi\vert_m^2$ we have $|\xi|_m^{-2} \leq (1+\kappa^2)|x|_m^{-2}$. Hence, $\eqref{eq: Kettenregel}$ yields
    \begin{equation}\label{bla}
        \vert \nabla_0 v\vert^2+\vert \nabla_0 u\vert^2 \le \left(v_1'\left(t\right)\right)^2\left(1-v_1(t)^2\right)^{-1}\left(1+\kappa^2\right)^2\vert x\vert_m^{-2}.
    \end{equation}
    Since $v_1'(t)\left(1-v_1^2(t)\right)^{-\frac{1}{2}}\rightarrow 0$ as $t\rightarrow \kappa$ we can choose $\kappa''$ close to $\kappa$ to get
    \begin{equation}
    \left(v_1'\left(t\right)\right)^2\left(1-v_1(t)^2\right)^{-1}\left(1+\kappa^2\right)^2\vert x\vert_m^{-2}     < \varepsilon \vert x\vert_m^{-2}\quad \text{for} \quad x\in K(Z,\kappa)\backslash K(Z,\kappa'').
    \end{equation}
   Now we define $u$ and $v$ for $x\in K(Z,\kappa'')$. Let $0<\kappa'<\kappa''$ and set
    \begin{equation}
        v_1(t)=v_1(\kappa'') \left(\ln\left(\kappa''/ \kappa'\right)\right)^{-1} \ln(t/\kappa'), \ t \leq \kappa''.
    \end{equation}
   Let
    \begin{equation}
    v(x)=v_1\left(\frac{|q|_m}{|\xi|_m} \right),\ x\in K(Z,\kappa'')\backslash K(Z,\kappa') \ \ \text{and}\ v(x) = 0,\ x\in K(Z,\kappa').
    \end{equation}
Since $v_1(t) < v_1(\kappa'') < 1$, if $t<\kappa''$ we have
    \begin{align}\label{eq: estimate term with u}
    \begin{split}
        \left(\vert \nabla_0 u\vert^2+\vert \nabla_0 v\vert^2\right)\vert u\vert^{-2}&=\vert\nabla_0 v\vert^2\left(1-v_1^2\right)^{-1}\vert u\vert^{-2}\\
        &<\vert\nabla_0 v\vert^2(1-v_1^2(\kappa''))^{-2}
        \end{split}
    \end{align}
   and for $t=\vert q\vert_m\vert \xi\vert_m^{-1}\le \kappa''$
    \begin{equation}\label{eq: Kettenregel v}
        \vert\nabla_0 v\vert^2=\left( v_1'\left(t\right)\right)^2\left(1+\vert q\vert_m^2\vert \xi\vert_m^{-2}\right)\vert \xi\vert_m^{-2}\le \left(v_1'\left( t\right)\right)^2\left(1+(\kappa'')^2\right)\vert \xi\vert_m^{-2}.
    \end{equation}
   Note that
    \begin{equation}\label{eq: Derivative v2}
        v_1'\left( t\right) =v_1(\kappa'')\left(\ln\left(\kappa''/ \kappa'\right)\right)^{-1} t^{-1}.
    \end{equation}
Hence, combining \eqref{eq: estimate term with u}, \eqref{eq: Kettenregel v} and \eqref{eq: Derivative v2} yields
    \begin{equation}
    \left(\vert \nabla_0 u\vert^2+\vert \nabla_0 v\vert^2\right)\vert u\vert^{-2}<v_1(\kappa'')^2\left(\ln\left(\kappa''/ \kappa'\right)\right)^{-2} \left(1+(\kappa'')^2\right)t^{-2}\vert \xi\vert_m^{-2}.
    \end{equation}
Substituting $t=|q|_m|\xi|_m^{-1}$ implies
    \begin{equation}
        \left(\vert \nabla_0 u\vert^2+\vert \nabla_0 v\vert^2\right) < \varepsilon  \vert q\vert_m^{-2}\vert u\vert^{2}
    \end{equation}
   for $|q|_m<\kappa''|\xi|_m$ and $\kappa'>0$ sufficiently small. This, together with \eqref{bla} completes the proof.
\end{proof}
\begin{lem}\label{A3}
Let $h=-\Delta+V$ in $L^2(\mathbb{R}^d),\ d\geq 3,$ with $V$ satisfying \eqref{eq: A1}. Assume there exist $\varepsilon>0$ and $b>0$, such that
\begin{equation}\label{A22}
\langle h\psi,\psi\rangle - \varepsilon \langle |x|^{-2}\psi,\psi \rangle \geq 0
\end{equation}
holds for any $\psi \in H^1(\mathbb{R}^d)$ with $\supp \psi \subset \{x\in \mathbb{R}^d, \ |x|\geq b\}$. Then the following assertions hold.
\begin{enumerate}
\item[$\textbf{(i)}$] $\inf \mathcal{S}_{\mathrm{ess}}(h) \geq 0$.
\item[$\textbf{(ii)}$] The operator $h$ has at most a finite number of negative eigenvalues.
\item[$\textbf{(iii)}$] Zero is not an infinitely degenerate eigenvalue of $h$.
\item[$\textbf{(iv)}$] If the potential $V$ satisfies \eqref{1: hardyassumption} then the space $W$ of functions $\varphi \in \dot{H}^1(\mathbb{R}^d)$ with
\begin{equation}
\int_{\mathbb{R}^d} \nabla\varphi(x) \cdot \nabla\psi(x)\, \mathrm{d}x+\int_{\mathbb{R}^d} V(x)\varphi(x)\psi(x)\, \mathrm{d}x =0, \ \psi \in \dot{H}^1(\mathbb{R}^d)
\end{equation}
is at most finite-dimensional.
\end{enumerate}
\end{lem}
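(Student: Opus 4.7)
\medskip
\noindent\textbf{Proof proposal for Lemma \ref{A3}.} The plan is to extract from hypothesis \eqref{A22} a single IMS-type master inequality and then deduce all four statements from it. Fix $0<\varepsilon'<\varepsilon$ small enough that $\varepsilon'|x|^{-2}$ is form-small relative to $-\Delta$ (possible by Hardy's inequality since $d\geq 3$) and construct cutoffs $\chi_1,\chi_2$ according to Lemma \ref{eq: cmp1} with $|\nabla\chi_1|^2+|\nabla\chi_2|^2\leq \varepsilon'|x|^{-2}$, $\chi_1\equiv 1$ on $B(b)$, and $\chi_1\equiv 0$ outside $B(\tilde b)$. Combining the IMS identity
\[
\langle h\psi,\psi\rangle=\langle h(\chi_1\psi),\chi_1\psi\rangle+\langle h(\chi_2\psi),\chi_2\psi\rangle-\int\bigl(|\nabla\chi_1|^2+|\nabla\chi_2|^2\bigr)|\psi|^2\,\mathrm{d}x
\]
with the hypothesis \eqref{A22} applied to $\chi_2\psi$ (whose support lies in $\{|x|\geq b\}$) and the identity $\chi_1^2+\chi_2^2=1$ yields the \emph{master inequality}
\[
\langle h\psi,\psi\rangle \geq \langle h(\chi_1\psi),\chi_1\psi\rangle-\varepsilon'\bigl\||x|^{-1}\chi_1\psi\bigr\|^2+(\varepsilon-\varepsilon')\bigl\||x|^{-1}\chi_2\psi\bigr\|^2.
\]

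For \textbf{(i)}, apply the master inequality to $\psi\in C_0^\infty(\{|x|>\tilde b\})$: then $\chi_1\psi=0$, so $\langle h\psi,\psi\rangle\geq 0$, and Persson's formula gives $\inf\mathcal{S}_{\mathrm{ess}}(h)\geq 0$. For \textbf{(ii)} and \textbf{(iii)}, let $W\subset H^1(\mathbb{R}^d)$ be any subspace on which $\langle h\psi,\psi\rangle\leq 0$ (for (ii) with strict inequality, invoking Glazman's lemma; for (iii) $W$ is the zero eigenspace). The master inequality then yields
\[
\langle h(\chi_1\psi),\chi_1\psi\rangle-\varepsilon'\bigl\||x|^{-1}\chi_1\psi\bigr\|^2\leq 0,
\]
so $\chi_1\psi$ lies in the nonpositive spectral subspace of the Dirichlet operator $h_D:=-\Delta+V-\varepsilon'|x|^{-2}$ on $L^2(B(\tilde b))$. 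By Rellich-Kondrachov compactness of $H_0^1(B(\tilde b))\hookrightarrow L^2(B(\tilde b))$, together with the form-smallness of $V-\varepsilon'|x|^{-2}$ relative to $-\Delta$, the operator $h_D$ has compact resolvent and hence only finitely many nonpositive eigenvalues.

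To count $\dim W$, I will show that the map $\psi\mapsto \chi_1\psi$ is injective on $W$. If $\chi_1\psi=0$, then $\supp\psi\subset\{|x|>b\}$, and \eqref{A22} gives $\langle h\psi,\psi\rangle\geq\varepsilon\||x|^{-1}\psi\|^2\geq 0$; combined with $\langle h\psi,\psi\rangle\leq 0$ this forces $\psi\equiv 0$. Therefore $\dim W\leq \dim\chi_1(W)\leq\#\{\text{nonpositive eigenvalues of }h_D\}<\infty$, which proves \textbf{(ii)} via Glazman and \textbf{(iii)} directly. For \textbf{(iv)}, the Euler-Lagrange weak-equation condition means the bilinear form of $h$ vanishes on $W\times W$, in particular $\langle h\varphi,\varphi\rangle=0$ for every $\varphi\in W$. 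Under \eqref{1: hardyassumption} the quadratic form of $h$ extends continuously to $\dot{H}^1(\mathbb{R}^d)$, and a standard density/truncation argument lifts hypothesis \eqref{A22} to all $\dot{H}^1$ functions supported outside $B(b)$. The master inequality then holds on $\dot{H}^1$, and since $\chi_1\varphi$ is compactly supported with gradient in $L^2$ it lies in $H_0^1(B(\tilde b))$; the same finite-dimensionality argument applies and gives $\dim W<\infty$.

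The main obstacle is engineering the master inequality so that the $\varepsilon'|x|^{-2}$ localization error is absorbed while the residual Dirichlet operator $h_D$ on the inner ball still has compact resolvent; Hardy's inequality in dimension $d\geq 3$ reconciles both requirements simultaneously, and this is the only dimensional hypothesis that enters the argument. A secondary technical point, specific to \textbf{(iv)}, is the density step needed to extend \eqref{A22} from $H^1$ to $\dot{H}^1$, for which the hypothesis \eqref{1: hardyassumption} is indispensable since it ensures that the form $\langle V\cdot,\cdot\rangle$ is controlled on $\dot{H}^1$ by the gradient norm.
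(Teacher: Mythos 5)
Your argument is correct, and its core mechanism --- the IMS splitting with the cutoffs of Lemma \ref{eq: cmp1}, absorption of the localization error $\varepsilon'|x|^{-2}$ by Hardy's inequality, hypothesis \eqref{A22} applied to the outer piece $\chi_2\psi$, and compactness on the inner ball --- is exactly the one the paper uses. The differences lie in how finite-dimensionality is then extracted. For \textbf{(i)}--\textbf{(iii)} the paper makes the ``bad'' subspace explicit: it takes $M_k=\mathrm{span}\{\chi_1\varphi_1,\dots,\chi_1\varphi_k\}$ with $\varphi_j$ the first $k$ Dirichlet eigenfunctions of the Laplacian on the ball and shows $\langle h\psi,\psi\rangle>0$ for every $\psi\perp M_k$, using the Poincar\'e-type bound $\Vert\nabla(\chi_1\psi)\Vert^2\geq 2(1-\varepsilon)^{-1}C(\varepsilon)\Vert\chi_1\psi\Vert^2$ valid for such $\psi$; you instead bound the dimension of an arbitrary nonpositive subspace $W$ by injecting it via $\psi\mapsto\chi_1\psi$ into a nonpositive subspace for the Dirichlet operator $h_D$ and invoking min-max. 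These are two faces of the same compactness argument, and your injectivity observation (forced by \eqref{A22} applied to functions killed by $\chi_1$) is a genuine point that the paper obtains for free from $L^2$-orthogonality to $M_k$. One phrasing slip: from $\langle h_D(\chi_1\psi),\chi_1\psi\rangle\leq 0$ you cannot conclude that $\chi_1\psi$ \emph{lies in} the nonpositive spectral subspace of $h_D$; what is true, and what your final count actually uses, is that the subspace $\chi_1(W)$ has dimension at most the number of nonpositive eigenvalues of $h_D$, so nothing is lost. The genuinely different step is \textbf{(iv)}: the paper avoids redoing the localization on $\dot{H}^1(\mathbb{R}^d)$ by a perturbation trick --- it considers $h_1=h-(1+|x|)^{-3}$, which still satisfies the hypothesis \eqref{A22} for $b$ large while its quadratic form is strictly negative on $W\setminus\{0\}$, and derives a contradiction with \textbf{(ii)} --- whereas you extend the master inequality directly to $\dot{H}^1(\mathbb{R}^d)$ by density, for which you correctly identify \eqref{1: hardyassumption} as the indispensable ingredient. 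Both routes are sound; yours is more self-contained, the paper's is shorter.
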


\begin{rem}
\begin{enumerate}
\item[$\textbf{(i)}$] The Lemma is a slightly modified variant of a part of the proof of the main Theorem in \cite{Zhislin1}. 
\item[$\textbf{(ii)}$] This result can be easily extended to the case where the operator $h$ is invariant under action of a symmetry group $G$ and we consider this operator on some symmetry space $P^\sigma L^2(\mathbb{R}^d)$, here $\sigma$ is a type of irreducible representation of $G$.
\end{enumerate}
\end{rem}
\begin{proof}
We construct a finite-dimensional subspace $M\subset L^2(\mathbb{R}^d)$, such that $\langle h\psi,\psi\rangle > 0$ holds for any $\psi \in H^1(\mathbb{R}^d)\ \left(\dot{H}^1(\mathbb{R}^d)\right)$ orthogonal to $M$. Due to Lemma \ref{eq: cmp1} we have
\begin{equation}\label{A19}
\langle h \psi,\psi \rangle \geq L[\psi \chi_1]+L[\psi \chi_2],
\end{equation}
where the functional $L$ is given by
\begin{equation}
L[\psi]= \langle h_0\psi,\psi \rangle - \varepsilon \langle |x|^{-2}\psi,\psi \rangle.
\end{equation}
Since $\psi \chi_2$ is supported outside the ball of radius $b>0$, condition \eqref{A22} implies $L[\psi\chi_2]\geq 0$. Hence, it suffices to show that $L[\psi \chi_1] >0$ holds for any $\psi \perp M$ for some finite-dimensional space $M$. By Hardy's inequality and \eqref{eq: A1} it holds
\begin{equation}
L[\psi \chi_1] \geq (1-5\varepsilon) \Vert \nabla(\chi_1 \psi)\Vert^2-C(\varepsilon)\Vert \chi_1 \psi \Vert^2.
\end{equation}  
For $k \in \mathbb{N}$ let
\begin{equation}
	M_k := \left\{\varphi_1 \chi_1, \dots, \varphi_k \chi_1\right\},
\end{equation}
where $\{\varphi_1, \dots, \varphi_k\}$ is an orthonormal set of eigenfunctions corresponding to the $k$ lowest eigenvalues of the Laplacian, acting on $L^2\left(B(b)\right)$ with Dirichlet boundary conditions, where $B(b)=\{x\in \mathbb{R}^d: |x|\leq b\}$. For $\psi \perp M_k$ we have $\psi \chi_1 \perp \varphi_1, \dots \varphi_k$, which for sufficiently large $k$ implies
\begin{equation}
	\Vert \nabla (\psi \chi_1) \Vert^2 \ge  2\left(1-\varepsilon\right)^{-1}C(\varepsilon) \Vert \psi \chi_1\Vert^2.
\end{equation}
Therefore, we conclude $L[\psi \chi_1] > 0$. This proves statements \textbf{(i)}-\textbf{(iii)}. 
\\In order to prove statement \textbf{(iv)}, we consider the operator $h_1:=h-(1+|x|)^{-3}$. The operator $h_1$ satisfies \eqref{A19} for $b>0$ sufficiently large. If the space $W$ is not finite-dimensional, then $h_1$ has an infinite number of negative eigenvalues. This is a contradiction to \textbf{(ii)}.
\end{proof}
\section{Antonets, Zhislin, Shereshevskij's Theorem}\label{Appendix B}
In the proofs of Theorem \ref{Main theorem} and Theorem \ref{No Efimov} to show that some regions of the configuration space of a $N$-particle system do not overlap we used a result, which goes back to the work \cite{cones} of M. A. Antonets, G. M. Zhislin and I. A. Shereshevskij. Since there is no English translation of this reference, below we will give the statement and the corresponding proof following the original work \cite{cones}.
\\Consider a system of $N\geq 3$ particles in dimension $d\geq 1$ with masses $m_1,\ldots ,m_N  >0$. Let $\langle \cdot,\cdot \rangle_m$ be the scalar product defined by \eqref{scalar 1}. For a cluster $C\subseteq \{1,\ldots,N\}$ let $P_0[C]$ be the projection from $X_0$ on $X_0[C]$, defined in \eqref{R0 of C}. We set
\begin{equation}\label{R_c}
X[C]=\{x=(x_1,\ldots,x_N)\in X_0 : x_i=0 \ \text{if} \ i\not \in C\},  \qquad X_c[C]=X[C] \ominus X_0[C]
\end{equation}
and denote by $P_c[C]$ the corresponding projection from $X_0$ on $X_c[C]$. We denote 
\begin{equation}
M[C] = \sum_{i\in C} m_i, \qquad M=\sum_{i=1}^N m_i \qquad \text{and} \qquad m=\min_{i=1,\ldots,N} m_i.
\end{equation}
For $P_c[C]x=(y_1,\ldots,y_N)$ holds $y_j=0$ if $j\not \in C$ and
\begin{equation}\label{center of mass subs}
y_j=x_c[C] := \frac{1}{M[C]} \sum_{i\in C} m_ix_i \qquad \text{if} \ j\in C.
\end{equation}
For a partition $Z=(C_1,\ldots,C_p)$ of order $|Z|=p$ let $X_0(Z)$ and $X_c(Z)$ be the spaces defined in \eqref{R0 of Z} and let $P_0(Z)$ and $P_c(Z)$ be the corresponding projections from $X_0$ on $X_0(Z)$ and $X_c(Z)$, respectively.
\\In the following we define the constants $\kappa'$ and $\kappa$, which will be used later as the parameters introduced in the definition of the cones $K(Z,\kappa)$ in \eqref{eq: cones}. First, we define these constants inductively as functions in $n\in \mathbb{N}$, where later $n$ will correspond to $|Z|$.
\begin{df}\label{df: kappa' and kappa}
Pick any $\kappa(1)>0$ and any $\kappa'$ with $\kappa(1)>\kappa'(1)>0$. Provided, $\kappa'(l)$ and $\kappa(l)$ are defined for some $l\ge 1$, set
\begin{equation}
	d^2(l+1)= \frac{m^3}{2M^3}\frac{\left(\kappa'(l)\right)^2}{1+\left(\kappa'(l)\right)^2}
\end{equation}
and choose $\kappa(l+1)>0$, such that the condition
\begin{equation}\label{eq: condtition kappa}
	\frac{m^3}{M^3}\frac{\left(\kappa'(l)\right)^2-\left(\kappa(l+1)\right)^2}{1+\left(\kappa'(l)\right)^2} -\left(\kappa(l+1)\right)^2 > d^2(l+1) >\left(\kappa(l+1)\right)^2\left(1+\left(\kappa(l+1)\right)^2\right)
\end{equation}
is fulfilled. Afterwards, choose $0<\kappa'(l+1)<\kappa(l+1)$.
\end{df}
\begin{rem}
It is easy to see that for fixed $\kappa'(l)$ we can always pick $\kappa(l+1)$ so small that \eqref{eq: condtition kappa} holds.
\end{rem}
\begin{thm}[Antonets, Zhislin, Shereshevskij]\label{Lemma Intersection cones}
Let $2\le l \le N-1$ and let $\kappa(l),\kappa'(l)$ be defined according to Definition \ref{df: kappa' and kappa}. Assume that $\hat{Z}, \tilde{Z}$ be two cluster decompositions of order $\vert \hat{Z}\vert = \vert \tilde{Z}\vert =l$ with $ \hat{Z} \neq \tilde{Z}$. Then
	\begin{equation}
		K\left(\hat{Z},\kappa(l)\right)\cap K\left(\tilde{Z},\kappa(l)\right) \subset\bigcup\limits_{Z:\; |Z|<l} K\left(Z,\kappa'(|Z|)\right).
	\end{equation}
\end{thm}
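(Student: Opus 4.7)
The plan is to produce, for any $x\in K(\hat Z,\kappa(l))\cap K(\tilde Z,\kappa(l))$ with $\hat Z,\tilde Z$ distinct partitions of order $l$, an explicit partition $Z$ of strictly lower order for which $x\in K(Z,\kappa'(|Z|))$. The natural candidate is the \emph{join} $Z=\hat Z\vee\tilde Z$, defined by the relation: $i,j$ share a cluster of $Z$ iff there exists a chain $i=k_0,k_1,\ldots,k_r=j$ with $r\le N-1$, each of whose consecutive pairs shares a cluster of $\hat Z$ or of $\tilde Z$. Since $|\hat Z|=|\tilde Z|=l$ and $\hat Z\neq\tilde Z$, no order-$l$ partition other than themselves is refined by both, so $|Z|<l$.

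The geometric core is a cluster-diameter estimate: if $i,j$ share a cluster $C'$ of some partition $Z'$, then from $|q(Z')|_m^2\ge m\sum_{k\in C'}|x_k-x_{C'}|^2$ and the triangle inequality through $x_{C'}$ one obtains $|x_i-x_j|^2\le (2/m)|q(Z')|_m^2$. Applying this to each link of a chain realizing the join-relation and invoking Cauchy--Schwarz yields, for any $i,j$ in a common $Z$-cluster,
\begin{equation*}
|x_i-x_j|^2\le \frac{2(N-1)}{m}\bigl(|q(\hat Z)|_m^2+|q(\tilde Z)|_m^2\bigr).
\end{equation*}
Using $|x_i-x_C|\le\max_{j\in C}|x_i-x_j|$ for each $C\in Z$ and summing mass-weighted squares, together with the cone hypotheses $|q(\hat Z)|_m,|q(\tilde Z)|_m\le\kappa(l)|x|_m$, produces $|q(Z)|_m^2\le\widetilde C_{N,m,M}\,\kappa(l)^2\,|x|_m^2$ for an explicit constant $\widetilde C_{N,m,M}$ involving the mass ratio $M/m$ and the particle number $N$. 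Combined with $|\xi(Z)|_m^2=|x|_m^2-|q(Z)|_m^2$, this rearranges to
\begin{equation*}
\frac{|q(Z)|_m^2}{|\xi(Z)|_m^2}\le\frac{\widetilde C_{N,m,M}\,\kappa(l)^2}{1-\widetilde C_{N,m,M}\,\kappa(l)^2},
\end{equation*}
and the chain of inequalities \eqref{eq: condtition kappa} is engineered precisely so that the right-hand side is bounded by $\kappa'(|Z|)^2$. In the degenerate case $|Z|=1$ (which occurs, e.g., whenever $l=2$), the same estimate instead forces $|x|_m$ to vanish, placing $x$ trivially in $K(Z_1,\kappa'(1))=\{|x|_m\le\kappa'(1)\}$ via \eqref{eq: cone for system}.

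The principal obstacle is constant-matching: the chain estimate must be derived with a prefactor compatible with the cubic ratio $m^3/M^3$ built into Definition~\ref{df: kappa' and kappa}, which is tighter than the naive bound one obtains by applying the triangle inequality step-by-step. A further subtlety is that the join can have any order $|Z|\in\{1,\ldots,l-1\}$, so the estimate must be compatible with $\kappa'(k)$ simultaneously for all such $k$; one handles this by observing that Definition~\ref{df: kappa' and kappa} forces the sequence $\kappa'(k)$ to be strictly decreasing (since $\kappa(l+1)^2<d^2(l+1)<\kappa'(l)^2/2$), so it suffices to verify the bound for $k=l-1$, and the other orders then follow automatically.
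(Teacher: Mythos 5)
Your route is genuinely different from the paper's. The paper never forms the join $\hat Z\vee\tilde Z$: it argues by contradiction, assuming $x$ lies in both cones but outside every $K(Z,\kappa'(|Z|))$ with $|Z|<l$, picks a cluster $\hat C_i\in\hat Z$ that meets two clusters $C_k,C_n$ of $\tilde Z$, and compares $\tilde Z$ only with the single merger $Z'$ obtained by uniting $C_k$ and $C_n$ (so $|Z'|=l-1$). Corollary \ref{Corollary B1} converts $|P_c(\tilde Z)x|_m^2-|P_c(Z')x|_m^2$ into the separation $|x_c[C_k]-x_c[C_n]|^2$ of the two centres of mass, which is bounded below because $x\notin K(Z',\kappa'(l-1))$ and bounded above because $\hat C_i$ sits inside one cluster of $\hat Z$, so $|P_0[\hat C_i]x|_m\le\kappa(l)|P_c(\hat Z)x|_m$; condition \eqref{eq: condtition kappa} is chosen to make these two bounds incompatible. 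Since only the consecutive levels $l$ and $l-1$ are ever compared, the inductive definition of the constants couples only neighbouring orders. Your construction of the join, the observation that $|\hat Z\vee\tilde Z|<l$, the reduction to level $l-1$ via monotonicity of $\kappa'(k)$, and the degenerate case $|Z|=1$ are all fine.

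The gap is exactly the step you flag as ``the principal obstacle'' and then leave open. The claim that \eqref{eq: condtition kappa} ``is engineered precisely'' for your bound is unfounded --- it was engineered for the argument above --- so compatibility must be checked, and with the chain estimate as you state it the check fails. Cauchy--Schwarz over a chain of $r\le N-1$ links, each bounded by $(2/m)\max\bigl(|q(\hat Z)|_m^2,|q(\tilde Z)|_m^2\bigr)$, gives $|x_i-x_j|^2\le\frac{2(N-1)^2}{m}\bigl(|q(\hat Z)|_m^2+|q(\tilde Z)|_m^2\bigr)$, not your prefactor $\frac{2(N-1)}{m}$. Summing mass-weighted squares over the join then yields $|q(Z)|_m^2\le \frac{4(N-1)^2M}{m}\,\kappa(l)^2|x|_m^2$, and after inserting $\kappa(l)^2<d^2(l)=\frac{m^3}{2M^3}\frac{\kappa'(l-1)^2}{1+\kappa'(l-1)^2}$ the decisive requirement becomes $\frac{2(N-1)^2m^2}{M^2}\le 1$. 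For equal masses ($M=Nm$) this reads $2(N-1)^2\le N^2$, which is false for every $N\ge 4$ --- precisely the regime of interest. The argument can be rescued, but only with a sharper aggregation of the links: using Lemma \ref{Lemma B1}~\textbf{(i)} to bound the sum of all squared pair distances inside a cluster $C'$ by $\frac{M}{m^2}|P_0[C']x|_m^2$ replaces the prefactor by $\frac{2(N-1)M^2}{m^2}$, and the requirement becomes $(N-1)m\le M$, which always holds. Without this (or an equivalent) refinement, the proposal does not close.
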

\begin{proof}[Proof of Theorem \ref{Lemma Intersection cones}]
In order to prove the theorem we need the following
\begin{df}
Let $\kappa'$ and $\kappa$ be chosen according to Definition \ref{df: kappa' and kappa} and let $Z$ be a partition. We define
\begin{equation}
	\mathcal{M}(Z,\kappa',\kappa)=K(Z,\kappa(|Z|))\setminus \bigcup\limits_{|Z'|<|Z|} K(Z',\kappa'(|Z'|)).
\end{equation}
\end{df}
\begin{lem}\label{Lemma B1}
For any cluster $C$, any partition $Z$ and for any $x,y\in X_0$ we have
\begin{enumerate}
\item[\textbf{(i)}] $\langle P_0[C]x,P_0[C]y\rangle_m = \frac{1}{2M[C]}\sum\limits_{i,j\in C} m_im_j \langle x_i-x_j,y_i-y_j\rangle$,
\item[\textbf{(ii)}] $\langle P_c(Z)x,P_c(Z)y\rangle_m = \frac{1}{2M}\sum\limits_{C',C''\subset Z} M[C']M[C''] \langle x_c[C']-x_c[C''], y_c[C']-y_c[C'']\rangle$.
\end{enumerate}
\end{lem}
\begin{proof}
Note that for any $x\in X_0$ we have
\begin{equation}
(P_0[C]x)_i=x_i-x_c[C] \qquad \text{if}\ i\in C
\end{equation}
and $(P_0[C]x)_i=0$ for $i\not \in C$. Hence, by definition we obtain
\begin{align}\label{eq: B1}
\langle P_0[C]&x,P_0[C]y\rangle_m \notag
\\&= \sum_{i \in C} m_i \langle x_i-x_c[C],y_i-y_c[C]\rangle \notag
\\&=\sum_{i \in C} m_i\langle x_i,y_i\rangle -\sum_{i \in C} m_i \langle x_c[C],y_i\rangle-\sum_{i \in C} m_i \langle x_i,y_c[C]\rangle  \notag + \sum_{i \in C}m_i \langle x_c[C],y_c[C] \rangle \notag
\\&= \sum_{i\in C} m_i\langle x_i,y_i\rangle - \left\langle x_c[C],\sum_{i \in C} m_iy_i\right\rangle- \left\langle \sum_{i \in C} m_i x_i,y_c[C]\right\rangle  \notag + M[C] \langle x_c[C],y_c[C] \rangle \notag
\\&=\sum_{i \in C} m_i\langle x_i,y_i\rangle -2M[C] \langle x_c[C],y_c[C] \rangle + M[C] \langle x_c[C],y_c[C] \rangle \notag
\\&= \sum_{i\in C} m_i\langle x_i,y_i\rangle-M[C]\langle x_c[C],y_c[C]\rangle.
\end{align}
On the other hand, 
\begin{align}\label{B5}
\sum_{i,j\in C} m_im_j \langle x_i-x_j,y_i-y_j\rangle &= \sum_{i\in C}m_i \sum_{j\in C}m_j\left( \langle x_i,y_i\rangle + \langle x_j,y_j\rangle -\langle x_i,y_j\rangle - \langle x_j,y_i\rangle \right) \notag
\\&=M[C]\sum_{i\in C} m_i\langle x_i,y_i\rangle +M[C]\sum_{j\in C} m_j\langle x_j,y_j\rangle \notag
\\ &\ \ \ \ \ \ \ \ \ \ \ \ -\left\langle \sum_{i\in C} m_i x_i,\sum_{j\in C} m_j y_j\right\rangle -\left\langle \sum_{j\in C} m_j x_j,\sum_{i\in C} m_i y_i\right\rangle \notag
\\&=2M[C]\sum_{i\in C} m_i\langle x_i,y_i\rangle - 2\left(M[C]\right)^2\langle x_c[C],y_c[C] \rangle \notag
\\&= 2M[C]\left( \sum_{i\in C} m_i\langle x_i,y_i\rangle - M[C]\langle x_c[C],y_c[C] \rangle \right).
\end{align}
This, together with \eqref{eq: B1} completes the proof of Lemma \ref{Lemma B1} \textbf{(i)}.
\\Now we prove \textbf{(ii)}. For $i\in C'\subset Z$ we have
\begin{equation}\label{coordinate centre of mass of cluster}
\left( P_c(Z)x \right)_i = x_c[C'],
\end{equation}
which implies
\begin{align}\label{B7}
\langle P_c(Z)x,P_c(Z)y \rangle_m &= \sum_{C'\subset Z} \sum_{j\in C'} m_j\langle x_c[C'],y_c[C']\rangle \notag
\\ &=\sum_{C'\subset Z} M[C'] \langle x_c[C'],y_c[C'] \rangle.
\end{align}
Furthermore, similar to \eqref{B5} we have
\begin{align}\label{B8}
\sum_{C',C'' \subset Z} &M[C']M[C'']\langle x_c[C']-x_c[C''] , y_c[C']-y_c[C''] \rangle\notag
\\&=\sum_{C'\subset Z}M[C']\sum_{C''\subset Z} M[C'']\Bigl( \langle x_c[C'],y_c[C']\rangle+\langle x_c[C''],y_c[C'']\rangle \notag
\\ &\ \qquad \qquad \qquad \qquad \qquad \qquad \qquad -\langle x_c[C'],y_c[C'']\rangle-\langle x_c[C''],y_c[C']\rangle \Bigr) \notag
\\&=2M\sum_{C'\subset Z}M[C']\langle x_c[C'],y_c[C']\rangle-2\left\langle \sum_{C'\subset Z}M[C']x_c[C'],\sum_{C''\subset Z}M[C'']y_c[C''] \right\rangle.
\end{align}
Since $x\in X_0$ we have $\sum_{C'\subset Z}M[C']x_c[C']=\sum_{i=1}^N m_i x_i =0$, which implies that the second term on the r.h.s. of \eqref{B8} vanishes. This yields
\begin{equation}\label{B9}
\sum_{C',C'' \subset Z}\!\!\!\! M[C']M[C'']\langle x_c[C']-x_c[C''] , y_c[C']-y_c[C'']\rangle=2M \sum_{C'\subset Z}\!\! M[C']\langle x_c[C'],y_c[C'] \rangle .
\end{equation}
Hence, combining \eqref{B9} with \eqref{B7} completes the proof of Lemma \ref{Lemma B1} \textbf{(ii)}.
\end{proof}
\begin{cor}\label{Corollary B1}
Let $Z=(C_1,\ldots,C_p)$ and $\tilde{Z}=(C_1\cup C_2,C_3,\ldots,C_p)$. Then for any $x,y\in X_0$ we have
\begin{align}
\begin{split}
&\langle P_c(Z)x,P_c(Z)y \rangle_m - \langle P_c(\tilde{Z})x, P_c(\tilde{Z})y \rangle_m
\\ &\ \ \ \ \ \ \ \ \qquad \qquad= \frac{M[C_1]M[C_2]}{M[C_1]+M[C_2]} \langle x_c[C_1]-x_c[C_2],y_c[C_1]-y_c[C_2] \rangle.
\end{split}
\end{align}
\end{cor}
\begin{proof}
The proof follows from Lemma \ref{Lemma B1} \textbf{(ii)} and
\begin{align}
x_c[C_1\cup C_2]&=\frac{1}{M[C_1\cup C_2]}\sum_{i\in C_1\cup C_2} m_ix_i \notag
\\&=\frac{1}{M[C_1]+M[C_2]}\left(\sum_{i \in C_1} m_ix_i + \sum_{j \in C_2} m_jx_j \right)\notag
 \\&= \left(M[C_1]+M[C_2]\right)^{-1} \left( M[C_1]x_c[C_1]+M[C_2]x_c[C_2]\right).
\end{align}
\end{proof}
\begin{lem}\label{lem: Intersection cones}
Let $Z=(C_1,\dots ,C_p)$ be a partition and let $C$ be a cluster with $C\not\subseteq C_i$ for any $i=1,\dots,p$. Then 
\begin{enumerate}
	\item[\textbf{(i)}] $|P_0[C]x|_m \geq d(|Z|)|P_c (Z)x|_m $ for any $x\in M(Z,\kappa',\kappa)$. 
	\item[\textbf{(ii)}] For any partition $\tilde{Z}$ containing the cluster $C$ and satisfying $|\tilde{Z}|\ge |Z|$ we have
	\begin{equation}
		K\left(\tilde{Z},\kappa(|\tilde{Z}|)\right) \cap \mathcal{M}\left(Z,\kappa'(|Z|),\kappa(|Z|)\right)=\emptyset.
	\end{equation}
\end{enumerate}
\end{lem}
\begin{proof}[Proof of Lemma \ref{lem: Intersection cones}]
Due to the assumption $C\not\subset C_l$ for all $1\leq l \leq p$ we find clusters $C_k$ and $C_n$ in the partition $Z$ with $C_k\cap C\not = \emptyset$ and $ C_n\cap C\not = \emptyset$. Let $Z'$ be the partition which is created from $Z$ by considering $C_k\cup C_n$ as a single cluster. Then $|Z'|=|Z|-1$. Let $x\in \mathcal{M}(Z,\kappa',\kappa)$, then by definition $x\in K(Z,\kappa(|Z|))$ and $x\notin K(Z',\kappa'(|Z'|))$. Therefore, we have
\begin{equation}
	\left(1+\left(\kappa'(|Z'|)\right)^2\right)|P_c(Z')x|^2_m \le |x|_m^2\le \left(1+(\kappa(|Z|))^2\right)|P_c(Z)x|_m^2.
\end{equation}
Subtracting the term $\left(1+\left(\kappa'(|Z'|)\right)^2\right)|P_c(Z')x|^2_m$ implies
\begin{align}
0&\leq \left(1+(\kappa(|Z|))^2\right)|P_c(Z)x|_m^2 - \left(1+\left(\kappa'(|Z'|)\right)^2\right)|P_c(Z')x|^2_m
\\&=\left(1+\left(\kappa'(|Z'|)\right)^2\right)\left(|P_c(Z)x|_m^2-|P_c(Z')x|_m^2\right)-\left(\left(\kappa'(|Z'|\right)^2-\left(\kappa(|Z|)\right)^2\right)|P_c(Z)x|_m^2 \notag
\end{align}
and therefore
\begin{equation}
\left(\left(\kappa'(|Z'|\right)^2-\left(\kappa(|Z|)\right)^2\right)|P_c(Z)x|_m^2\le \left(1+\left(\kappa'(|Z'|)\right)^2\right)\left(|P_c(Z)x|_m^2-|P_c(Z')x|_m^2\right).
\end{equation} 
Dividing by $\left(1+\left(\kappa'(|Z'|)\right)^2\right)$ yields
\begin{equation}\label{B20}
\frac{\left(\kappa'(|Z'|\right)^2-\left(\kappa(|Z|)\right)^2}{1+\left(\kappa'(|Z'|)\right)^2}|P_c(Z)x|_m^2 \leq |P_c(Z)x|_m^2-|P_c(Z')x|_m^2.
\end{equation}
According to Corollary \ref{Corollary B1} we have 
\begin{equation}\label{apply cor}
|P_c(Z)x|_m^2-|P_c(Z')x|_m^2 \leq \frac{M[C_k]M[C_n]}{M[C_k]+M[C_n]} \left|x_c[C_k]-x_c[C_n]\right|^2.
\end{equation}
Hence, by \eqref{B20} and \eqref{apply cor} we obtain
\begin{align}\label{eq: estimate Pc(Z)}
	\frac{\left(\kappa'(|Z'|\right)^2-\left(\kappa(|Z|)\right)^2}{1+\left(\kappa'(|Z'|)\right)^2}|P_c(Z)x|_m^2 &\le \frac{M[C_k]M[C_n]}{M[C_k]+M[C_n]} \left|x_c[C_k]-x_c[C_n]\right|^2 \notag
	\\&\le \frac{M^2}{2m}\left|x_c[C_k]-x_c[C_n]\right|^2.
\end{align}
Let us further estimate the r.h.s. of \eqref{eq: estimate Pc(Z)}. 
\\Applying Lemma \ref{Lemma B1} \textbf{(i)} for $x=y$ and $x$ replaced by $P_c(Z)x$ yields
\begin{equation}\label{eq: distance between center of mass}
\left|P_0[C]P_c(Z)x  \right|^2_m =	\frac{1}{2M[C]}\sum_{i,j\in C}m_im_j \left|\left(P_c(Z)x\right)_i-\left(P_c(Z)x\right)_j\right|^2.
	\end{equation}
Recall that $C$ has at least two particles that belong to different clusters in the partition $Z$. Hence, by choosing $s\in C\cap C_k$ and $t\in C\cap C_n$ together with \eqref{coordinate centre of mass of cluster} we get
\begin{align}
\frac{1}{2M[C]}\sum_{i,j\in C}m_im_j \left|\left(P_c(Z)x\right)_i-\left(P_c(Z)x\right)_j\right|^2 &\geq \frac{m_sm_t}{M[C]}\left|\left(P_c(Z)x\right)_s-\left(P_c(Z)x\right)_t\right|^2
\\&= \frac{m_sm_t}{M[C]}\left|x_c[C_k]-x_c[C_n]\right|^2
\\&\geq \frac{m^2}{M}\left|x_c[C_k]-x_c[C_n]\right|^2.
\end{align}
This, together with \eqref{eq: distance between center of mass} implies
\begin{equation}\label{eq: distance center of masses of clusters}
	\left|x_c[C_k]-x_c[C_n]\right|^2 \le \frac{M}{m^2}\left|P_0[C]P_c(Z)x\right|_m^2.
\end{equation}
Hence, by \eqref{eq: estimate Pc(Z)} we obtain
\begin{equation}\label{B last}
\frac{\left(\kappa'(|Z'|\right)^2-\left(\kappa(|Z|)\right)^2}{1+\left(\kappa'(|Z'|)\right)^2}|P_c(Z)x|_m^2\le \frac{M^2}{2m}\left|x_c[C_k]-x_c[C_n]\right|^2 \leq \frac{M^3}{2m^3}\left|P_0[C]P_c(Z)x\right|_m^2.
\end{equation}
Since $P_0[C]P_c(Z)=P_0[C]-P_0[C]P_0(Z)$, we can estimate
\begin{equation}
\left|P_0[C]P_c(Z)x\right|_m^2\le 2\left|P_0[C]x\right|_m^2+2\left| P_0[C]P_0(Z)x\right|_m^2\le 2\left|P_0[C]x\right|_m^2+2\left| P_0(Z)x\right|_m^2.
\end{equation}
This implies that for $x\in M(Z,\kappa',\kappa)$ we have
\begin{equation}
	\left|P_0[C]P_c(Z)x\right|_m^2\le 2 \left|P_0[C]x\right|_m^2+2\left(\kappa(|Z|)\right)^2\left| P_c(Z)x\right|_m^2.
\end{equation}
By combining this with \eqref{B last} we arrive at
\begin{equation}
\frac{m^3}{M^3}\frac{\left(\kappa'(|Z'|\right)^2-\left(\kappa(|Z|)\right)^2}{1+\left(\kappa'(|Z'|)\right)^2}|P_c(Z)x|_m^2\le   \left|P_0[C]x\right|_m^2+\left(\kappa(|Z|)\right)^2\left| P_c(Z)x\right|_m^2.
\end{equation}
Hence, applying \eqref{eq: condtition kappa} yields
\begin{equation}
\left|P_0[C]x\right|_m^2 \geq \left(\frac{m^3}{M^3}\frac{\left(\kappa'(|Z'|\right)^2-\left(\kappa(|Z|)\right)^2}{1+\left(\kappa'(|Z'|)\right)^2}- \left(\kappa(|Z|)\right)^2\right) |P_c(Z)x|_m^2 \geq \left( d(|Z|)\right)^2|P_c(Z)x|_m^2.
\end{equation}
This proves Lemma \ref{lem: Intersection cones} \textbf{(i)}. 
\\We turn to the proof of assertion \textbf{(ii)}. Assume that $K\left(\tilde{Z},\kappa(|\tilde{Z}|)\right)\cap \mathcal{M}(Z,\kappa',\kappa)\not = \emptyset$, i.e. there exists $x\in K\left(\tilde{Z},\kappa(|\tilde{Z}|)\right)\cap \mathcal{M}(Z,\kappa',\kappa)$. Since $\kappa(|Z|)\ge \kappa(|\tilde{Z}|)$, we can estimate
\begin{align}\label{eq: estimate P_0(C)}
	\left|P_0[C]x\right|_m^2&\le \left|P_0(\tilde{Z})x\right|_m^2 \le \left(\kappa(|\tilde{Z}|)\right)^2\left|P_c(\tilde{Z})x \right|_m^2\le \left(\kappa(|\tilde{Z}|)\right)^2 |x|_m^2
	\\&\le \left(\kappa(|\tilde{Z}|)\right)^2\left(1+\left(\kappa(|Z|)\right)^2\right)|P_c(Z)x|_m^2 \le \left(\kappa(|Z|)\right)^2\left(1+\left(\kappa(|Z|)\right)^2\right)|P_c(Z)x|_m^2.\notag
\end{align}
Combining inequality \eqref{eq: estimate P_0(C)} with statement \textbf{(i)} yields
\begin{equation}
	\left(d(|Z|)\right)^2\left|P_c(Z)x\right|_m^2  \le \left|P_0[C]x\right|_m^2  \le  \left(\kappa(|Z|)\right)^2\left(1+\left(\kappa(|Z|)\right)^2\right)|P_c(Z)x|_m^2.
\end{equation}
For $|P_c(Z)x|_m\not = 0$ this is a contradiction to the definition of $\kappa(|Z|)$. If $|P_c(Z)x|_m = 0$, then \eqref{eq: estimate P_0(C)} implies $x=0$, which is not possible, since by definition $0\not \in \mathcal{M}(Z,\kappa',\kappa)$. Hence, we conclude $K\left(\tilde{Z},\kappa(|\tilde{Z}|)\right)\cap \mathcal{M}(Z,\kappa',\kappa)=\emptyset$. This completes the proof of Lemma \ref{lem: Intersection cones}.
\end{proof}
Now the proof of Theorem \ref{Lemma Intersection cones} follows from Lemma \ref{lem: Intersection cones} \textbf{(ii)}. Indeed, consider two partitions $\hat{Z} =(\hat{C_1},\ldots,\hat{C_l})$ and $\tilde{Z} =(\tilde{C_1},\ldots,\tilde{C_l})$ with $\hat{Z}\not = \tilde{Z}$ and $|\hat{Z}|=|\tilde{Z}|=l$. Then there exists a cluster $\hat{C_i}$ in $\hat{Z}$ with $\hat{C_i} \not \subset \tilde{C_j}$ for all $j=1,\ldots l$. Applying Lemma \ref{lem: Intersection cones} \textbf{(ii)} completes the proof.
\end{proof}
\end{appendix}

\section*{Acknowledgements}
Simon Barth and Andreas Bitter are deeply grateful to Timo Weidl for his support and for the useful discussions. Their work was supported by the Deutsche Forschungsgemeinschaft (DFG) through the Research Training Group 1838: Spectral Theory and Quantum Systems. 
\\Semjon Vugalter gratefully acknowledges the funding by the Deutsche Forschungsgemeinschaft (DFG), German Research Foundation Project ID 258734477 - SFB 1173. Semjon Vugalter is grateful to the University of Toulon for the hospitality during his stay there.
\\ The authors thank the Mittag-Leffler Institute, where a part of the work was done during the semester program \textit{Spectral Methods in Mathematical Physics}.

\bibliography{Bib}{}
\bibliographystyle{abbrv}
\end{document}